\theoremstyle{plain}
\newtheorem{theorem}{Theorem}[section]
\newtheorem{lemma}[theorem]{Lemma}
\newtheorem{proposition}[theorem]{Proposition}
\theoremstyle{remark}
\newtheorem{remark}[theorem]{Remark}
\numberwithin{equation}{section}
\def\eps{\varepsilon}
\newcommand\1{{\ensuremath {\mathds 1} }}
\def\NN {\mathbb{N}}
\def\R {\mathbb{R}}
\def\C {\mathbb{C}}
\def\Q {\mathbb{Q}}
\def\Z {\mathbb{Z}}
\def\S {\mathbb{S}}
\def\N {\EuScript{N}}
\newcommand{\bkappa}{{\boldsymbol{\kappa}}}
\newcommand{\ba}{\mathbf{a}}
\newcommand{\br}{\mathbf{r}}
\newcommand{\bx}{\mathbf{x}}
\newcommand{\by}{\mathbf{y}}
\newcommand{\bz}{\mathbf{z}}
\newcommand{\bA}{\mathbf{A}}
\newcommand{\bJ}{\mathbf{J}}
\newcommand{\bX}{\mathbf{X}}
\newcommand{\cD}{\mathscr{D}}
\newcommand{\sx}{\textup{x}}
\newcommand{\sz}{\textup{z}}
\newcommand{\domD}[1]{\cD^N_{#1}} 
\newcommand{\dist}{\mathrm{dist}}
\newcommand{\nablap}{\nabla^{\perp}}
\newcommand{\eSR}{e_{\mathrm{SR}}}
\newcommand{\eLR}{e_{\mathrm{LR}}}
\DeclareMathOperator{\curl}{\mathrm{curl}}
\DeclareMathOperator{\supp}{supp}
\newcommand{\sym}{\mathrm{sym}}
\newcommand{\asym}{\mathrm{asym}}
\newcommand{\loc}{\mathrm{loc}}
\newcommand{\vphi}{\varphi}
\renewcommand{\phi}{\varphi}
\newcommand{\bDelta}{{\mbox{$\triangle$}\hspace{-8.0pt}\scalebox{0.8}{$\triangle$}}}
\newcommand{\limplus}{{\mathchoice{\vcenter{\hbox{$\scriptstyle +$}}}
  {\vcenter{\hbox{$\scriptstyle +$}}}
  {\vcenter{\hbox{$\scriptscriptstyle +$}}}
  {\vcenter{\hbox{$\scriptscriptstyle +$}}}
}}
\newcommand{\limminus}{{\mathchoice{\vcenter{\hbox{$\scriptstyle -$}}}
  {\vcenter{\hbox{$\scriptstyle -$}}}
  {\vcenter{\hbox{$\scriptscriptstyle -$}}}
  {\vcenter{\hbox{$\scriptscriptstyle -$}}}
}}
\newcommand{\limpm}{{\mathchoice{\vcenter{\hbox{$\scriptstyle \pm$}}}
  {\vcenter{\hbox{$\scriptstyle \pm$}}}
  {\vcenter{\hbox{$\scriptscriptstyle \pm$}}}
  {\vcenter{\hbox{$\scriptscriptstyle \pm$}}}
}}
\title[Exclusion bounds for extended anyons]{Exclusion bounds for extended anyons} 
\author[S. Larson]{Simon LARSON}
\author[D. Lundholm]{Douglas LUNDHOLM}
\address{Department of Mathematics, KTH Royal Institute of Technology, SE-100 44 Stockholm, Sweden}
\email{simla@math.kth.se, dogge@math.kth.se}
\subjclass[2010]{81V70, 81Q10, 35P15, 46N50}
\begin{document}

\begin{abstract}
We introduce a rigorous approach to the many-body spectral theory 
of extended anyons, i.e.\ quantum particles confined to 
two dimensions that interact via attached magnetic fluxes 
of finite extent. Our main results are many-body magnetic Hardy 
inequalities and local exclusion principles for these particles, 
leading to estimates for the ground-state energy of the anyon gas 
over the full range of the parameters. This brings out further 
non-trivial aspects in the dependence on the anyonic statistics 
parameter, and also gives improvements in the ideal (non-extended) case.
\end{abstract}

\maketitle
\setcounter{tocdepth}{2}
\tableofcontents

\section{Introduction}\label{sec:intro}

In many-body quantum mechanics, the notion of particle 
indistinguishability and statistics plays a fundamental role. 
Namely, particles of the same kind are typically logically identical 
and fall into two classes: bosons or fermions, giving rise to such diverse
phenomena as Bose--Einstein condensation and coherent propagation of light 
in the former case, and the Fermi sea with its implications for conduction bands, 
atomic structure, etc., in the latter. However, while these are the only two 
options for fundamental particles that propagate in three-dimensional space, 
for quantum systems confined to lower dimensions there is a possibility
for effective particles (quasiparticles) escaping the usual boson/fermion dichotomy.
We shall here consider the two-dimensional case where the quantum state of a system of 
$N$ particles at positions $\bx_j \in \R^2$ may be described by a 
square-integrable, normalized, complex wave function\footnote{Here we restrict to the simplest
case of $\C$-valued wave functions corresponding to \emph{abelian} anyons, 
while $\C^n$-valued, possibly \emph{non-abelian}, anyons are also possible 
\cite{Froehlich-90,Nayak-etal-08}.}
$\Psi\colon \R^{2N} \to \C$, where  
$|\Psi(\sx)|^2$ is interpreted as
the probability density of 
finding the particles at positions $\sx = (\bx_1, \ldots, \bx_N)$.
If the particles are indistinguishable the density needs to be symmetric
under permutations of the particle labels:
\begin{equation}\label{eq:particleExchangeDensity}
	|\Psi(\bx_1, \ldots, \bx_j, \ldots, \bx_k, \ldots, \bx_N)|^2 
	= |\Psi(\bx_1, \ldots, \bx_k, \ldots, \bx_j, \ldots, \bx_N)|^2, 
	\quad j \neq k.
\end{equation}
However, the exact phase of $\Psi$ is not an observable quantity and therefore 
\eqref{eq:particleExchangeDensity} leaves room for an exchange phase:
\begin{equation}\label{eq:particleExchangePhase}
	\Psi(\bx_1, \ldots, \bx_j, \ldots, \bx_k, \ldots, \bx_N)
	= e^{i\alpha\pi} \Psi(\bx_1, \ldots, \bx_k, \ldots, \bx_j, \ldots, \bx_N), 
	\quad j \neq k, 
\end{equation}
where $\alpha \in \R$ ($2\Z$-periodic) is called the statistics parameter.
If $\alpha=0$ the particles are called bosons (symmetric $\Psi$), 
and if $\alpha=1$ they are fermions (antisymmetric $\Psi$). 
Because of the antisymmetry,
fermions obey Pauli's exclusion principle~\cite{Pauli-47} leading to Fermi--Dirac statistics,
while bosons do not, leading to Bose--Einstein statistics.
These are indeed the familiar possibilities found in introductory quantum mechanics textbooks,
however, upon investigating the argument more carefully
one realizes that one needs to be more precise with what is meant with the exchange
$j \leftrightarrow k$ in~\eqref{eq:particleExchangeDensity}-\eqref{eq:particleExchangePhase}.
Namely, the exchange should in fact be viewed as a continuous loop in the 
manifold of positions $\sx$ of $N$
identical particles, and then topology plays a crucial role. 
Thus we \emph{define}~\eqref{eq:particleExchangePhase}
to mean a continuous simple exchange of a single pair of particles
(in two dimensions counterclockwise and with no other particles enclosed; 
furthermore the exchange phase 
can be shown to be independent of which pair of particles is considered).
In three dimensions and higher, the direction of the exchange does not matter and
a double exchange is topologically the same as no exchange; therefore
the group of continuous exchanges reduces to the group of permutations 
and one ends up with the usual bosons or fermions.
In two spatial dimensions, on the other hand, the exchange group is the braid group and 
it then turns out that
\emph{any} phase $e^{i\alpha\pi} \in \mathrm{U}(1)$ in~\eqref{eq:particleExchangePhase} 
is allowed 
\cite{LeiMyr-77, GolMenSha-81, Wilczek-82a, Wilczek-82b, Wu-84} 
(see also~\cite[p.~386]{Souriau-70}).
The corresponding particles are therefore called anyons~\cite{Wilczek-82b}.
We refer to~\cite{Froehlich-90, Khare-05, Lerda-92, Myrheim-99, Nayak-etal-08, Ouvry-07, Stern-08, Wilczek-90} 
for extensive reviews on this topic.

The relative change of phase of the wave function $\Psi$ with respect to changes of the 
coordinates may be
geometrically understood as due to the curvature of a corresponding 
complex line bundle of which $\Psi$ is a section.
This is naturally described by a magnetic field, 
and in the case of anyons one may indeed
model the above statistics phase as induced by a magnetic field of Aharonov--Bohm type.
Namely, one could start with $\Psi \in L^2_{\sym}((\R^2)^N)$
(or $\Psi \in L^2_{\asym}((\R^2)^N)$) being bosonic (fermionic)
and then attach magnetic fluxes to the particles so that their winding
around each other gives rise to the correct phase \eqref{eq:particleExchangePhase}.
This is commonly called the magnetic gauge picture for anyons, 
and it is actually in this form that they may most realistically arise in a real 
physical system. The most promising such realization is in the context of
the fractional quantum Hall effect (FQHE)~\cite{Girvin-04, Goerbig-09, Jain-07, Laughlin-99, StoTsuGos-99}, 
a strongly correlated planar electron
(or bosonic atom~\cite{BloDalZwe-08, Cooper-08, MorFed-07, RonRizDal-11, Viefers-08})
system in a strong transverse external magnetic field, 
where particles have the freedom to bind magnetic flux and thereby become 
anyons~\cite{AroSchWil-84, Laughlin-99, LunRou-16}. 
However, in this scenario the flux typically has some extent determined
by the experimental conditions, and one therefore talks about \emph{extended} 
anyons~\cite{ChoLeeLee-92, LunRou-15, Mashkevich-96, Trugenberger-92b}
as opposed to the purely theoretical (but conceptually attractive) ideal\footnote{By 
ideal in this context we mean that the only interaction is statistical and 
independent of any energy, momentum or length scale (cf.~\cite[p.\ 146]{Khare-05}).}
anyons which are purely pointlike.
Denoting the size of the flux, say its radius if disk-shaped, 
by $R \ge 0$ we can thus talk about $R$-extended anyons, and one may also
introduce a dimensionless parameter $\bar\gamma := R\bar\varrho^{1/2}$
to describe the state of the system, 
where $\bar\varrho$ denotes the average density of the particles.
The parameter $\bar\gamma$ 
is the ratio of the magnetic dimension to the average interparticle distance and
has therefore been called the magnetic filling ratio
in~\cite{Trugenberger-92b, Trugenberger-92}.

Our interest in this paper is to study a free gas of such extended anyons, 
i.e.\ ignoring any additional interactions as a simplifying first step, and
focusing on the most basic aspect: its ground-state energy.
We consider this in the thermodynamic limit 
(cf.~\cite{CatBriLio-98,LieSeiSolYng-05}), 
that is the limit as both the number
of particles $N$ and the volume (area) of the system $V$
tends to infinity while keeping the density $\bar\varrho = N/V$ fixed.
In the ideal non-interacting case, 
the quantum gas consisting of a large number of bosons or fermions 
in a large volume at fixed density
has been completely understood since the early days of 
quantum mechanics and is nowadays often 
given as a textbook exercise,
as it only amounts to adding up eigenvalues of a one-body operator.
However, the purely anyonic case $\alpha \in (0, 1)$ 
still remains an unsolved problem
after almost four decades, owing to the fact that  
the statistical many-body interaction cannot be completely removed
in favor of a one-body description as for bosons and fermions.
The simplest case of two anyons can be solved 
exactly~\cite{LeiMyr-77, Wilczek-82b, AroSchWilZee-85}, 
that of three and four anyons has been studied 
numerically~\cite{SpoVerZah-91, MurLawBraBha-91, SpoVerZah-92}, 
and beyond that various approximative descriptions have been 
proposed~\cite{ChenWilWitHal-89, ChiSen-92, FetHanLau-89, IenLec-92, SenChi-92, Trugenberger-92b, Trugenberger-92, WenZee-90, Westerberg-93, Wilczek-90}.
One of these is called average-field theory (cf.\ mean-field theory~\cite{Wilczek-90, LunRou-15})
whereby the magnetic flux of the anyons is seen as sufficiently spread out
(in other words $\bar\gamma$ should be sufficiently large) 
so that the particles are effectively moving in a
(locally) uniform magnetic field, say $B(\bx) \sim 2\pi\alpha \varrho(\bx)$
where $2\pi\alpha$ is the flux of each anyon and $\varrho(\bx)$ the local density, 
and therefore have a definite magnetic ground-state energy given by 
that of the lowest Landau level, 
hence proportional to 
$|B| \sim 2\pi|\alpha|\varrho$.
In other words the energy per particle in this approximation 
is given by
\begin{equation} \label{eq:average-field}
	2\pi|\alpha| \bar\varrho
\end{equation}
in the case of the homogeneous gas. 
Another approximation has been to assume that the gas is so dilute that
only two-particle interactions are relevant~\cite{AroSchWilZee-85, Minor-93}.

Except for a small number of results
concerning the mathematical formulation of the many-anyon 
problem~\cite{BakCanMulSun-93, DelFigTet-97, DoeGroHen-99, LigMin-95}, 
there has not been much progress on the rigorous mathematical 
side until recently. 
In~\cite{LunSol-13a} the case of ideal anyons was considered using a local approach
involving a relative magnetic Hardy inequality and a local exclusion principle, 
leading to a first set of non-trivial rigorous bounds for the ground-state energy
of the ideal anyon gas. These bounds, which will be outlined below, 
have an interesting non-trivial dependence
on the statistics parameter $\alpha$ in that they depend, 
in the many-body limit, solely on the quantity
\begin{equation} \label{eq:C_alpha}
	\alpha_* := \inf\limits_{p, q \in \Z} \bigl|(2p+1)\alpha - 2q\bigr|, 
\end{equation}
which is zero unless $\alpha$ is an odd-numerator fraction 
$\alpha = \mu/\nu \in \Q$ (reduced, with $\nu \ge 1$) 
and in which case $\alpha_* = 1/\nu$.
In~\cite{LunSol-14} a fundamental question concerning operator domains 
for ideal anyons was settled and applications of the local energy bounds 
to interacting systems were considered.
Also, the validity of an average-field approximation for the case of 
almost-bosonic ($\alpha \to 0$) $R$-extended anyons was proved in~\cite{LunRou-15} 
(see also~\cite{CorLunRou-16}).

Here we shall consider the homogeneous $R$-extended anyon gas 
in the thermodynamic limit and build on 
the local approach of~\cite{LunSol-13a} to prove a lower bound 
for the ground-state energy per particle 
with statistics parameter $\alpha \in \R\setminus\{0\}$ and magnetic filling ratio 
$\bar\gamma = R\bar\varrho^{1/2} \ge 0$
of the form
$$
	C e(\alpha, \bar\gamma) \bar\varrho, 
$$
where $C>0$ is a universal constant and
(see Figure~\ref{fig:EnergyVsGammaPlots} below for intermediate values)
$$
	e(\alpha, \gamma) \sim \left\{ \begin{array}{ll}
		\frac{2\pi}{|{\ln \gamma}|} + \pi(j_{\alpha_*}')^2 \ge 2\pi \alpha_*, &\quad \gamma \to 0, \\[6pt]
		2\pi|\alpha|, &\quad \gamma \gtrsim 1. 
		\end{array}\right.
$$
Here $j_\nu'$ denotes the first positive zero of the derivative of the 
Bessel function $J_\nu$ (and~$j_0':=0$).
This bound effectively interpolates between a dilute regime involving~
\eqref{eq:C_alpha} and a high-density regime with a dependence on $\alpha$
matching that of average-field theory~\eqref{eq:average-field}.
Also in the case of even-numerator $\alpha$, where $\alpha_*=0$, the bound
is strictly positive but vanishes in the dilute limit in a way similar to 
that of a dilute Bose gas in two dimensions~\cite{Schick-71, LieYng-01}.
This may however not be so surprising in the case that $\alpha \in 2\Z$
(composite bosons; cf.~\cite{Jain-07}), 
considering the periodicity in the statistics parameter for ideal anyons.

\subsection{The extended anyons model}\label{sec:intro-prelims}

In order to state our results precisely we need to introduce some 
notation that will be used throughout the paper.

We take as our concrete model for $R$-extended anyons a set of $N$ identical 
bosons, to each of which has been attached a magnetic field in the
shape of a disk with radius $R$ and total flux $2\pi\alpha$, 
and which is felt by all the other particles 
(cf.~\cite{ChoLeeLee-92, LunRou-15, LunRou-16, Mashkevich-96, Trugenberger-92b}).
Such flux centered at the origin can be given explicitly  
by the magnetic vector potential $\alpha\bA_0$ with
$$
	\bA_0(\bx) := \frac{(\bx-\, \cdot\, )^\perp}{|\bx-\, \cdot\, |^2} * \frac{\1_{B_R(0)}}{\pi R^2}
	= \frac{\bx^\perp}{|\bx|_R^2}, 
	\qquad
	\curl \bA_0(\bx) = 2\pi\frac{\1_{B_R(0)}}{\pi R^2}(\bx).
$$
Here $(x, y)^\perp := (-y, x)$, i.e.\ a $\pi/2$ counterclockwise rotation, 
$B_R(\bx)$ denotes the open ball/disk of radius $R$ centered at $\bx \in \R^2$, and
$$
	|\bx|_R := \max\{|\bx|, R\},
$$
which can be interpreted as a regularized distance.
Starting from a conventional magnetic Hamiltonian formulation, 
the (non-relativistic) free kinetic energy operator is then
\begin{equation} \label{eq:kineticEnergyOp}
	\hat{T}_\alpha := \sum_{j=1}^N D_j^2, 
\end{equation}
where we have normalized physical units so that $\hbar^2/(2m) = 1$ and
the magnetically coupled momentum operator for each particle $j$ is given by
$$
	D_j := -i\nabla_{\bx_j} + \alpha\bA_j(\bx_j), 
$$
where
\begin{equation} \label{eq:many-anyon vec pot}
	\bA_j(\bx) 
	:= \frac{(\bx-\, \cdot\, )^\perp}{|\bx-\, \cdot\, |^2} * \sum_{k \neq j} \frac{\1_{B_R(\bx_k)}}{\pi R^2}
	= \sum_{k \neq j} \frac{(\bx-\bx_k)^\perp}{|\bx-\bx_k|_R^2}, 
\end{equation}
corresponding to the total magnetic field felt by the particle $\bx_j$
\begin{equation} \label{eq:many-anyon mag field}
	\curl \alpha\bA_j 
	= 2\pi\alpha \sum_{k \neq j} \frac{\1_{B_R(\bx_k)}}{\pi R^2}
	\ \ \stackrel{R \to 0}{\longrightarrow} \ \ 
	2\pi\alpha \sum_{k \neq j} \delta_{\bx_k}.
\end{equation}
We note that this form for the magnetic interaction is not only convenient 
but also realistic from the perspective of the FQHE~\cite{LunRou-16}.
Also note that we allow for any $\alpha \in \R$ here.

The operator \eqref{eq:kineticEnergyOp} acts on the bosonic Hilbert space
$L^2_\sym(\R^{2N})$ as an unbounded operator.
Let us denote by $\domD{\alpha, R}$ the natural 
(minimal as well as maximal~\cite[Theorem~5]{LunSol-14}) domain of 
the magnetic gradient
$$
	\begin{array}{rcl}
		D\colon L^2_\sym(\R^{2N};\C) &\to& L^2(\R^{2N};\C^{N}) \\
			\Psi &\mapsto& D\Psi 
				= (-i\nabla + \alpha\bA)\Psi 
				= \bigl( (-i\nabla_j + \alpha\bA_j)\Psi \bigr)_{j=1}^N, 
	\end{array}
$$
then this is also the natural form domain of \eqref{eq:kineticEnergyOp}, 
and $\hat{T}_\alpha := D^*D$.
In the case $R > 0$ (as well as for $\alpha=0$)
we have $\domD{\alpha, R} = H^1_{\sym}(\R^{2N})$, 
since $\bA$ is then a bounded perturbation of $-i\nabla$.
On the other hand, if $R = 0$ 
then $\bA$ is singular and
these spaces are typically different
(see~\cite[Section~2.2]{LunSol-14}).
For $R = 0$ and $\alpha \in 2\Z$ (respectively $\alpha \in 2\Z+1$), however, 
$\domD{\alpha, 0}$ is gauge-equivalent to 
$\domD{0, 0} = H^1_{\sym}(\R^{2N})$ 
(respectively $\domD{1, 0} = U^{-1} H^1_{\asym}(\R^{2N})$):
\begin{equation} \label{eq:gauge-equivalence}
	D_{(\alpha+2n)} = U^{-2n} D_{(\alpha)} U^{2n}, 
	\qquad
	\domD{\alpha+2n, 0} = U^{-2n} \domD{\alpha, 0}, 
	\qquad n \in \Z, 
\end{equation}
where $U$ is the isometry (singular gauge transformation)
$$
	U\colon L^2_{\sym/\asym} \to L^2_{\asym/\sym}, 
	\qquad (U\Psi)(\sx) := \prod_{1 \le j<k \le N} \frac{z_j-z_k}{|z_j-z_k|} \Psi(\sx),
$$
with $z_j$ the complex coordinate representatives of $\bx_j$ given by identifying $\R^2$ with $\C$.
In other words, for ideal anyons the spectrum of the operator $\hat{T}_\alpha$ 
is 2-periodic in $\alpha$, 
however we will find that this is not the case for extended anyons.

We define the one-body density associated with any normalized  
state $\Psi \in L^2(\R^{2N})$ by
$$
	\varrho_\Psi(\bx) := \sum_{j=1}^N \int_{\R^{2(N-1)}} |\Psi(\bx_1, \ldots, 
	\bx_{j-1}, \bx, \bx_{j+1}, \ldots, \bx_N)|^2 \prod_{k \neq j}d\bx_k, 
$$
with $\int_\Omega \varrho_\Psi$ the expected number of particles to be found on $\Omega \subseteq \R^2$,
while $\bar\varrho := N/|Q_0|$ denotes the average density if confined to a 
domain (typically a square) $Q_0 \subseteq \R^2$, 
i.e.\ for states $\Psi$ with $\supp \Psi \subseteq Q_0^N$. 
Furthermore, with
\begin{equation} \label{eq:diagonals}
	\bDelta := 
	\bigl\{ \sx \in (\R^2)^N : \text{$\exists\ j \neq k$ s.t. $\bx_j = \bx_k$} \bigr\}
\end{equation}
the fat diagonal of the configuration space $(\R^2)^N$, 
we note that we may use the density of
$C^\infty_c(\R^{2N} \setminus \bDelta) \cap L^2_\sym(\R^{2N})$ 
in the domain $\domD{\alpha, R}$ (again, see~\cite[Theorem~5]{LunSol-14}).

\subsection{Main bounds}\label{sec:intro-bounds}

We are now ready to state our main results for $R$-extended anyons.
For the reader's convenience we outline and compare to the previously studied ideal case,
which is also improved in several aspects in this work. 

Our study of the homogeneous anyon gas relies on two key insights which were
brought together in~\cite{LunSol-13a} for ideal anyons.
On the one hand, 
we follow an idea originally used 
by Dyson and Lenard 
in their proof of the stability of matter for fermionic Coulomb systems~\cite{DysLen-67}
(see also~\cite{Dyson-68, Lenard-73}).
They realized that the Pauli exclusion principle 
is strong enough (for many purposes, including the stability of matter)
acting only between pairs or small numbers of particles. 
It is in fact sufficient that the local kinetic energy is strictly positive for two particles
and that it grows at least linearly with the number of particles,
in contrast to the true ground-state energy for fermions which grows with $N$ according to the
Weyl asymptotics for the sum of Laplacian eigenvalues, i.e.\ as $N^{1+2/d}$ 
in dimension $d$.
We refer to such a bound as a local exclusion principle, and
the method has recently been generalized to interacting 
bosonic gases with the Pauli principle replaced by repulsive 
interactions~\cite{LunSol-13a, LunSol-13b, LunSol-14, LunPorSol-15, LunNamPor-16},
and to point-interacting fermionic gases~\cite{FraSei-12}. 
Essentially the idea is based on splitting the full domain to which the gas is 
confined into subdomains whose size is chosen so that the expected number of 
particles in each domain is not too large or, for that matter, too small. 
By estimating the local contribution to the energy from each subdomain one can 
obtain bounds for the total energy of the gas which are of the correct order. 

The second key idea that we will use is based on the observation that a pair of fermions, 
due to their relative antisymmetry,
experience an effective repulsion. This may be concretized in the following 
many-particle Hardy inequality for fermions~\cite[Theorem~2.8]{HofLapTid-08}:
\begin{equation} \label{eq:fermionic-Hardy}
	\sum_{j=1}^N \int_{\R^{dN}} |\nabla_j \Psi|^2 \,d\sx 
	\ \ge \ \frac{d^2}{N} \sum_{1 \le j<k \le N} \int_{\R^{dN}} 
		\frac{|\Psi(\sx)|^2}{|\bx_j - \bx_k|^2} \,d\sx,
\end{equation}
valid for any $N$-body state $\Psi \in H^1_\asym(\R^{dN})$ in any dimension $d \ge 1$.
Antisymmetry is in fact crucial here, as the inequality is not valid for bosons 
(the corresponding optimal Hardy constant vanishes in two dimensions).
A local version of \eqref{eq:fermionic-Hardy}, given below,
was obtained in~\cite{LunSol-13a}
for ideal anyons, i.e.\ with $d=2$ and with the right-hand side remaining linear in $N$, 
thus providing a local exclusion principle for anyons.
It was shown that this inequality may be combined with the Dyson--Lenard approach
to yield global bounds for the energy of the gas 
depending on the statistics parameter.

We start with an observation which is only helpful in the sufficiently extended case.
Namely,
for ideal anyons the singular magnetic potential $\bA$ effectively excludes the
diagonals $\bDelta$ from the configuration space, 
much like a strong repulsive point interaction. 
For $R$-extended anyons we have instead the following effective 
repulsive short-range interaction of soft-disk type:

\begin{lemma}[\textbf{Short-range magnetic interaction}]\label{lem:short-range}\mbox{}\\
	For any $\alpha \in \R$, $R > 0$, $N \ge 1$, 
	and 
	$\Psi \in \domD{\alpha, R} = H^1_{\sym}(\R^{2N})$
	we have that
	\begin{equation} \label{eq:short-range}
		\sum_{j=1}^N \int_{\R^{2N}} |D_j\Psi|^2\, d\sx
		\ \ge \ 2\pi|\alpha| \sum_{j \neq k} \int_{\R^{2N}} 
			\frac{\1_{B_R(0)}}{\pi R^2}(\bx_j - \bx_k)
			\, |\Psi|^2 \, d\sx.
	\end{equation}
\end{lemma}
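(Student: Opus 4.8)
The plan is to reduce the $N$-body estimate to a one-body magnetic inequality in two dimensions, applied in the variable $\bx_j$ with all other coordinates frozen, and then to integrate and sum over $j$.

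The one-body ingredient is the following: if $\ba \in W^{1,\infty}_{\mathrm{loc}}(\R^2;\R^2)$ has $\curl \ba \in L^\infty(\R^2)$ of a definite sign, then
$$
	\int_{\R^2} |(-i\nabla + \ba)u|^2 \, d\bx \ \ge \ \int_{\R^2} |\curl \ba|\, |u|^2 \, d\bx
	\qquad \text{for all } u \in H^1(\R^2).
$$
I would prove this by the usual factorization: writing $D_m := -i\partial_m + a_m$, for $u \in C_c^\infty(\R^2)$ one has the identity $\int |(-i\nabla+\ba)u|^2 = \int |(D_1 \mp iD_2)u|^2 \pm \int (\curl \ba)|u|^2$, the cross term being computed by an integration by parts which is legitimate since $\ba$ is Lipschitz and $\curl\ba$ is bounded. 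Dropping the first term on the right and choosing the sign according to that of $\curl\ba$ gives the inequality for $u \in C_c^\infty$, and one passes to general $u \in H^1(\R^2)$ by density, both sides being continuous in the $H^1$-norm because $\ba$ and $\curl\ba$ are bounded. (Alternatively one may first mollify $\ba$, apply the smooth case, and pass to the limit.)

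To apply this, I would take $\Psi \in \domD{\alpha,R} = H^1_\sym(\R^{2N})$; by the density statement recalled in Section~\ref{sec:intro-prelims} it suffices to treat $\Psi \in C_c^\infty(\R^{2N}\setminus\bDelta) \cap L^2_\sym$ and remove this restriction at the end by continuity of both sides of \eqref{eq:short-range} along an approximating sequence (the left-hand side being the magnetic quadratic form, the right-hand side being controlled by $\|\Psi\|_{L^2}^2$ since the kernel is bounded). For fixed $j$ and a.e. fixed $(\bx_k)_{k \neq j}$, Fubini gives that the slice $\bx_j \mapsto \Psi$ lies in $H^1(\R^2)$. The vector potential $\bx_j \mapsto \alpha\bA_j(\bx_j)$ is Lipschitz and bounded (each summand $(\bx_j - \bx_k)^\perp/|\bx_j - \bx_k|_R^2$ is, since $|\cdot|_R \ge R > 0$), with
$$
	\curl_{\bx_j}\bigl(\alpha\bA_j\bigr) = 2\pi\alpha \sum_{k \neq j} \frac{\1_{B_R(\bx_k)}}{\pi R^2},
$$
which has the sign of $\alpha$; hence $|\curl_{\bx_j}(\alpha\bA_j)| = 2\pi|\alpha|\sum_{k\neq j}\1_{B_R(\bx_k)}/(\pi R^2)$. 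The one-body inequality then yields, for a.e. $(\bx_k)_{k\neq j}$,
$$
	\int_{\R^2} |D_j \Psi|^2 \, d\bx_j \ \ge \ 2\pi|\alpha| \sum_{k \neq j} \int_{\R^2} \frac{\1_{B_R(0)}}{\pi R^2}(\bx_j - \bx_k)\, |\Psi|^2 \, d\bx_j,
$$
using $\1_{B_R(\bx_k)}(\bx_j) = \1_{B_R(0)}(\bx_j - \bx_k)$.

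To conclude, I would integrate this over the remaining variables $(\bx_k)_{k\neq j}$ (Tonelli applies, the integrands now being nonnegative), sum over $j = 1, \dots, N$, and merge $\sum_{j=1}^N \sum_{k \neq j}$ into $\sum_{j \neq k}$, which is exactly \eqref{eq:short-range}. I expect the only genuinely delicate point to be the one-body step — justifying the factorization and integration by parts for a vector potential that is merely Lipschitz with a bounded but discontinuous curl — which is why the mollification alternative is worth keeping in reserve. It is also worth noting that the factorization in fact produces an additional nonnegative remainder of the form $\sum_j \int |(D_{j,1} \mp i D_{j,2})\Psi|^2$ on the right-hand side, of which \eqref{eq:short-range} retains only what is needed.
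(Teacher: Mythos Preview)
Your proof is correct and follows essentially the same route as the paper: both freeze the remaining coordinates, apply the one-body magnetic inequality $\int |(-i\nabla+\ba)u|^2 \ge \int |\curl\ba|\,|u|^2$ (derived from the factorization $|D_1 \mp iD_2|^2$) in the variable $\bx_j$, and then integrate over $\sx'$ and sum over $j$. You supply more detail on the regularity and density justifications than the paper does, but the argument is the same.
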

Note that this repulsion is not at all as powerful as \eqref{eq:fermionic-Hardy}
upon taking the limit $R \to 0$ (or equivalently $\bar\varrho \to 0$),
because functions in $H^1(\R^{2N})$ may be approximated by 
smooth functions supported away from diagonals as $R \to 0$~\cite[Lemma~3]{LunSol-14}, 
such that the right-hand side of \eqref{eq:short-range} vanishes identically.
However the inequality will be useful in the case that
$R \sim \bar\varrho^{-1/2}$, 
i.e.\ $\bar\gamma \sim 1$.

Now, defining (denoted $C_{\alpha, N}$ in~\cite{LunSol-13a})
\begin{equation} \label{eq:C_alpha_N}
	\alpha_{N} 
	:= \min\limits_{p \in \{0, 1, \ldots, N-2\}} \min\limits_{q \in \Z} |(2p+1)\alpha - 2q|, 
	\qquad
	\alpha_* := \inf_{N \ge 2} \alpha_{N} = \lim_{N \to \infty} \alpha_{N}, 
\end{equation}
we may state  
the following local many-particle magnetic Hardy inequality for 
ideal anyons which was given in~\cite[Theorem~4]{LunSol-13a}:
\begin{theorem}\label{thm:ideal-Hardy}
	Let $\alpha \in \R$, $R=0$, $N \ge 1$ and $\Omega \subseteq \R^2$
	be open and convex. Then, for any $\Psi \in \domD{\alpha, 0}$, 
	$$
		\sum_{j=1}^N \int_{\Omega^N} |D_j\Psi|^2\, d\sx
		\ \ge \ \frac{\alpha_{N}^2}{N} \sum_{j < k} \int_{\Omega^N} 
			\frac{|\Psi|^2}{r_{jk}^2}
			\1_{\Omega \circ \Omega}(\bx_j, \bx_k) \, d\sx, 
	$$
	with 
	the reduced support
	$\1_{\Omega \circ \Omega}(\bx_j, \bx_k) := \1_{B_{\delta(\bX_{jk})}(0)}(\br_{jk})$, 
	and
	$$
		\br_{jk} := (\bx_j-\bx_k)/2, \qquad
		\bX_{jk} := (\bx_j+\bx_k)/2, \qquad
		r_{jk} := |\br_{jk}|, \qquad
		\delta(\bx) := \dist(\bx, \partial\Omega)
	$$
	pairwise coordinates and distances.
\end{theorem}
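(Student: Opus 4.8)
The plan is to localize the estimate to one pair of particles at a time, reducing to a two-body magnetic Hardy inequality in the relative coordinate, and then to reassemble the pair contributions by an algebraic identity that recycles the centre-of-mass kinetic energy --- it is this identity that produces the weight $\alpha_N^2/N$ rather than the weaker $\alpha_N^2/\bigl(2(N-1)\bigr)$ coming from a crude pair splitting. Since $C^\infty_c(\R^{2N}\setminus\bDelta)\cap L^2_\sym(\R^{2N})$ is a form core for $\domD{\alpha,0}$, it suffices to prove the inequality for such $\Psi$, the general case following by a standard limiting argument (Fatou on the right-hand side); for such $\Psi$ everything below is a pointwise identity together with Fubini and a one-dimensional computation. \textbf{Step 1 (pair splitting with centre-of-mass feedback).} For $j\neq k$ put $A_{jk}:=\tfrac12(D_j+D_k)$ and $B_{jk}:=\tfrac12(D_j-D_k)$, so pointwise $|D_j\Psi|^2+|D_k\Psi|^2=\tfrac12|(D_j+D_k)\Psi|^2+2|B_{jk}\Psi|^2$. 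Summing over $1\le j<k\le N$, using $\sum_{j<k}(|D_j\Psi|^2+|D_k\Psi|^2)=(N-1)\sum_{j=1}^N|D_j\Psi|^2$ together with the elementary identity $\sum_{j<k}|(D_j+D_k)\Psi|^2=(N-2)\sum_{j=1}^N|D_j\Psi|^2+\bigl|\sum_{j=1}^N D_j\Psi\bigr|^2$, and discarding the last (nonnegative) term, one is left with the pointwise bound $\sum_{j=1}^N|D_j\Psi|^2\ge\tfrac4N\sum_{j<k}|B_{jk}\Psi|^2$. It therefore remains to bound $\int_{\Omega^N}|B_{jk}\Psi|^2$ from below for each fixed pair.

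\textbf{Step 2 (relative coordinate; role of convexity).} Fix $j<k$, set $\bX_{jk}=(\bx_j+\bx_k)/2$ and $\bx_j-\bx_k=2\br_{jk}$, and regard $\Psi$ as a function of $\bX_{jk}$, of $\bx_j-\bx_k$, and of the remaining $\bx_l$, $l\neq j,k$. A short computation shows the singular Aharonov--Bohm potentials of the pair cancel in $A_{jk}$ and reinforce in $B_{jk}$, so $B_{jk}$ acts on the $(\bx_j-\bx_k)$-dependence of $\Psi$ as a magnetic momentum $-i\nabla+\mathbf a$ whose field $\curl\mathbf a$ equals $2\pi\alpha$ times the Dirac mass at the origin plus a regular part with masses $2\pi\alpha\,\delta_{\pm 2(\bx_l-\bX_{jk})}$, $l\neq j,k$; these extra fluxes thus come in pairs symmetric about the origin, each of flux $\alpha$, and since $\Psi$ is invariant under $\bx_j\leftrightarrow\bx_k$ the relative wave function is even in $\bx_j-\bx_k$. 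Convexity of $\Omega$ is used exactly to ensure that the midpoint $\bX_{jk}$ of $\bx_j,\bx_k\in\Omega$ again lies in $\Omega$; hence, for each frozen choice of $\bX_{jk}$ and of the $\bx_l$, whenever $|\br_{jk}|<\delta(\bX_{jk})$ both $\bx_j$ and $\bx_k$ remain in $\Omega$, and one may apply a planar magnetic Hardy inequality on the disk $\{|\bx_j-\bx_k|<2\delta(\bX_{jk})\}$, discarding the nonnegative contribution from outside it --- which is precisely the reduced support $\1_{\Omega\circ\Omega}$.

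\textbf{Step 3 (the two-body magnetic Hardy inequality --- the crux).} The remaining ingredient is a planar statement: if $f\in C^\infty_c(B_\rho(0)\setminus\{0,\pm\by_1,\dots,\pm\by_p\})$ is even and $\curl\mathbf a=2\pi\alpha\bigl(\delta_0+\sum_{m=1}^p(\delta_{\by_m}+\delta_{-\by_m})\bigr)$ with all $\pm\by_m\in B_\rho$, then $\int_{B_\rho}|(-i\nabla+\mathbf a)f|^2\ge\dist\bigl((2p+1)\alpha,2\Z\bigr)^2\int_{B_\rho}|f|^2/|\bx|^2$. The key point is that a simple exchange of the pair, i.e.\ a half-turn of $\bx_j-\bx_k$ about the origin, also winds once around each of the $p$ flux pairs lying inside the relevant disk, so the Aharonov--Bohm phase governing the relative angular motion is $(2p+1)\alpha\pi$ rather than $\alpha\pi$; concretely, stripping these off-centre fluxes by a singular gauge transformation (built from $\arg\bx-\arg(\bx\mp\by_m)$ summed over the symmetric pairs, hence preserving evenness of the wave function) reduces the problem to a pure Aharonov--Bohm operator with flux $(2p+1)\alpha$ at the origin, where the inequality follows by Fourier decomposition on circles about the origin: evenness restricts the angular modes to the even coset, the angular part of $|(-i\nabla+\mathbf a)f|^2$ is bounded below pointwise by $|\bx|^{-2}\min_{n\in2\Z}\bigl(n+(2p+1)\alpha\bigr)^2|f|^2=|\bx|^{-2}\dist\bigl((2p+1)\alpha,2\Z\bigr)^2|f|^2$, and the radial part is dropped. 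Since only flux pairs lying inside the disk contribute and there are at most $N-2$ of them, $0\le p\le N-2$, so minimizing over frozen configurations replaces $\dist((2p+1)\alpha,2\Z)$ by $\alpha_N$.

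Combining these: for each pair, changing variables via $d\bx_j\,d\bx_k=d\bX_{jk}\,d(\bx_j-\bx_k)$ and $|\bx_j-\bx_k|^2=4r_{jk}^2$, Step 3 gives $\int_{\Omega^N}|B_{jk}\Psi|^2\,d\sx\ge\tfrac{\alpha_N^2}{4}\int_{\Omega^N}\frac{|\Psi|^2}{r_{jk}^2}\,\1_{\Omega\circ\Omega}(\bx_j,\bx_k)\,d\sx$, and inserting this into the bound of Step 1 yields the theorem. I expect Step 3 to be the main obstacle: making rigorous the singular-gauge and holonomy bookkeeping --- that exactly the off-centre fluxes inside the disk $B_{\delta(\bX_{jk})}$ may be absorbed, that absorbing them preserves the reflection symmetry forcing the even angular coset, and that this coset gives the sharp constant --- is essentially the two-anyon spectral analysis of Lundholm and Solovej underlying this theorem. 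The other steps are an algebraic identity, a Jacobian, and a one-dimensional Hardy/Fourier estimate.
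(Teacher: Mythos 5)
Your Steps~1 and~2 are correct and reproduce the paper's reduction: the pair-splitting algebra reconstructs \eqref{eq:abs-rel-identity} (with $n=N$), and the passage to relative coordinates, the cancellation/reinforcement of the pair's mutual Aharonov--Bohm potentials in $A_{jk}$ versus $B_{jk}$, and the use of convexity to justify the reduced support $\1_{\Omega\circ\Omega}$ all match Section~\ref{sec:long-range}. The gap is in Step~3, where the intermediate planar statement
\[
  \int_{B_\rho}|(-i\nabla+\mathbf{a})f|^2 \;\ge\; \dist\bigl((2p+1)\alpha,2\Z\bigr)^2 \int_{B_\rho}\frac{|f|^2}{|\bx|^2}
\]
is false as written. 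The gauge function $\chi=\sum_m[2\arg\bx-\arg(\bx-\by_m)-\arg(\bx+\by_m)]$ does turn the vector potential into $(2p+1)\alpha\,\bx^\perp/|\bx|^2$, but for non-integer $\alpha$ the factor $e^{i\alpha\chi}$ is multi-valued: going once around the circle $|\bx|=r$, it picks up the phase $e^{4\pi i\alpha(p-p(r))}$, where $p(r)\le p$ counts the pairs $\pm\by_m$ strictly inside $B_r$. The gauge-transformed wave function is therefore only quasi-periodic with a radius-dependent twist, and the evenness constraint restricts the angular modes on $\partial B_r$ not to $2\Z$ but to $2\Z-2\alpha(p-p(r))$. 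Fed through the angular operator $-i\partial_\vphi+(2p+1)\alpha$, these give eigenvalues $m+(2p(r)+1)\alpha$, $m\in 2\Z$, so the true bound at radius $r$ has constant $\dist\bigl((2p(r)+1)\alpha,2\Z\bigr)^2$, which can be strictly smaller than your claimed constant. Concretely, for $\alpha=1/3$ and $p=1$ with $\by_1$ near $\partial B_\rho$, your constant is $\dist(3\alpha,2\Z)^2=1$; but for a radial even $f$ supported on an annulus $[\eps,R']$ with $R'\ll|\by_1|$, the enclosed flux is only $\alpha$, the quadratic form is essentially $2\pi\int_{\eps}^{R'}\bigl(|f'|^2+\tfrac19 r^{-2}|f|^2\bigr)r\,dr$, and this falls below $2\pi\int_\eps^{R'}|f|^2\,r^{-1}\,dr$ once the logarithmic length $\ln(R'/\eps)$ is large enough.

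The paper (following~\cite{LunSol-13a}) sidesteps the multi-valuedness by \emph{not} gauging globally: Lemma~\ref{lem:magnetic-Hardy-symm} diagonalizes the angular operator $K_\vphi(r)=-i\partial_\vphi+ra_\vphi(r,\vphi)$ on each circle $\partial B_r$ separately, producing the radius-dependent bound $\inf_{q\in\Z}|\hat\Phi(r)-2q|^2/r^2$ with $\hat\Phi(r)$ the flux actually enclosed in $B_r$; for $R=0$ one splits into concentric annuli avoiding the $\by_m$, on each of which $\hat\Phi(r)=(2p(r)+1)\alpha$ is constant. The uniform constant $\alpha_N^2$ then emerges because $p(r)\in\{0,\dots,N-2\}$ for \emph{every} $r$, not just at the outer boundary, and the radial term $|\partial_r|u||^2$ is dropped. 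Your argument becomes correct if you replace $p$ by $p(r)$ throughout Step~3 and take the infimum over $p(r)\le N-2$ before, not after, stating the Hardy inequality --- which is precisely the content of Lemma~\ref{lem:magnetic-Hardy-symm} specialized to $R=0$.
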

For fermions, with $\alpha=1$ and $\alpha_N = \alpha_* = 1$,
considered on the full two-dimensional plane $\Omega = \R^2$,
this is exactly the inequality \eqref{eq:fermionic-Hardy}.
For anyons the dependence on the statistics parameter $\alpha$ comes in
via the expressions \eqref{eq:C_alpha_N} as will be explained below.

Our first main result is the following improvement and extension of 
Theorem~\ref{thm:ideal-Hardy}  
to $R$-extended anyons, thereby providing us with a concrete 
(and indeed useful) measure of the
long-range effect of the statistical magnetic interaction:

\begin{theorem}[\textbf{Long-range magnetic interaction}]\label{thm:long-range}\mbox{}\\
	Let $\alpha \in \R$, $R \ge 0$, $N \ge n \ge 1$ and 
	$\Omega \subseteq \R^2$ be open and convex. Then, for any 
	$\Psi \in \domD{\alpha, R}$ and $\kappa \in [0, 1)$, 
	\begin{align}
		\sum_{j=1}^n \int_{\Omega^n} |D_j\Psi|^2 \, d\sx
		&\ge \frac{1}{n} \int_{\Omega^n} \biggl| \sum_{j=1}^n D_j \Psi \biggr|^2 \, d\sx \\
		&\qquad + \frac{1}{n} \sum_{j < k} \int_{\Omega^n} \biggl(
			(1-\kappa)\bigl| \partial_{r_{jk}} |\Psi| \bigr|^2 + 
			c(\kappa)^2 \frac{\alpha_N^2 }{r_{jk}^2}
			\1_A(\bx_j, \bx_k) 
			\, |\Psi|^2 \biggr) d\sx \\
		&\ge 4\pi(1-\kappa) \frac{1}{n} \sum_{j < k} \int_{\Omega^n} 
			g\biggl( \frac{c(\kappa)\alpha_N}{\sqrt{1-\kappa}}, \frac{3R/\delta(\bX_{jk})}{1-3R/\delta(\bX_{jk})} \biggr)^2
			\, \frac{ \1_A(\bx_j, \bx_k) }{ 4\pi\delta(\bX_{jk})^2 }
			\, |\Psi|^2 \, d\sx, 
	\end{align}
	where $D_j$ may depend on the positions of all $N$ particles 
	$\sx \in \R^{2N}$, 
	the support
	$$
		\1_A(\bx_j, \bx_k) := \1_{B_{\delta(\bX_{jk})-3R}(0) \setminus B_{3R}(0)}(\br_{jk})
	$$
	describes a maximal annulus contained in $\Omega$
	(with some $R$-dependent margins)
	in terms of the relative coordinate, and
	$g(\nu, \gamma)$ for $\nu\in \R_\limplus$ and $0 \le \gamma < 1$  
	is the square root of the smallest positive solution $\lambda$ 
	associated with the Bessel equation
	$-u''-u'/r+\nu^2 u/r^2 = \lambda u$
	on the interval $[\gamma, 1]$ with Neumann boundary conditions, 
	while  
	$g(\nu, \gamma):=\nu$ for $\gamma \ge 1$. 

	In the ideal 
	case $R=0$ the
	inequality is valid with
	$c(\kappa) \equiv 1$ (hence take $\kappa=0$), 
	while for any $R\ge 0$ it holds at least for 
	$c(\kappa) = 4.7\cdot 10^{-4}\kappa/(1+2\kappa)$.
	
	Moreover, the function $g$ 
	has the following properties:
	$$
		\nu \le g(\nu, \gamma) \le j_{\nu}', 
		\qquad
		g(\nu, \gamma)  
		\sim \left\{ \begin{array}{ll}
			j_{\nu}' \ge \sqrt{2\nu}, 	& \gamma \to 0, \\
			\nu, 						& \gamma \to 1, 
		\end{array}\right.
	$$
	where $j_\nu'$ denotes the first positive zero of the derivative 
	of the Bessel function $J_\nu$ (and $j_0':=0$).
\end{theorem}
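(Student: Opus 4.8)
\emph{Plan.} My plan is to prove the two displayed inequalities in three stages: a center-of-mass reduction of the $n$-body form to two-body pieces; a relative-coordinate analysis of each pair, which yields the first chain; and a one-dimensional Bessel eigenvalue estimate, together with the elementary properties of $g$, which yields the second. One first reduces to $\Psi \in C^\infty_c(\R^{2N}\setminus\bDelta) \cap L^2_\sym(\R^{2N})$ by the density statement recalled in Section~\ref{sec:intro-prelims}. \emph{Stage 1 (two-body reduction).} Applying pointwise in $\sx$ the identity $\sum_{j=1}^n|a_j|^2 = \tfrac1n\bigl|\sum_{j=1}^n a_j\bigr|^2 + \tfrac1n\sum_{j<k}|a_j-a_k|^2$ with $a_j = (D_j\Psi)(\sx) \in \C^2$ rewrites the left-hand side as
\[
	\frac1n\int_{\Omega^n}\Bigl|\sum_{j=1}^n D_j\Psi\Bigr|^2\,d\sx
	+ \frac1n\sum_{j<k}\int_{\Omega^n}\bigl|(D_j-D_k)\Psi\bigr|^2\,d\sx ,
\]
so the first inequality reduces to the pairwise bound $\int_{\Omega^n}|(D_j-D_k)\Psi|^2\,d\sx \ge \int_{\Omega^n}\bigl((1-\kappa)\bigl|\partial_{r_{jk}}|\Psi|\bigr|^2 + c(\kappa)^2\alpha_N^2\,r_{jk}^{-2}\,\1_A\,|\Psi|^2\bigr)\,d\sx$.

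\emph{Stage 2 (relative-coordinate analysis; the core).} Fix $(j,k)$, pass to coordinates $(\bX,\br)=(\bX_{jk},\br_{jk})$ and freeze $\bX$ and all $\bx_l$, $l\neq j,k$. Then $-i(\nabla_j-\nabla_k)=-i\nabla_\br$, while $\alpha(\bA_j-\bA_k)$, as a function of $\br$, is the sum of the $j$--$k$ term $\alpha\,\br^\perp/|\br|_{R/2}^2$ (whose curl is a uniform flux disk of radius $R/2$ and total flux $2\pi\alpha$ at the origin) and, for each frozen $l$, a symmetric pair of terms whose curls are uniform flux disks of radius $R$ and total flux $\pm2\pi\alpha$ centered at $\pm(\bx_l-\bX)$. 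Writing the magnetic Dirichlet form in polar coordinates $\br=(r,\theta)$ and using the diamagnetic inequality in $r$ gives, pointwise after a radial gauge, $|(D_j-D_k)\Psi|^2 \ge |\partial_r|\Psi||^2 + r^{-2}\bigl|(-i\partial_\theta+\beta(r,\theta))\psi\bigr|^2$, where $\psi$ is the frozen wave function, $\int_0^{2\pi}\beta\,d\theta = 2\pi\Phi(r)$ with $\Phi(r)$ the flux (in units of $2\pi$) enclosed by $\{|\br|=r\}$, and --- crucially --- $\beta$ is $\pi$-periodic in $\theta$ because the frozen flux configuration is odd under $\br\mapsto-\br$. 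Since $\Psi$ is bosonic, $\psi$ is even in $\br$, hence also $\pi$-periodic, and the one-dimensional magnetic operator on the fundamental domain of period $\pi$ then yields $\int_0^{2\pi}|(-i\partial_\theta+\beta)\psi|^2\,d\theta \ge \dist(\Phi(r),2\Z)^2\int_0^{2\pi}|\psi|^2\,d\theta$. In the ideal case $R=0$ one has $\Phi(r)=(2p_r+1)\alpha$ with $p_r\le N-2$ the number of frozen particles within distance $r$ of $\bX$, so $\dist(\Phi(r),2\Z)\ge\alpha_N$; restricting the relative variable to $\1_A$ (which for $R=0$ is $\{0<r<\delta(\bX)\}$, exactly the support in Theorem~\ref{thm:ideal-Hardy}) and integrating over $r,\theta$ gives the pairwise bound with $c\equiv1$, $\kappa=0$ --- this is Theorem~\ref{thm:ideal-Hardy} of~\cite{LunSol-13a}, now with the radial term retained rather than discarded. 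For $R>0$ the flux disks of the frozen particles may straddle the circle, so $\Phi(r)$ need not be an odd multiple of $\alpha$ and the pointwise-in-$r$ bound degenerates on a set of radii of measure $O(R)$; to recover a bound of the required form one reserves a fraction $\kappa$ of the radial energy, writing $|(D_j-D_k)\Psi|^2 \ge (1-\kappa)|\partial_r|\Psi||^2 + \bigl(\kappa|\partial_r|\Psi||^2 + r^{-2}(\cdots)\bigr)$, and uses the reserved radial part to transport $L^2$-mass from the offending $O(R)$-bands to nearby radii where the ideal estimate applies and where the origin flux disk (of radius $R/2$) is cleanly enclosed; the $3R$ margins in $\1_A$ absorb this $O(R)$ transport distance and the radius of the origin disk while keeping the relevant circles inside $\Omega$, and the crudeness of this comparison is what forces the small admissible value $c(\kappa)=4.7\cdot10^{-4}\kappa/(1+2\kappa)$. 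I expect this extended-anyon estimate to be the main obstacle of the proof.

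\emph{Stage 3 (radial Bessel estimate and properties of $g$).} To pass to the second inequality, drop the nonnegative $\frac1n\int|\sum D_j\Psi|^2$ term and, per pair, fix $\bX$ (and the frozen particles), set $\delta=\delta(\bX)$, and treat $f:=|\Psi|$ as a function of $r=r_{jk}$; discarding the radial energy outside $\{3R<r<\delta-3R\}$ and applying the Rayleigh-quotient characterization of the lowest eigenvalue gives $\int_{3R}^{\delta-3R}\bigl((1-\kappa)|\partial_r f|^2 + c(\kappa)^2\alpha_N^2 f^2/r^2\bigr)\,r\,dr \ge \lambda_0\int_{3R}^{\delta-3R} f^2\,r\,dr$, where $\lambda_0$ is the lowest Neumann eigenvalue of $-(1-\kappa)\tfrac1r(rf')'+c(\kappa)^2\alpha_N^2 r^{-2}f=\lambda f$ on $[3R,\delta-3R]$. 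Rescaling $r=(\delta-3R)\tau$ and dividing by $1-\kappa$ turns this into the Bessel--Neumann problem $-\tfrac1\tau(\tau u')'+\nu^2\tau^{-2}u=\mu u$ on $[\gamma,1]$ with $\gamma=\tfrac{3R/\delta}{1-3R/\delta}=\tfrac{3R}{\delta-3R}$ and $\nu=c(\kappa)\alpha_N/\sqrt{1-\kappa}$, whence $\lambda_0=(1-\kappa)\,g(\nu,\gamma)^2/(\delta-3R)^2 \ge (1-\kappa)\,g(\nu,\gamma)^2/\delta^2$; inserting this (the $4\pi$ factors cancelling) and integrating over $\theta$, $\bX$ and the remaining coordinates is exactly line three. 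Finally the properties of $g$ are standard Sturm--Liouville/Bessel facts: $g(\nu,\gamma)\ge\nu$ since on $[\gamma,1]\subset(0,1]$ the quadratic form dominates $\nu^2\int\tau^{-2}u^2\,\tau\,d\tau\ge\nu^2\int u^2\,\tau\,d\tau$; $g$ is non-increasing in $\gamma$ (Sturm comparison), with $g(\nu,\gamma)\to j_\nu'$ as $\gamma\to0$ (convergence to the disk problem on $[0,1]$, whose ground state is $J_\nu(j_\nu'\tau)$ with eigenvalue $(j_\nu')^2$) and $g(\nu,\gamma)\to\nu$ as $\gamma\to1$ (thin-annulus limit, nearly-constant ground state with Rayleigh quotient $\to\nu^2$), giving $\nu\le g(\nu,\gamma)\le j_\nu'$; and $j_\nu'\ge\sqrt{\nu(\nu+2)}\ge\sqrt{2\nu}$ is the classical lower bound for the first positive zero of $J_\nu'$.
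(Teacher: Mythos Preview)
Your overall architecture matches the paper's: the algebraic two-body reduction (Stage~1), the relative-coordinate magnetic Hardy inequality with $\pi$-periodicity (Stage~2), and the Bessel--Neumann rescaling (Stage~3) are exactly what the paper does, and your treatment of the ideal case $R=0$ and of the properties of $g$ is essentially complete.

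The genuine gap is in Stage~2 for $R>0$. Your claim that ``the pointwise-in-$r$ bound degenerates on a set of radii of measure $O(R)$'' is not justified and in fact need not hold: the effective potential $\rho(r)=\dist(\Phi(r),2\Z)^2$ has a zero $r_q$ for each integer $q$ crossed by $\Phi$, and if many particle disks straddle the circle simultaneously then $\Phi$ can sweep through many even integers over a short range, producing \emph{many} intervals on which $\rho$ is small. The total measure of the degeneration set is therefore not a priori $O(R)$; controlling it is precisely the content of the long Section~4 of the paper. The vague phrase ``transport $L^2$-mass from the offending $O(R)$-bands to nearby radii where the ideal estimate applies'' does not supply a mechanism that survives dense configurations.

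What the paper actually does (Theorem~4.1) is the following. First, around each zero $r_q$ one identifies an interval $I_q=(z_q^-,z_q^+)$ on which the particle counting function $\N$ increases by exactly one; outside $\bigcup_q I_q$ the number-theoretic structure of $\alpha_N$ already gives $\rho\ge\alpha_N^2$. Each $I_q$ has length at most $2R$ and is covered by some single particle (Lemma~4.4). On each $I_q$ one borrows a fraction $\kappa/2$ of the radial kinetic energy and uses a one-dimensional projection estimate (Lemma~4.3) to replace $\rho$ by its weighted mean $\bar\rho_{I_q}$; a direct bound (Lemma~4.8) shows $\bar\rho_{I_q}\gtrsim\alpha^2$ provided $I_q$ is ``good'' (either $|I_q|\ge CR$ or $\N'$ does not vary too wildly on $I_q$). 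The remaining ``bad'' $I_q$ are handled by a geometric argument (the Shape Lemma~4.6 and Lemma~4.9) showing that, within any window of length $R/2$, the bad intervals can cover at most a fixed fraction of the window; one then applies the projection estimate a second time on each $R/2$-window. The two-step smearing is what produces the tiny constant $c(\kappa)$ and the $3R$ margins. You have not supplied any of this structure, and without it the extended case does not go through.
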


Theorem~\ref{thm:long-range} will be applied to study the energy of the homogeneous 
anyon gas according to the local strategy outlined above. 
In such a setting $\Omega$ is typically not the domain to which our gas is 
confined, but rather a subdomain thereof, and $n$ is the number of particles 
present in $\Omega$ while $N$ is the total number of particles in the gas.
This more complicated division of particles is needed in the statement of the theorem 
because the magnetic derivatives depend on \emph{all} particles, 
not just those in $\Omega$,
which is even more relevant in the extended case.

We note that the above inequality may in some sense 
be viewed as a refinement (with respect to the 
angular dependence in pairwise relative coordinates)
of the usual (pointwise) diamagnetic inequality:

\begin{lemma}[\textbf{Diamagnetic inequality}]\label{lem:diamagnetic}\mbox{}\\
	For any $\alpha \in \R$, $R \ge 0$, $N \ge 1$ 
	and $\Psi \in \domD{\alpha, R}$
	we have that
	$$
		\sum_{j=1}^N \int_{\R^{2N}} |D_j\Psi|^2 \, d\sx
		\ \ge \ \sum_{j=1}^N \int_{\R^{2N}} \bigl| \nabla_j|\Psi| \bigr|^2 \, d\sx.
	$$
\end{lemma}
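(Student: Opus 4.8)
\emph{Proof idea.} The plan is to combine the elementary pointwise diamagnetic inequality with a density argument that absorbs the possible singularity of the magnetic potential. By the density of $C^\infty_c(\R^{2N}\setminus\bDelta)\cap L^2_\sym(\R^{2N})$ in $\domD{\alpha,R}$ recalled above (see \cite[Theorem~5]{LunSol-14}), it suffices to prove the inequality for $\Psi$ in this space and then pass to the limit. On $\R^{2N}\setminus\bDelta$ the potentials $\bA_j$ of \eqref{eq:many-anyon vec pot} are smooth, so for such $\Psi$ there is no regularity issue and $D_j\Psi$ is a genuine (classical) expression; moreover for $R>0$, where $\domD{\alpha,R}=H^1_\sym(\R^{2N})$ and $\bA$ is bounded, the density step is in fact unnecessary and the statement is the textbook diamagnetic inequality.

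First I would establish the pointwise bound $|\nabla_j|\Psi||\le |D_j\Psi|$ a.e., for fixed smooth $\Psi$ supported away from the diagonals. Writing $D_j=-i\nabla_j+\alpha\bA_j$ with $\bA_j$ real-valued, one has $\bar\Psi D_j\Psi=-i\bar\Psi\nabla_j\Psi+\alpha\bA_j|\Psi|^2$, so taking imaginary parts kills the last (real) term and gives $\mathrm{Re}(\bar\Psi\nabla_j\Psi)=-\mathrm{Im}(\bar\Psi D_j\Psi)$. Hence $\tfrac12\nabla_j|\Psi|^2=\mathrm{Re}(\bar\Psi\nabla_j\Psi)=-\mathrm{Im}(\bar\Psi D_j\Psi)$, and on $\{|\Psi|>0\}$ this reads $|\Psi|\,\nabla_j|\Psi|=-\mathrm{Im}(\bar\Psi D_j\Psi)$, whence $|\Psi|\,\bigl|\nabla_j|\Psi|\bigr|\le|\Psi|\,|D_j\Psi|$. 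To avoid fussing over the zero set I would instead work with $|\Psi|_\eps:=(|\Psi|^2+\eps^2)^{1/2}$, for which $\nabla_j|\Psi|_\eps=|\Psi|_\eps^{-1}\mathrm{Re}(\bar\Psi\nabla_j\Psi)$ holds everywhere, so $\bigl|\nabla_j|\Psi|_\eps\bigr|\le|\Psi|\,|\Psi|_\eps^{-1}\,|D_j\Psi|\le|D_j\Psi|$; integrating, summing over $j$, and letting $\eps\to0$ (with weak $L^2$-compactness of $\nabla_j|\Psi|_\eps$ identifying the limit as $\nabla_j|\Psi|$ and lower semicontinuity of the norm) gives the inequality for such $\Psi$.

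Finally I would pass from smooth $\Psi$ to a general element of $\domD{\alpha,R}$. Given $\Psi_m\to\Psi$ in $\domD{\alpha,R}$ with $\Psi_m\in C^\infty_c(\R^{2N}\setminus\bDelta)$, the right-hand sides converge, $\sum_j\int|D_j\Psi_m|^2\to\sum_j\int|D_j\Psi|^2$, while $|\Psi_m|\to|\Psi|$ in $L^2$. The step just proved shows $(\nabla_j|\Psi_m|)_m$ is bounded in $L^2$, so along a subsequence it converges weakly, necessarily to the distributional gradient $\nabla_j|\Psi|$ (which thereby lies in $L^2$, i.e.\ $|\Psi|\in H^1_\loc$ with $L^2$ gradient); weak lower semicontinuity then yields $\sum_j\int\bigl|\nabla_j|\Psi|\bigr|^2\le\liminf_m\sum_j\int\bigl|\nabla_j|\Psi_m|\bigr|^2\le\lim_m\sum_j\int|D_j\Psi_m|^2=\sum_j\int|D_j\Psi|^2$. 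The only genuinely delicate point is the singular case $R=0$, where $\domD{\alpha,0}$ need not sit inside $H^1$ and the classical inequality is unavailable; that difficulty is precisely what the density of smooth functions off the fat diagonal, combined with smoothness of $\bA_j$ there, is designed to resolve.
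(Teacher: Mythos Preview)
Your proof is correct and follows the standard route: the pointwise diamagnetic bound $|\nabla_j|\Psi||\le|D_j\Psi|$ for smooth $\Psi$ supported off the diagonals (via the regularization $|\Psi|_\eps$), followed by a density-plus-weak-lower-semicontinuity argument to reach all of $\domD{\alpha,R}$. The paper itself does not give a proof here but simply cites the standard result \cite[Theorem~7.21]{LieLos-01} for $R>0$ (where $\bA\in L^2_\loc$) and \cite[Lemma~4]{LunSol-14} for the singular case $R=0$; your argument is essentially a self-contained reconstruction of what those references provide, leaning on the same density statement \cite[Theorem~5]{LunSol-14} that the paper invokes elsewhere.

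One small remark: the paper notes that the very definition and understanding of the form domain $\domD{\alpha,0}$ in the ideal case ``depends on this general formulation of the inequality'', so in principle one must be careful that the density result you invoke is established independently of the diamagnetic inequality. In \cite{LunSol-14} this is indeed handled consistently, so your citation structure is sound; but it is worth being aware that the logical order in the singular case is more delicate than the clean two-step argument suggests.
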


For $R>0$, the vector potential satisfies 
$\bA \in L^\infty(\R^{2N}) \subseteq L^2_{\loc}(\R^{2N})$ 
and hence it is covered by standard theorems; see e.g.~\cite[Theorem~7.21]{LieLos-01}.
For $R=0$ it is not, but the above diamagnetic inequality still holds
in this case, as was proved in~\cite[Lemma~4]{LunSol-14}
(and actually our understanding of the form domain $\domD{\alpha, 0}$ 
alluded to above depends on this general formulation of the inequality).

Note that $|\Psi| \in L^2_\sym(\R^{2N})$. Therefore
the diamagnetic inequality of Lemma~\ref{lem:diamagnetic} says that the kinetic
energy for anyons is always higher than that for bosons, 
while the short-range inequality of Lemma~\ref{lem:short-range} tells us that
the anyons also feel an effective repulsion proportional to $|\alpha|$ 
whenever they overlap.
Taking a combination of these two bounds would then correspond to a 
two-dimensional soft-disk repulsive Bose gas, whose energy in the dilute limit 
tends to zero logarithmically with the density 
(here the magnetic filling ratio 
$\bar\gamma := R\bar\varrho^{1/2} \to 0$)~\cite{LieYng-01}.
On the other hand, Theorem~\ref{thm:long-range} provides a local bound for
the energy in the form of a long-range inverse-square repulsion
similar to \eqref{eq:fermionic-Hardy}, and whose
strength depends on the fractionality of $\alpha$ via $\alpha_N \to \alpha_*$.
While this `statistical repulsion'
does not change the above repulsive picture much in the regime of high densities
($\bar\gamma \gtrsim 1$) where the anyons already feel each other's 
magnetic fields by (partially) overlapping, 
it makes a significant difference in the dilute limit, actually resulting in a uniform
bound for the energy from below in terms of $(j_{\alpha_*}')^2 \ge 2\alpha_*$.

As discussed in~\cite{LunSol-13a}, and further in~\cite{Lundholm-16},
the reason for the dependence on $\alpha_N \ge \alpha_*$ 
and not directly $\alpha$ in  
the bounds of Theorems~\ref{thm:ideal-Hardy} and~\ref{thm:long-range}
is the local gauge invariance 
of the pairwise relative magnetic potential.
Namely, in an exchange of a pair of particles additional flux may also be enclosed.
Apart from the flux corresponding to the simple exchange~\eqref{eq:particleExchangePhase},
enclosing $p$ other particles in such an exchange loop
contributes an additional $2p$ multiples of the exchange flux, 
yielding the factor $2p+1$ in~\eqref{eq:C_alpha_N}.
At the same time, any even multiple of a unit flux may be com\-pen\-sated for 
(gauged away) by an 
opposite and equally large orbital angular momentum of that same particle pair,
thus explaining the subtraction of 
an arbitrary even integer $2q$ in~\eqref{eq:C_alpha_N}.
However, for odd-numerator rational $\alpha$ there can never be a complete 
cancellation of this type, and therefore there is always some long-range 
pair repulsion, $\alpha_*>0$~\cite[Proposition~5]{LunSol-13a}.

All these effects are summarized in the following theorem concerning 
the $R$-extended anyon gas, which is our second main result (see Figure~\ref{fig:EnergyVsGammaPlots} for an illustration):

\begin{theorem}[\textbf{Universal bounds for the homogeneous anyon gas}]\label{thm:intro-gas}\mbox{}\\
	Let $e(\alpha, \bar\gamma)$, where $\bar\gamma=R\bar\varrho^{1/2}$, denote the ground-state energy 
	per particle and unit density of the extended anyon gas
	in the thermodynamic limit at fixed 
	$\alpha \in \R$, $R \ge 0$ and density $\bar\varrho > 0$ 
	where Dirichlet boundary conditions have been imposed, that is
	$$
		e(\alpha, \bar\gamma) 
		:= \liminf_{\substack{N, \, |Q_0| \to \infty \\ \! N/|Q_0| = \bar\varrho}} 
			\Biggl(
			\frac{1}{\bar\varrho N}
			\inf_{\substack{\Psi \in \domD{\alpha, R} \cap C^\infty_c(Q_0^N) \\ \|\Psi\|_2 = 1}} 
			\langle\Psi, \hat{T}_\alpha \Psi\rangle
			\Biggr).
	$$
	Then
	\begin{multline} \label{eq:intro-homo-universal}
		e(\alpha, \bar\gamma) \ge C\biggl(
			2\pi\frac{|\alpha| 
				\min\bigl\{2(1-\bar\gamma^2/4)^{-1}, K_\alpha\bigr\} }{ 
				K_\alpha + 2|\alpha|\ln(2/\bar\gamma) } 
				\1_{\bar\gamma < 2}
			+ 2\pi|\alpha| \1_{\bar\gamma \ge 2} \\
			+ \pi g(c\alpha_*, 12\bar\gamma/{\sqrt{2}})^2 (1-12\bar\gamma/{\sqrt{2}})_+^3
			\biggr), 
	\end{multline}
	for some universal constants $C, c > 0$, $K_\alpha \ge 2$ (is defined in Lemma~\ref{lem:LocalExclusionSR}), 
	and $g$ as in Theorem~\ref{thm:long-range}.
	Furthermore, for any $\alpha \in \R$
	we have for the ideal anyon gas that
	\begin{equation} \label{eq:intro-homo-ideal}
		e(\alpha, 0)  
		\ge \frac{1}{2} 2\pi \alpha_* \bigl(1 - O(\alpha_*^{1/3})\bigr).
	\end{equation}
\end{theorem}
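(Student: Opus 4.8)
The plan is to run the local (Dyson--Lenard) method of~\cite{DysLen-67} as adapted to anyons in~\cite{LunSol-13a}. I would cover the confining square $Q_0$ by a grid of congruent subsquares $Q$ of side $\ell$, bound from below the contribution $\sum_j\int_{\{\bx_j\in Q\}}|D_j\Psi|^2$ of each subsquare to $\langle\Psi,\hat T_\alpha\Psi\rangle$ in terms of the local occupation $n_Q=\int_Q\varrho_\Psi$, sum over the grid using convexity of $t\mapsto(t-1)_+$, and finally optimize $\ell$ against $\bar\varrho$ --- the optimal choice being $\ell=\sqrt{2}\,\bar\varrho^{-1/2}$, i.e.\ average occupation $\bar\varrho\ell^2=2$, which is what produces the ratios $12\bar\gamma/\sqrt{2}$ and $2/\bar\gamma$ in~\eqref{eq:intro-homo-universal}. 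Two complementary \emph{local exclusion principles} should feed into this. For the short-range one, which is Lemma~\ref{lem:LocalExclusionSR} (where $K_\alpha$ is introduced), I would bound $\sum_j\int_{\{\bx_j\in Q\}}|D_j\Psi|^2$ from below by a convex combination of the diamagnetic inequality (Lemma~\ref{lem:diamagnetic}) and the soft-disk repulsion (Lemma~\ref{lem:short-range}); this is the local energy of a two-dimensional Bose gas for $|\Psi|$ with a repulsive potential of strength $\sim 2\pi|\alpha|$ and range $R$, to which one applies the local exclusion bound for the dilute $2$d Bose gas in the spirit of Lieb--Yngvason~\cite{LieYng-01} (cf.~\cite{LunSol-13b,LunPorSol-15}). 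The outcome is a local energy $\gtrsim\ell^{-2}(n_Q-1)_+$ with prefactor $\sim 2\pi|\alpha|/(K_\alpha+2|\alpha|\ln(2/\bar\gamma))$ --- the logarithm the two-dimensional scattering length of the soft disk relative to $\ell$, $K_\alpha$ its finite height, and the numerator's $\min\{2(1-\bar\gamma^2/4)^{-1},K_\alpha\}$ the overlap count of disks of radius $R$ in a cell of side $\ell$ --- which for $\bar\gamma\gtrsim 1$ saturates to the average-field value $2\pi|\alpha|$, producing the $\1_{\bar\gamma\ge 2}$ term.

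For the long-range local exclusion I would use Theorem~\ref{thm:long-range} on each cell $Q$ (with $\Omega=Q$, $n=n_Q$, a fixed $\kappa$, and the center-of-mass term simply dropped). On the part of $Q$ a distance $\gtrsim R$ from $\partial Q$ one has $\delta(\bX_{jk})^{-2}\gtrsim\ell^{-2}$ and the admissible annulus $\1_A$ fills a fraction $\sim(1-\mathrm{const}\cdot R/\ell)_+$ of the relative coordinate, while the deep part of $Q$ itself is a fraction $\sim(1-\mathrm{const}\cdot R/\ell)_+^2$; with $R/\ell\sim\bar\gamma$ this gives the factor $(1-12\bar\gamma/\sqrt{2})_+^3$, and $g(c\alpha_*,12\bar\gamma/\sqrt{2})^2$ enters as (a lower bound for) the Neumann ground-state energy of the pairwise relative magnetic operator on the annulus, with $\alpha_{n_Q}\ge\alpha_*$ coming in through the Hardy constant. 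The hard part will be converting these pairwise lower bounds --- a $\sum_{j<k}r_{jk}^{-2}\1_A$ from the Hardy term and a $\sum_{j\ne k}\1_{B_R}$ from the short-range term --- into honest local exclusion bounds linear in $n_Q$: no pointwise geometric inequality ``$\sum_{j<k}r_{jk}^{-2}\1_A\gtrsim\ell^{-2}(n_Q-1)$'' can hold, because a configuration with all particles crowded near $\partial Q$ defeats it. To get around this I would either retain the radial-derivative term $(1-\kappa)\bigl|\partial_{r_{jk}}|\Psi|\bigr|^2$ of Theorem~\ref{thm:long-range}, which penalizes such concentration, or --- more robustly --- average the whole estimate over translations of the grid: a pair at separation $d<\ell$ lies, with probability $\gtrsim(1-d/\ell)_+^2$ over the shift, in a common cell with both particles well inside it, and at density $\bar\varrho\sim\ell^{-2}$ a given particle has an order-one number of such partners, so the shift-average of the summed cell energies is $\gtrsim\ell^{-2}N$; since the shifted grids partition $Q_0$ in expectation, this bounds $\langle\Psi,\hat T_\alpha\Psi\rangle$ from below without loss.

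Assembling the two local exclusion bounds (splitting $\hat T_\alpha$ in half, or a convex combination), summing over cells with Jensen's inequality, and optimizing the dimensionless occupation in $(\bar n-1)_+/\bar n^2$ (maximal at $\bar n=2$) should give a lower bound of the stated shape, with $C$ absorbing the numerical and geometric factors and $c$ the constant $c(\kappa)$ of Theorem~\ref{thm:long-range}. For the sharper ideal bound~\eqref{eq:intro-homo-ideal} I would specialize $R=0$: then Theorem~\ref{thm:long-range} holds with $\kappa=0$, $c(\kappa)=1$ and $g(\nu,0)=j_\nu'$, so the long-range local exclusion keeps the full constant $(j_{\alpha_n}')^2\ge(j_{\alpha_*}')^2$ with no loss, and using $(j_\nu')^2=2\nu+O(\nu^2)$ one recovers the leading term $2\pi\alpha_*$. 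The surviving factor $\tfrac12$ and the $O(\alpha_*^{1/3})$ error should then come out of the final optimization, where the order-$\alpha_*$ inverse-square term must be balanced against lower-order corrections (from the radial and center-of-mass pieces and the ``$-1$'' in $(n-1)_+$), the cube-root exponent being the outcome of that balance --- this is where the constant degrades from the naive $2\pi\alpha_*$ and where the greatest care is needed.
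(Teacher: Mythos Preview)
Your overall strategy is right and matches the paper: split $\hat T_\alpha$ into pieces handled respectively by the diamagnetic inequality, the short-range bound, and the long-range Hardy bound; partition $Q_0$ into congruent subsquares; invoke local exclusion principles on each cell; sum via convexity of $(n-1)_+$; and choose the cell side $\ell=\sqrt{2}\,\bar\varrho^{-1/2}$ (i.e.\ $\mu=1/\sqrt{2}$, average occupation $2$). The short-range exclusion is exactly Lemma~\ref{lem:LocalExclusionSR} as you describe.

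Where you diverge from the paper is in the mechanism you propose for the ``hard part''. The paper does \emph{not} average over translations of the grid. Instead, in Lemma~\ref{lem:LocalExclusionLR} it reserves a fraction $\kappa$ of the kinetic energy, applies the diamagnetic inequality to it (getting a full Neumann Laplacian $-\kappa\Delta^{\mathcal N}_{Q^2}$, not merely the radial derivative), and uses the remaining $1-\kappa$ for Theorem~\ref{thm:long-range}. This produces a two-body Schr\"odinger operator $H=-\kappa\Delta^{\mathcal N}_{Q^2}+(1-\kappa)V$ on $Q^2$ with $V(\bX,\br)=\1_A/\delta(\bX)^2$. The boundary-crowding obstruction you identify is dealt with by \emph{truncating} $V$ at level $\hat\delta^{-2}$ near $\partial Q$ and then bounding the ground state of $H$ by a projection argument (onto $\psi_0\equiv 1$) or Temple's inequality. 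Your grid-averaging idea might be made to work, but it is a different route; the paper's projection/Temple step is what actually closes the argument and is the content of the function $f$ in Lemma~\ref{lem:LocalExclusionLR}.

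Relatedly, your account of where the $\tfrac12$ and the $O(\alpha_*^{1/3})$ come from is off. The $\tfrac12$ arises simply from the box optimization: with $\mu=1/\sqrt{2}$ one gets $\mu^2(1-\mu^2)=1/4$, and combined with $(j_{\alpha_*}')^2\sim 2\alpha_*$ this yields $\tfrac14\cdot 2\pi\cdot 2\alpha_*=\tfrac12\cdot 2\pi\alpha_*$. The $O(\alpha_*^{1/3})$ is \emph{not} a remnant of the $(n-1)_+$ or center-of-mass terms; it is the error in Temple's inequality for $H$ above, obtained by choosing the kinetic fraction $\kappa\sim t^{1/3}$ and the truncation level $\hat\delta\sim t^{1/3}$ (with $t=(j_{\alpha_*}')^2$) to balance the three error contributions $\kappa$, $\hat\delta$, and $t\hat\delta^{-1}/\kappa$.
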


\begin{figure}[ht]
	\centering
	\begin{tikzpicture}
		\node [above right] at (0,0) {\includegraphics[scale=0.82, clip, trim=10pt 0pt 0pt 0pt]{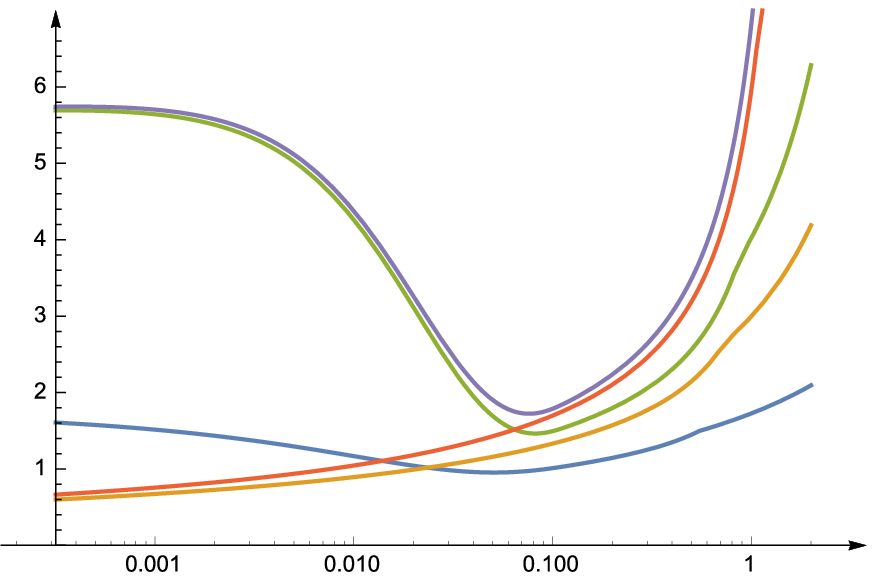}};
		\node [above right] at (7.0,.2) {\scalebox{0.8}{$\bar\gamma$\hspace{-5pt}}};
		\node [above right] at (0.3,0.8) {\scalebox{0.8}{$\alpha_*=0$\hspace{-5pt}}};
		\node [above right] at (0.3,1.35) {\scalebox{0.8}{$\alpha_*=1/3$\hspace{-5pt}}};
		\node [above right] at (0.3,4.04) {\scalebox{0.8}{$\alpha_*=1$\hspace{-5pt}}};
	\end{tikzpicture}
	\begin{tikzpicture}
		\node [above right] at (0,0) {\includegraphics[scale=0.82]{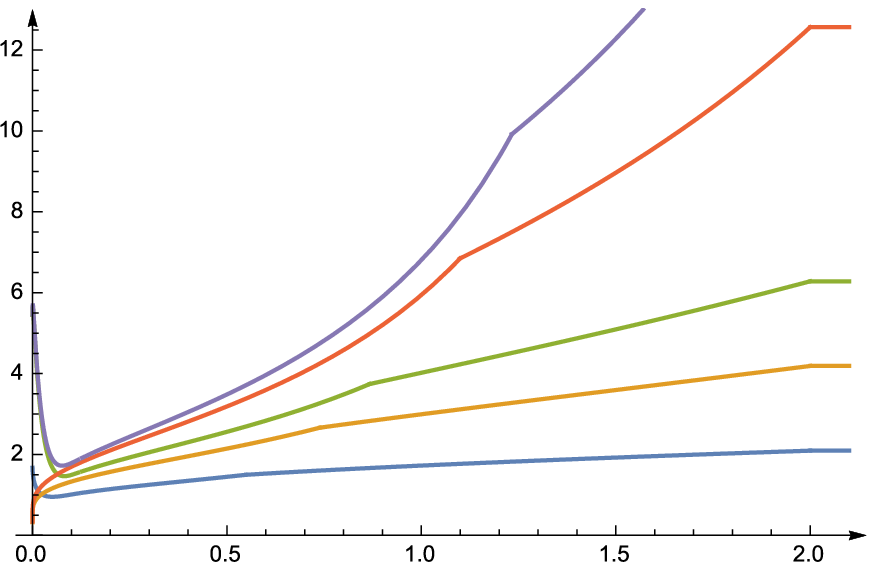}};
		\node [above right] at (7.3,.2) {\scalebox{0.8}{$\bar\gamma$\hspace{-5pt}}};
		\node [above right] at (6,1.04)   {\scalebox{0.8}{$\alpha=1/3$\hspace{-5pt}}};
		\node [above right] at (6,1.74) {\scalebox{0.8}{$\alpha=2/3$\hspace{-5pt}}};
		\node [above right] at (6.35,2.52) {\scalebox{0.8}{$\alpha=1$\hspace{-5pt}}};
		\node [above right] at (6.35,3.9) {\scalebox{0.8}{$\alpha=2$\hspace{-5pt}}};
		\node [above right] at (5.3,4.35) {\scalebox{0.8}{$\alpha=3$\hspace{-5pt}}};
	\end{tikzpicture}
	\caption{The universal lower bound for $e(\alpha, \bar\gamma)$ plotted as a 
	function of $\bar\gamma$ for 
	some fixed values of $\alpha$,
	in the hypothetical 
	case $C=1$, $c=1/{\sqrt{3}}$ for illustrative purposes.
	The figure to the right shows the general behavior over the full range of 
	$\bar\gamma$, while that on the left shows the behavior in the dilute 
	limit plotted in logarithmic scale where the long-range dependence on 
	$\alpha_*$ becomes relevant.}
	\label{fig:EnergyVsGammaPlots}
\end{figure}

As mentioned,
our approach to obtain the above theorem is to first formulate the effects
of the short- and long-range interactions in the form of local exclusion 
principles, an approach that goes back to Dyson and Lenard's original proof of 
the stability of matter for fermionic Coulomb systems~\cite{DysLen-67}.
This method was further developed 
in~\cite{LunSol-13a, LunSol-14, LunPorSol-15, LunNamPor-16},
not only to treat homogeneous gases but
also to prove Lieb--Thirring inequalities 
(i.e.\ uniform kinetic energy bounds in accord with the Thomas--Fermi approximation
for the inhomogeneous Fermi gas; cf.~\cite{LieThi-75, LieThi-76, LieSei-09})
with the usual Pauli exclusion principle for fermions 
replaced by more general repulsive interactions for bosons.
The reason for the factor $1/2$ in~\eqref{eq:intro-homo-ideal} compared to
the expected value $2\pi\alpha_*$ (at least if comparing to the Fermi gas at 
$\alpha=\alpha_*=1$ and assuming a linear interpolation
to small $\alpha$ such that $\alpha=\alpha_*$)
is that the long-range exclusion principle, 
which is applied locally on boxes of a tunable size, 
only increases linearly with the number of particles and
is strongest on a scale where about two particles fit in each box.
We provide further bounds for $e(\alpha, \bar\gamma)$ in various parameter regimes in Theorem~\ref{thm:gas}.

It should be remarked that our local exclusion principles also can be used to prove
Lieb--Thirring inequalities. 
We postpone the extended case to future work but note that the
ideal case is directly improved by the present results, 
namely replacing~\cite[Lemma~8]{LunSol-13a} with 
the local exclusion principle of
Lemma~\ref{lem:LocalExclusionLR} 
below yields the following bounds for ideal anyons, 
where the constant $(j_{\alpha_N}')^2 \ge 2\alpha_N \ge \alpha_N^2$ 
improves the one in~\cite[Theorems~1~and~11]{LunSol-13a}:

\begin{theorem}[\textbf{Lieb--Thirring inequality for ideal anyons}]\label{thm:intro-idealLT}\mbox{}\\
	With $\alpha \in \R$, $R=0$, $N \ge 1$ and $\Psi \in \domD{\alpha, 0}$
	we have that
	\begin{equation} \label{eq:intro-LT-kinetic}
		\langle \Psi, \hat T_\alpha \Psi\rangle
		\ge C (j_{\alpha_N}')^2 \int_{\R^2} \varrho_\Psi(\bx)^2 \, d\bx, 
	\end{equation}
	and if $V\colon \R^2 \to \R$ is an external one-body potential, 
	acting by $\hat V(\sx) := \sum_{j=1}^N V(\bx_j)$, then
	\begin{equation} \label{eq:intro-LT-potential}
		\langle \Psi, (\hat T_\alpha + \hat V)\Psi\rangle
		\ge -C' (j_{\alpha_N}')^{-2} \int_{\R^2} V_-(\bx)^2 \, d\bx, 
	\end{equation}
	for some positive universal constants $C$ and $C' = (4C)^{-1}$, 
	and $V_{\pm} := \max\{\pm V, 0\}$.
\end{theorem}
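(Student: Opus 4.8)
The plan is to run the local method of~\cite{LunSol-13a}: control $\langle\Psi,\hat T_\alpha\Psi\rangle$ from below by combining a \emph{local exclusion principle} and a \emph{local uncertainty principle} across a box-decomposition of the plane. The only new input—and the source of the improved constant—is that the $R=0$ case of Theorem~\ref{thm:long-range} feeds a quantitatively stronger exclusion principle into this machine, replacing the factor $\alpha_N^2$ by $c\,(j_{\alpha_N}')^2$. By the density of $C^\infty_c(\R^{2N}\setminus\bDelta)\cap L^2_\sym(\R^{2N})$ in $\domD{\alpha, 0}$ it suffices to argue for smooth $\Psi$.

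First I would record the local exclusion principle (this is Lemma~\ref{lem:LocalExclusionLR}, obtained by integrating the $R=0$ case of Theorem~\ref{thm:long-range} over squares and discarding the first, non-negative, term on its right-hand side): there is a universal $c_1>0$ such that for every square $Q\subset\R^2$ of side $\ell$ and every $\Psi\in\domD{\alpha, 0}$,
\begin{equation*}
	\int_{\R^{2N}}\sum_{j:\,\bx_j\in Q}|D_j\Psi|^2\,d\sx \ \ge\ \frac{c_1\,(j_{\alpha_N}')^2}{\ell^2}\Bigl(\int_Q\varrho_\Psi\,d\bx-1\Bigr)_+ .
\end{equation*}
The gain over~\cite[Lemma~8]{LunSol-13a} is exactly the upgrade of the bound $g(\nu,\gamma)\ge\nu$ to $g(\nu,\gamma)\sim j_\nu'\ge\sqrt{2\nu}$ as $\gamma\to0$, which is the regime relevant at $R=0$. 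Next I would record the local uncertainty principle: by the diamagnetic inequality (Lemma~\ref{lem:diamagnetic}) together with the pointwise Hoffmann--Ostenhof estimate relating $|\nabla_j|\Psi||$ to $\nabla\sqrt{\varrho_\Psi}$, and a Gagliardo--Nirenberg--Sobolev inequality on the square $Q$ with its low-frequency boundary correction, there are universal $c_2,c_3>0$ with
\begin{equation*}
	\int_{\R^{2N}}\sum_{j:\,\bx_j\in Q}|D_j\Psi|^2\,d\sx \ \ge\ \frac{c_2}{\int_Q\varrho_\Psi\,d\bx}\int_Q\varrho_\Psi^2\,d\bx \ -\ \frac{c_3}{\ell^2}\int_Q\varrho_\Psi\,d\bx .
\end{equation*}

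I would then assemble these exactly as in~\cite[proof of Theorem~1]{LunSol-13a}: tile $\R^2$ by a grid of squares of a side $\ell$ to be chosen, and since $\sum_Q\1_{\bx_j\in Q}=1$ write $\langle\Psi,\hat T_\alpha\Psi\rangle=\sum_Q\int_{\R^{2N}}\sum_{j:\bx_j\in Q}|D_j\Psi|^2\,d\sx$. In each cell take a convex combination of the two displayed bounds, using Cauchy--Schwarz to compare $\int_Q\varrho_\Psi^2$ with $(\int_Q\varrho_\Psi)^2/|Q|$ in the cells with few particles—where exclusion is silent—and using the exclusion surplus to absorb the negative correction $-c_3\ell^{-2}\int_Q\varrho_\Psi$ in the denser cells. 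Summing over the grid and optimizing $\ell$ against the local density yields $\langle\Psi,\hat T_\alpha\Psi\rangle\ge C(j_{\alpha_N}')^2\int_{\R^2}\varrho_\Psi^2$, i.e.~\eqref{eq:intro-LT-kinetic} (note $(j_{\alpha_N}')^2\le(j_1')^2$ is bounded, so the universal $C$ swallows everything else). The potential bound~\eqref{eq:intro-LT-potential} then follows by the standard duality argument: since $\hat V$ acts as the one-body multiplication $\sum_j V(\bx_j)$ we have $\langle\Psi,\hat V\Psi\rangle=\int_{\R^2}V\varrho_\Psi\ge-\int_{\R^2}V_-\varrho_\Psi\ge-\|V_-\|_{L^2}\|\varrho_\Psi\|_{L^2}$, so with~\eqref{eq:intro-LT-kinetic} and Young's inequality,
\begin{equation*}
	\langle\Psi,(\hat T_\alpha+\hat V)\Psi\rangle \ \ge\ C(j_{\alpha_N}')^2\|\varrho_\Psi\|_{L^2}^2-\|V_-\|_{L^2}\|\varrho_\Psi\|_{L^2} \ \ge\ -\frac{1}{4C(j_{\alpha_N}')^2}\|V_-\|_{L^2}^2 ,
\end{equation*}
which is~\eqref{eq:intro-LT-potential} with $C'=(4C)^{-1}$.

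The step I expect to be the main obstacle is the box-decomposition/optimization itself: one must handle cells in which the density concentrates inside a subregion holding fewer than two particles, where the exclusion principle gives nothing, so the local uncertainty principle is genuinely needed, and its negative boundary correction has to be absorbed either by recursively subdividing such cells down to unit occupancy or by the exclusion surplus of the denser cells; this balancing is precisely what forces the constant to be $(j_{\alpha_N}')^2$ (degenerating as $\alpha_N\to0$) rather than a universal number. A secondary point, already dealt with in establishing Lemma~\ref{lem:LocalExclusionLR}, is to check that the annular support $\1_A$ and the free parameter $\kappa$ of Theorem~\ref{thm:long-range} can be arranged (with $\kappa=0$, $c(\kappa)\equiv1$ in the ideal case) so that, after integrating over a box, the surviving Bessel quantity $g(\alpha_N,\gamma)^2$ is comparable to $(j_{\alpha_N}')^2$ up to a universal factor.
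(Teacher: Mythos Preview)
Your proposal is correct and matches the paper's approach: the paper does not give a self-contained proof but simply states that one substitutes the improved local exclusion principle of Lemma~\ref{lem:LocalExclusionLR} (with constant $(j_{\alpha_N}')^2$) for \cite[Lemma~8]{LunSol-13a} inside the existing box-decomposition machinery of \cite[Theorems~1 and~11]{LunSol-13a}, and this is exactly what you outline. One small caution on wording: the covering in \cite{LunSol-13a} is not a fixed grid of common side $\ell$ but an adaptive (stopping-time/dyadic) decomposition in which each square is subdivided until its mass is of order one, which you acknowledge only in your obstacle discussion; since that adaptivity is what makes the balancing of exclusion and uncertainty go through, you should state it up front rather than as a uniform~$\ell$ to be optimized.
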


The question concerning optimality of the above bounds with respect to their dependence
on $\alpha$ in the dilute limit is a very difficult one, and will be 
discussed elsewhere~\cite{Lundholm-16}. 
However, we would like to point out that it was suggested 
in~\cite{LunSol-13b} (see also~\cite{Lundholm-13}) that a class of
FQHE-inspired trial states with a clustering behavior
could minimize the energy for certain fractions, 
and here we find additional support for this claim; 
cf.\ Figure~\ref{fig:rho/r_plot_clusters} below.
Furthermore, there was in~\cite{LunSol-13b}, 
then based on the weaker bounds of~\cite{LunSol-13a}, 
a slight inconsistency in the 
behavior with respect to odd-numerator $\alpha$ which is remedied by the improved
bounds presented here.

The structure of the paper is as follows. We lay the foundations in
Sections~\ref{sec:short-range} and \ref{sec:relative}
by proving the short-range
bound of Lemma~\ref{lem:short-range}, and the basis for the long-range
bound in the form of a relative magnetic Hardy inequality with symmetry.
Then the main body of the paper, Section~\ref{sec:long-range}, 
is concerned with the application of this Hardy
inequality to prove the long-range bound of 
Theorem~\ref{thm:long-range}.
This turns out to become surprisingly challenging 
in the extended case due to the oscillatory
nature of an effective potential,
and in fact takes up the largest part of the proofs section. 
In Section~\ref{sec:exclusion} the long- and short-range bounds are applied 
to prove local exclusion principles for anyons, 
and finally in Section~\ref{sec:gas} we discuss the homogeneous anyon
gas in the thermodynamic limit.

\medskip\noindent\textbf{Acknowledgments.} 
The authors would like to thank Michele Correggi, Ari Laptev, Fabian Portmann, 
Nicolas Rougerie, Robert Seiringer, Jan Philip Solovej and Aron Wennman for discussions. 
Thanks are also due to the editor and the anonymous referee whose comments have
improved the manuscript.
Financial support from the Swedish Research Council, 
grant nos.\ 2012-3864 (S.\ L.) and {2013-4734} (D.\ L.), 
is gratefully acknowledged.

% ------------------  SHORT-RANGE  --------------------

\section{Short-range interaction}\label{sec:short-range}

The short-range interaction given by Lemma~\ref{lem:short-range} 
comes as a simple consequence of the well-known magnetic inequality
(see e.g.~\cite[Lemma~1.4.1]{FouHel-10} or~\cite[p.\ 171]{BalEvLew-15})
\begin{equation} \label{eq:magnetic inequality}
	\int_{\Omega} | (\nabla + i\bA )u |^2 \ge \pm \int_{\Omega} \curl \bA \, |u|^2, 
	\qquad u \in H^1_0(\Omega), \quad \Omega \subseteq \R^2.
\end{equation}
This inequality also follows directly from integrating the 
straightforward identity
$$
	|(\nabla + i\bA)u|^2 
	= |( (\partial_1+iA_1) \pm i(\partial_2+iA_2) )u|^2 
		\pm \curl \bJ[u] \pm \bA \cdot \nablap |u|^2, 
$$
with $\bJ[u] := \frac{i}{2}( u\nabla\bar{u} - \bar{u}\nabla u )$.

\begin{proof}[Proof of Lemma~\ref{lem:short-range}]
	Splitting the coordinates according to 
	$\sx = (\bx_j;\sx')$ for each particle $j$, 
	we write for the left-hand side of \eqref{eq:short-range}
	\begin{multline*}
		\sum_{j=1}^N \int_{\R^{2(N-1)}} \int_{\R^2} 
			|( \nabla_j + i\alpha\bA_j(\bx_j) )\Psi(\bx_j;\sx')|^2 \, d\bx_j d\sx' \\
		\ge \sum_{j=1}^N \int_{\R^{2(N-1)}} \int_{\R^2} 
			2\pi|\alpha| \sum_{k \neq j} \frac{\1_{B_R(\bx_k)}}{\pi R^2}(\bx_j)
			|\Psi(\bx_j;\sx')|^2 \, d\bx_j d\sx', 
	\end{multline*}
	where we used the expression~\eqref{eq:many-anyon mag field}
	for $\curl \alpha\bA_j(\bx_j)$
	in~\eqref{eq:magnetic inequality}.
	We have thus obtained the right-hand side of~\eqref{eq:short-range}.
\end{proof}

We note that the Dirichlet boundary conditions on $\Psi$ respectively $u$ are in fact 
necessary here since the bound \eqref{eq:magnetic inequality} 
is otherwise invalid, 
as can be seen by taking 
$\bA = \beta \bA_0$, $\beta \to 0$, 
and the trial state $u=1$.
Similarly, had we considered the inequality \eqref{eq:short-range}
locally on a small enough domain (compared to $R$) we would have 
found a contradiction as $\alpha \to 0$, 
unless Dirichlet boundary conditions are enforced.

% ------------------  Relative pairwise Hardy  --------------------

\section{Relative magnetic inequality}\label{sec:relative}

For the long-range statistical interaction between anyons
we take the same starting point as in~\cite{LunSol-13a}, 
namely, the core observation is the validity of a relative magnetic
Hardy inequality which respects the symmetry of the anyon problem.
Non-symmetric versions of this inequality were introduced and studied
in~\cite{LapWei-98} (one-particle version) 
and in~\cite[Theorem~2.7]{HofLapTid-08} (many-particle version); 
see also~\cite{MelOuhRoz-04},~\cite[Chapter~5.5]{BalEvLew-15} and references therein.
However, as was pointed out in~\cite{LunSol-13a}, 
symmetry is crucial in order to obtain 
non-trivial bounds in the many-particle limit.
We formulate the following version of the inequality quite generally.

Initially, consider a magnetic field 
$b\colon B_{R}(0) \to \R$ defined on a disk of radius $R>0$, 
and assumed to be determined by a suitable continuous vector potential
$\ba\colon B_{R}(0) \to \R^2$ as $b = \curl \ba$.
Then the normalized flux inside a smaller disk of radius 
$r \in [0, R)$ is given by
\begin{equation} \label{eq:NormalizedFluxDef}
	\hat\Phi(r) := \frac{1}{2\pi} \int_{B_r(0)} b 
	= \frac{1}{2\pi} \int_{\partial B_r(0)} \ba \cdot d\br'.
\end{equation}
Note that if we were only given $\ba\colon \Omega \to \R^2$ 
on some annulus
$\Omega = B_R(0) \setminus \bar B_{R'}(0)$, with $0<R'<R$, 
i.e.\ if we only knew $b$ on $\Omega$
(so that only the right-hand side of \eqref{eq:NormalizedFluxDef} 
makes sense for $r \in (R', R)$), 
then $b$ can nevertheless be extended (non-uniquely) to the full interior
$B_{R'}(0)$, for example by taking
$$
	b|_{B_{R'}(0)} = \frac{2\pi\hat\Phi(R')}{\pi (R')^2}
	\qquad \text{or} \qquad b|_{B_{R'}(0)} = 2\pi\hat\Phi(R') \delta_0, 
$$
with $\Phi(R')$ here defined in terms of $\ba$ as in~\eqref{eq:NormalizedFluxDef} 
(note that we are not considering extending~$\ba$).
Then both expressions for $\hat\Phi(r)$ in \eqref{eq:NormalizedFluxDef} 
are well defined and agree for all $r \in (R', R)$.
We also note that if the magnetic field is antipodal-symmetric on $\Omega$, 
i.e.\ $b(-\br) = b(\br)$ for all $\br \in \Omega$, 
then the corresponding potential must (if gauge-normalized correctly)
be antipodal-\emph{anti}symmetric, 
$\ba(-\br) = -\ba(\br)$, $\br \in \Omega$, 
and vice versa.

\begin{lemma}[Magnetic Hardy inequality with symmetry] 
	\label{lem:magnetic-Hardy-symm}
	Let $\Omega = B_{R_2}(0) \setminus \bar{B}_{R_1}(0)$, with 
	$R_2 > R_1 \ge 0$, 
	be an annular domain in $\R^2$. 
	Let $\ba\colon \Omega \to \R^2$
	be a continuous vector potential corresponding to a magnetic field $b$, 
	$b|_\Omega = \curl \ba$, 
	that is defined	
	on the entire disk $B_{R_2}(0)$
	such that the normalized flux $\hat\Phi(r)$ given by \eqref{eq:NormalizedFluxDef}
	is finite for all $r \in (R_1, R_2)$.
	Furthermore, assume that $\ba$ is antipodal-\textbf{antisymmetric}
	resp.\ $b$ is antipodal-\textbf{symmetric} on $\Omega$, i.e.\
	$\ba(-\br) = -\ba(\br)$ resp.\ $b(-\br) = b(\br)$ for $\br \in \Omega$.
	
	Then, for any antipodal-\textbf{symmetric} $u\in C^\infty(\Omega)$, i.e.\ with $u(-\br) = u(\br)$ for all $\br \in \Omega$, 
	\begin{equation} \label{eq:magnetic-Hardy-boson}
		\int_{\Omega} 
		|(-i\nabla + \ba)u|^2 \, d\br
		\ \ge \ \int_{\Omega} \biggl(
			\bigl|\partial_r |u|\bigr|^2 +
			\inf_{k \in \Z} \bigl| \hat\Phi(r) - 2k \bigr|^2 
			\frac{|u|^2}{r^2} 
			\biggr) d\br.
	\end{equation}

	Alternatively, if instead $u$ is antipodal-\textbf{antisymmetric}, 
	$u(-\br) = -u(\br)$ for all $\br \in \Omega$, then
	\begin{equation} \label{magnetic-Hardy-fermion}
		\int_{\Omega} 
		|(-i\nabla + \ba)u|^2 \, d\br
		\ \ge \ \int_{\Omega} \biggl(
			\bigl|\partial_r |u|\bigr|^2 +
			\inf_{k \in \Z} \bigl| \hat\Phi(r) - (2k+1) \bigr|^2 
			\frac{|u|^2}{r^2} 
			\biggr) d\br.
	\end{equation}
\end{lemma}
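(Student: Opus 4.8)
The plan is to reduce the two-dimensional magnetic form to a family of one-dimensional radial problems by separating variables in polar coordinates $\br = (r\cos\theta, r\sin\theta)$, and then to exploit the antipodal symmetry to restrict the available angular momenta. First I would write $u(r, \theta) = \sum_{m \in \Z} u_m(r) e^{im\theta}$; the antipodal map $\br \mapsto -\br$ corresponds to $\theta \mapsto \theta + \pi$, so antipodal-symmetry of $u$ forces $u_m \equiv 0$ for all odd $m$, while antipodal-antisymmetry forces $u_m \equiv 0$ for all even $m$. This is exactly the mechanism that will produce the shifts by $2k$ versus $2k+1$ in the two cases. Next I would decompose the vector potential. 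Since $\ba$ is continuous on the annulus and antipodal-antisymmetric, I would split it into its (closed) radial-symmetric mean part and a remainder: writing the angular component's average over circles, the key object is the normalized flux $\hat\Phi(r)$ from \eqref{eq:NormalizedFluxDef}, which by Stokes equals $\frac{1}{2\pi}\oint_{\partial B_r} \ba\cdot d\br'$. Modulo a gauge transformation by a (single-valued, because of the symmetry constraint on the spectrum of allowed $m$) function, one can replace $\ba$ by the purely azimuthal potential $\hat\Phi(r)\, r^{-1}\,\hat{\theta}$ without changing the form $\int_\Omega |(-i\nabla + \ba)u|^2$; the gauge function is single-valued precisely because the relevant winding numbers are even (resp. odd) and the flux is measured modulo the corresponding lattice.

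With $\ba$ replaced by $\hat\Phi(r) r^{-1}\hat\theta$, the magnetic form diagonalizes over the Fourier modes:
\begin{equation}
\int_\Omega |(-i\nabla + \ba)u|^2 \, d\br
= 2\pi \sum_{m} \int_{R_1}^{R_2} \left( |u_m'(r)|^2 + \frac{(m + \hat\Phi(r))^2}{r^2} |u_m(r)|^2 \right) r\, dr.
\end{equation}
For the symmetric case $m$ ranges over $2\Z$, so for each $r$ one has $(m+\hat\Phi(r))^2 \ge \inf_{k\in\Z} |\hat\Phi(r) - 2k|^2$ (after renaming $m = -2k$); for the antisymmetric case $m$ ranges over $2\Z+1$ and one gets $\inf_{k\in\Z}|\hat\Phi(r) - (2k+1)|^2$. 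Dropping the cross terms that are nonnegative and using $\sum_m |u_m'(r)|^2 \cdot 2\pi r = \int_{\partial B_r} |\partial_r u|^2 \ge \int_{\partial B_r} |\partial_r |u||^2$ (the diamagnetic pointwise inequality $|\partial_r |u|| \le |\partial_r u|$, valid a.e.), while $2\pi\sum_m |u_m(r)|^2 = \int_{\partial B_r}|u|^2$, one recovers exactly the right-hand sides of \eqref{eq:magnetic-Hardy-boson} and \eqref{magnetic-Hardy-fermion} after integrating in $r$ against $r\,dr$ and applying Fubini.

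The main obstacle I anticipate is justifying the gauge reduction rigorously when $R_1 = 0$ and when $b$ (hence $\ba$) is merely given on the annulus but $\hat\Phi(r)$ need not extend continuously to $r = R_1$: one must be careful that the gauge function $\exp(i\int \ldots)$ used to strip the angular-average of $\ba$ is well defined and smooth on $\Omega$, which is where antipodal symmetry enters decisively (it guarantees the residual phase is $2\Z$- resp. $(2\Z+1)$-valued and so can be absorbed into the allowed Fourier lattice), and that the boundary terms arising from integration by parts vanish — here the hypothesis $u \in C^\infty(\Omega)$ together with finiteness of $\hat\Phi(r)$ for $r\in(R_1,R_2)$ is used, and one may first prove the inequality for $u$ supported away from the two boundary circles and then pass to the limit by monotone/dominated convergence, noting both sides are lower semicontinuous. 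A secondary technical point is the interchange of the angular sum with the radial integral and the $r\to R_1^+$ behaviour of $\hat\Phi$, which I would handle by proving the bound on each compact sub-annulus $B_{R_2-\eps} \setminus \bar B_{R_1+\eps}$ and letting $\eps \to 0$.
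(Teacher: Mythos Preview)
Your overall plan (polar coordinates, restrict angular momenta by parity, then bound mode by mode) is the same as the paper's, but the key technical step you propose---a global gauge transformation taking $\ba$ to the purely azimuthal potential $\hat\Phi(r)\,r^{-1}\hat\theta$---does not exist in general. For $\ba - \hat\Phi(r)\,r^{-1}\hat\theta$ to be a gradient one needs it to be curl-free, i.e.\ $b(\br) = \tfrac{1}{r}\partial_r\hat\Phi(r)$; but $\tfrac{1}{r}\hat\Phi'(r)$ is the \emph{angular average} of $b$ on the circle of radius $r$, which equals $b$ only when $b$ is radially symmetric. Antipodal symmetry of $b$ (e.g.\ $b(r,\theta)=1+\cos 2\theta$) is strictly weaker than this, so the $1$-form you want to gauge away is not closed, the phase $\chi$ you describe does not exist, and the diagonal formula $\sum_m\int(|u_m'|^2 + (m+\hat\Phi)^2 r^{-2}|u_m|^2)r\,dr$ in the standard Fourier basis $e^{im\theta}$ is false: the angular operator genuinely mixes the $e^{im\theta}$.

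The fix, and what the paper actually does, is to skip any global gauge and instead diagonalize the angular operator $K_\varphi(r)=-i\partial_\varphi + r\,a_\varphi(r,\varphi)$ on $L^2(\mathbb{S}^1)$ \emph{separately for each fixed $r$}. Its spectrum is $\{-k+\hat\Phi(r):k\in\Z\}$ with eigenfunctions $\psi_k(r,\varphi)\propto \exp\bigl(i\varphi\lambda_k(r)-ir\int_0^\varphi a_\varphi(r,\eta)\,d\eta\bigr)$, and the antipodal antisymmetry of $\ba$ (hence $\pi$-periodicity of $a_\varphi$) gives $\psi_k(r,\varphi+\pi)=(-1)^k\psi_k(r,\varphi)$, so the parity restriction on $k$ follows exactly as you wanted. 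Parseval in this $r$-dependent basis plus the radial diamagnetic inequality then gives the result without any gauge step or boundary/limit arguments.
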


\begin{proof}
	We apply the techniques from~\cite{LapWei-98}, 
	with symmetry taken into account as in~\cite[Lemma~2]{LunSol-13a}.
	We start by letting $h[\ba]$ denote the magnetic quadratic form on $\Omega$, 
	$$
		h[\ba](u) := \int_\Omega |(-i\nabla+\ba)u|^2\, d\br
		= \int_{R_1}^{R_2}\!\!\int_0^{2\pi} \bigl(|(-i\partial_r + a_r)u|^2 
			+ r^{-2}|(-i \partial_\vphi + r a_\vphi) u|^2\bigr) r\, d\vphi dr, 
	$$
	where $a_r := r^{-1} \br \cdot \ba$ and 
	$a_\vphi := r^{-1} \br^\perp \cdot \ba$.
	For the first term above we use the diamagnetic inequality 
	$|(\partial_r + ia_r)u| \ge \bigl|\partial_r |u|\bigr|$, 
	while for the second we can for each $r \in (R_1, R_2)$
	explicitly diagonalize the self-adjoint operator
	$K_\vphi(r) := -i\partial_\vphi + r a_\vphi(r, \vphi)$ acting on $L^2(\S^1)$. 
	The corresponding eigenvalues and normalized eigenfunctions of this 
	operator are given by:
	\begin{align}
		\lambda_k(r) &= -k + (2\pi)^{-1} r \int_0^{2\pi}a_\vphi(r, \vphi) \, d\vphi = -k + \hat\Phi(r), \\
		\psi_k(r, \vphi) &= (2\pi)^{-1/2}e^{i\left( \vphi \lambda_k(r)-r\int_0^\vphi a_\vphi(r, \eta)\, d\eta \right)}, 
	\end{align}
	for $k\in \Z$. 
	Because of the antipodal-antisymmetry of $\ba$, 
	implying antipodal-symmetry of $a_\vphi$, 
	i.e.\ $a_\vphi(r, \vphi)=a_\vphi(r, \vphi+\pi)$, 
	we have that
	\begin{align*}
		\psi_k(r, \vphi+\pi) 
		&= (-1)^k \psi_k(r, \vphi).
	\end{align*}
	Therefore, only the even/odd terms will contribute 
	upon expanding $u \in L^2_{\sym/\asym}(\Omega)$ as 
	\begin{equation}
		u(r, \vphi) 
		= \sum_{k\in \Z} u_k(r)\psi_k(r, \vphi)
		= \sum_{k\in \Z_{e/o}} u_k(r)\psi_k(r, \vphi), 
	\end{equation}
	with $\Z_e := 2\Z$ and $\Z_o := 2\Z+1$.
	
	By the above remarks and Parseval's identity we find that 
	\begin{align}
		h[\mathbf{a}](u) &= \int_{R_1}^{R_2}\!\! \int_0^{2\pi} |(\partial_r + ia_r)u|^2 \, r\, d\vphi dr 
			+ \int_{R_1}^{R_2} \sum_{k\in \Z_{e/o}} |\lambda_k(r)|^2 |u_k(r)|^2 \, r^{-1}\, dr \\
		&\ge \int_{R_1}^{R_2}\!\! \int_0^{2\pi} \bigl|\partial_r |u|\bigr|^2 \, r\, d\vphi dr 
			+ \int_{R_1}^{R_2} \inf_{k\in \Z_{e/o}} |\lambda_k(r)|^2 \sum_{k\in \Z_{e/o}} |u_k(r)|^2 \, r^{-1}\, dr \\
		&= \int_{R_1}^{R_2}\!\! \int_0^{2\pi} \bigl(\bigl|\partial_r |u|\bigr|^2 
			+ r^{-2}\inf_{k\in \Z_{e/o}} |\lambda_k(r)|^2 |u|^2\bigr) r\, d\vphi dr.
	\end{align}
	Thus the estimate we are left with is
	\begin{equation}
		h[\mathbf{a}](u)
		\ge \int_{R_1}^{R_2}\!\! \int_0^{2\pi} \bigl(\bigl|\partial_r |u|\bigr|^2 
			+ r^{-2}\inf_{k\in \Z_{e/o}}\bigl|\hat\Phi(r)-k\bigr|^2 |u|^2\bigr) r\, d\vphi dr, 
	\end{equation}	
	which proves the lemma.
\end{proof}

The above lemma not only extends the inequality of~\cite[Lemma~2]{LunSol-13a}
to more general (extended) magnetic fields, but also improves it by
keeping the radial derivative. This turns out to be crucial in order to
obtain an improved dependence on $\alpha$ in the dilute limit.
We note that in~\cite{HofLapTid-08} the radial derivatives
were effectively discarded in two dimensions.

% ------------------  LONG-RANGE ANALYSIS  --------------------

\section{Analysis of the long-range interaction}\label{sec:long-range}

We set out to prove Theorem~\ref{thm:long-range}, 
and first note that by the remarks in Section~\ref{sec:intro-prelims}
we may assume without loss of generality that 
$\Psi \in C^\infty_c(\R^{2N} \setminus \bDelta)$.
Proceeding as was done in~\cite{LunSol-13a} for the non-extended case $R=0$, 
we start from the expression for the kinetic energy on a domain
$\Omega \subseteq \R^2$, 
$$
	\int_{\Omega^n} \sum_{j=1}^n |D_j \Psi|^2\, d\sx,  
	\qquad
	\text{where}
	\qquad
	D_j = -i \nabla_{\bx_j} 
		+ \alpha \sum_{\substack{k=1\\k \neq j}}^N \frac{(\bx_j-\bx_k)^\perp}{|\bx_j-\bx_k|_R^2}, 
$$
and we are considering the first $n$ particles $\bx_{j=1, \ldots, n} \in \Omega$
while the remaining $N-n$ ones may reside anywhere in $\R^2$.
Using that, for any $\sz = (\bz_j)_j \in \C^{n}$, 
\begin{equation} \label{eq:abs-rel-identity}
	\sum_{j=1}^n |\bz_j|^2 
		= \frac{1}{n} \sum_{1 \le j<k \le n} |\bz_j - \bz_k|^2
		+ \frac{1}{n} \biggl| \sum_{j=1}^n \bz_j \biggr|^2, 
\end{equation}
we have that
\begin{align}
	\int_{\Omega^n} \sum_{j=1}^n |D_j \Psi|^2\, d\sx 
	=& \frac{1}{n} \sum_{1\le j<k \le n} \int_{\Omega^{n-2}}\int_{\Omega^2} |(D_j-D_k)\Psi|^2 \, d\bx_j d\bx_k \prod_{l\neq j, k}d\bx_l \\
	&+ \frac{1}{n} \int_{\Omega^n} \biggl| \sum_{j=1}^n D_j\Psi \biggr|^2 d\sx, 
	\label{eq:DerivativeSplit}
\end{align}
where we also note that the magnetic field present in the 
last (total momentum) term simplifies to 
$$
	\sum_{j=1}^n D_j = -i \sum_{j=1}^n \nabla_{\bx_j} 
		+ \alpha \sum_{j=1}^n \sum_{k=n+1}^N \frac{(\bx_j-\bx_k)^\perp}{|\bx_j-\bx_k|_R^2}, 
$$
by the antisymmetry of the vector potential, and thus vanishes if $n=N$.

We now study the inner integral in \eqref{eq:DerivativeSplit} 
for the $j<k$ particle pair, and introduce relative coordinates:
$$
		\br_{jk} := (\bx_j-\bx_k)/2, \qquad
		\bX_{jk} := (\bx_j+\bx_k)/2, \qquad
		r_{jk} := |\br_{jk}|, 
$$
giving
\begin{multline}\label{eq:RelativeMagneticEnergy}
	\int_{\Omega^2} |(D_j-D_k)\Psi|^2 \, d\bx_jd\bx_k \\
	\begin{aligned}
	&= \int_{\Omega^2}\Bigl|\Bigl(-i(\nabla_{\bx_j}-\nabla_{\bx_k})
	+\alpha \sum_{l\neq j} \frac{(\bx_j-\bx_l)^\perp}{|\bx_j-\bx_l|_R^2}
	-\alpha\sum_{l\neq k} \frac{(\bx_k-\bx_l)^\perp}{|\bx_k-\bx_l|_R^2}\Bigr)\Psi\Bigr|^2 d\bx_jd\bx_k\\
	&= \int_{\Omega^2}\Bigl|\Bigl(-i\nabla_{\br_{jk}}
	+\alpha \ba_{0}(\br_{jk}) + \alpha \sum_{l\neq j, k}(\ba_l(\bX_{jk}, \br_{jk})-\ba_l(\bX_{jk}, -\br_{jk}))
		\Bigr)\Psi\Bigr|^2 d\bx_jd\bx_k, 
	\end{aligned}
\end{multline}
where the relative vector potentials are given by
\begin{align}\label{eq:EffectivMagVecPotential}
	\ba_0(\br):=\frac{4 \br^\perp}{|2\br|^2_R} = \frac{\br^\perp}{|\br|^2_{R/2}} \quad \textrm{and} \quad 
	\ba_l(\bX, \br):=\frac{(\bX+\br-\bx_l)^\perp}{|\bX+\br-\bx_l|_R^2}.
\end{align}
Hence, for any positions
$\sx' = (\bx_1, \ldots, \bx\!\!\!\!\diagup\!_j, \ldots, \bx\!\!\!\!\diagup\!_k, \ldots, \bx_N) \in \R^{2(N-2)}$
of the other particles and for each center-of-mass coordinate
$\bX = \bX_{jk} \in \Omega$ of the particle pair, 
we observe that the resulting magnetic vector potential 
$$
	\ba(\br) := \alpha \ba_{0}(\br) + \alpha \sum_{l\neq j, k}(\ba_l(\bX, \br)-\ba_l(\bX, -\br))
$$
is antipodal-antisymmetric on the relative disk 
$$
	\Omega_\bX := B_{\delta(\bX)}(0), \qquad
	\delta(\bx) := \dist(\bx, \partial\Omega), 
$$
with a corresponding antipodal-symmetric magnetic field
\begin{equation}\label{eq:Rel_MagField}
	b := \curl\ba = 2\pi\alpha \biggl(
		\frac{\1_{B_{R/2}(0)}}{\pi(R/2)^2} + \sum_{l\neq j, k} 
		\biggl( \frac{\1_{B_R(\bx_l-\bX)}}{\pi R^2} + \frac{\1_{B_R(-(\bx_l-\bX))}}{\pi R^2} \biggr)
		\biggr)
\end{equation}
(given here for $R>0$).
Also, the smooth function defined relative to $\bX$ and $\sx'$ by
$$
	u(\br) := \Psi(\bx_1, \bx_2, \ldots, \bx_j=\bX+\br, \ldots, \bx_k=\bX-\br, \ldots, \bx_n, \ldots, \bx_N)
$$
is antipodal-symmetric on $\Omega_\bX$.
Hence, we may apply the relative magnetic Hardy inequality of
Lemma~\ref{lem:magnetic-Hardy-symm} 
(for $R=0$ we split into concentric annuli avoiding the $\bx_l$ as in~\cite[Theorem~4]{LunSol-13a}) 
to obtain that
\begin{align}
	\int_{\Omega^2} |(D_j-D_k)\Psi|^2 \, d\bx_jd\bx_k
	&\ge \int_{\Omega} \int_{\Omega_\bX} |(-i\nabla + \ba)u|^2 \, 4\, d\br d\bX\\
	&\ge \int_{\Omega} \int_{\Omega_\bX} \biggl( 
		\bigl|\partial_r |u|\bigr|^2 + \frac{\rho(r)}{r^2}|u|^2
		\biggr) 4\, d\br d\bX, 
	\label{eq:RelativeScalarEnergy}
\end{align}
where
\begin{equation}\label{eq:def_rho}
	\rho(r) := \inf_{q \in \Z} \bigl| \hat\Phi(r) - 2q \bigr|^2, 
\end{equation}
and $\hat\Phi(r)$ here, and in what follows, denotes the flux through the disk $B_r(\bX)$ of the magnetic field~\eqref{eq:Rel_MagField}:
\begin{equation}
	\hat\Phi(r) = \frac{1}{2\pi} \int_{\partial B_r(0)} \ba \cdot d\br'
	= \frac{1}{2\pi} \int_{B_r(0)} b.
\end{equation} 
Note that the magnetic field is induced by the particle configuration 
$(\sx'; \bx_j, \bx_k)$, and the only dependence which remains after fixing 
$\sx'$ in~\eqref{eq:DerivativeSplit} and $\bX = \bX_{jk}$ 
in~\eqref{eq:RelativeScalarEnergy} 
is that of the relative coordinate $\br = \br_{jk}$. 
With the remaining particle positions expressed relative to the coordinate $\bX$, 
$\by_l:= \bx_l-\bX$, 
we can write the normalized flux $\hat \Phi(r)$ as:
\begin{equation}\label{eq:def_anyonPhi}
	\hat\Phi(r) 
	= \alpha \biggl( \int_{B_r(0)} \frac{\1_{B_{R/2}(0)}}{\pi(R/2)^2}
		+ 2\sum_{l\neq j, k} \int_{B_r(0)} \frac{\1_{B_R(\by_l)}}{\pi R^2} \biggr). 
\end{equation}
Hence $\rho(r)$ depends only on the arbitrary but fixed configuration $(\by_l)_l\in \R^{2(N-2)}$.
 
By the above discussion, the problem of bounding the kinetic energy~\eqref{eq:DerivativeSplit} 
has been reduced to studying the radial Schr\"odinger operator 
in~\eqref{eq:RelativeScalarEnergy} with explicit
\emph{scalar} interaction potential $\rho(r)/r^2$. 
This potential is essentially an inverse-square repulsion, modulated 
with a coupling strength $\rho(r)$ which measures how well the 
normalized flux $\hat\Phi(r)$ stabilizes 
away from the even integers.
In the dilute situation the flux and hence also $\rho$ would for the most part be constant, 
however we could have significant oscillations of $\rho(r)$ between one and zero 
whenever many particles
are enclosed over short differences in the radial variable $r$
(see Figure~\ref{fig:rho/r_plot}).
Controlling these oscillations turns out to be a significant challenge, 
and the entire remainder of this section shall be concerned with proving 
the following theorem, 
from which Theorem~\ref{thm:long-range} follows.

\begin{theorem}\label{thm:MainRadialBound}
	For any $0 \le R \le L/6$, $\kappa \in [0, 1]$, 
	$u \in W^{1, 2}([R, L], rdr)$,
	and $\rho$ defined in~\eqref{eq:def_rho}-\eqref{eq:def_anyonPhi}  
	with $(\by_l)_l \in \R^{2(N-2)}$ arbitrary, 
	we have that
	$$
		\int_{R}^L \biggl( |u'|^2 + \frac{\rho(r)}{r^2}|u|^2 \biggr) r\, dr
		\ge \int_{R}^{L} \biggl( (1-\kappa)|u'|^2 + c(\kappa)^2\frac{\alpha_N^2}{r^2}\1_{[3R, L-3R]}|u|^2 \biggr) r\, dr, 
	$$
	with $c(\kappa) = 4.7\cdot 10^{-4}\kappa/(1+2\kappa)$.
	In the case $R=0$ we may take $c(\kappa) \equiv 1$.
\end{theorem}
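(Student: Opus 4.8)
The plan is to prove Theorem~\ref{thm:MainRadialBound} as a one‑dimensional Hardy‑type inequality, exploiting that the effective potential $\rho(r)/r^2$, although it can oscillate violently, is generated by a \emph{monotone} flux and that the set on which it fails to be bounded below by $\alpha_N^2$ is confined to thin radial windows whose size is fixed by the extension $R$. First I would reduce: replacing $\alpha$ by $-\alpha$ leaves both $\rho$ and $\alpha_N$ invariant, so assume $\alpha>0$, whence $r\mapsto\hat\Phi(r)$ is non‑decreasing. Subtracting $(1-\kappa)\int_R^L|u'|^2\,r\,dr$ reduces the claim to
\[
	\int_R^L\Bigl(\kappa|u'|^2+\frac{\rho(r)}{r^2}|u|^2\Bigr)r\,dr\ \ge\ c(\kappa)^2\alpha_N^2\int_{3R}^{L-3R}\frac{|u|^2}{r^2}\,r\,dr,
\]
and passing to the logarithmic variable $t=\ln r$, $v(t):=u(e^t)$, with $I:=[\ln R,\ln L]\supseteq J:=[\ln(3R),\ln(L-3R)]$, this becomes
\[
	\int_I\bigl(\kappa|v'|^2+\rho(e^t)|v|^2\bigr)\,dt\ \ge\ c(\kappa)^2\alpha_N^2\int_J|v|^2\,dt.
\]
Since the left side only decreases and the right side is unchanged under $v\mapsto|v|$, I may also take $v\ge0$ smooth, by density.

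Next I would unravel the structure of $\rho$. Each time one of the disks $B_R(\by_l)$ becomes entirely enclosed by $B_r(0)$ the flux $\hat\Phi$ increases by exactly $2\alpha$, and between such events it is locally constant (the self‑flux disk $B_{R/2}(0)$ being enclosed for all $r\ge3R$); hence off the union of the $N-2$ transition windows $(|\by_l|-R,|\by_l|+R)$ one has $\hat\Phi(r)=(2p(r)+1)\alpha$ with $p(r)\in\{0,\dots,N-2\}$, and therefore $\rho(r)\ge\alpha_N^2$ there. Moreover, since $\hat\Phi$ is monotone, $\rho(r)=\dist(\hat\Phi(r),2\Z)^2$ can dip below $\alpha_N^2$ only where $\hat\Phi$ is within $\alpha_N$ of an even integer, so the bad set $\{t\in J:\rho(e^t)<\alpha_N^2\}$ is a union of arcs, each attached to a crossing of an even integer and contained in one of the windows; and crucially — this is what the $3R$ margin buys — each window, intersected with $J$, has logarithmic length at most $\ln(5/3)$, because it cannot reach the origin, where its log‑length would diverge.

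The estimate itself is then obtained by a supersolution (ground‑state substitution) argument. On the good set $\{t\in J:\rho(e^t)\ge\alpha_N^2\}$ one uses $\rho\ge\alpha_N^2$ directly, producing $\alpha_N^2\int|v|^2$ there at no cost in kinetic energy. On the bad set one spends the available fraction $\kappa$: writing $v=fg$ for an everywhere‑positive supersolution $f$ of $-\kappa\,r^{-1}(rf')'+\rho\,r^{-2}f\ge c(\kappa)^2\alpha_N^2\,r^{-2}f$ on $J$ (a mere supersolution of the potential‑free inequality on the two margins), integration by parts converts $\int_I(\kappa|v'|^2+\rho|v|^2)$ into $c(\kappa)^2\alpha_N^2\int_J|v|^2$ plus a boundary term killed by a Neumann‑type choice of $f$ at the endpoints of $I$ and interior interface terms whose sign is controlled because $f'$ may only jump downwards. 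Such an $f$ is built piecewise: a power law $r^{\pm\beta}$ on each good stretch, with $|\beta|$ small enough that $\rho-\kappa\beta^2\ge c(\kappa)^2\alpha_N^2$ there, and a concave ``cap'' across each bad window, where $\rho$ is replaced by $0$ and the cap must bend enough that $-\kappa g''\ge c(\kappa)^2\alpha_N^2 g$ (with $g=f\circ\exp$) while staying positive and matching the neighbouring power laws — possible precisely because each window has log‑length at most $\ln(5/3)$; the bending budget, together with the need to thread through possibly many overlapping caps in a long chain of windows, is what dictates the small admissible value $c(\kappa)=4.7\cdot10^{-4}\,\kappa/(1+2\kappa)$. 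In the limit $R=0$ the windows degenerate to the finite point set $\{|\by_l|\}$, of measure zero, so $\rho\ge\alpha_N^2$ almost everywhere and one may take $c\equiv1$ with nothing borrowed.

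The main obstacle is exactly this last step — constructing the supersolution in the face of the oscillations of $\rho$. Unlike the ideal case, as $\gtrsim\alpha^{-1}$ particles pile up at comparable radii the transition windows chain together and $\rho$ may collapse to zero over a long succession of even‑integer crossings, so one cannot simply excise the bad set and invoke a Hardy inequality: the two‑dimensional radial Hardy constant vanishes, and a long run of near‑zero $\rho$ with only sparse ``$\rho\ge\alpha_N^2$'' recoveries would kill any fixed constant. One must genuinely combine (i) that each transition window is kept away from the origin by the $3R$ margin and so has bounded logarithmic size — limiting how much each cap must bend — with (ii) the monotonicity of the generating flux, which permits only $\le N-2$ even‑integer crossings in all and forces $\rho$ to recover to $\ge\alpha_N^2$ between consecutive crossings. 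Making the interplay of (i) and (ii) quantitative, uniformly in the configuration $(\by_l)_l$ and in $N$, $R$, $L$, and with an explicit $\kappa$‑dependence, is the technical heart of the proof and accounts for the smallness of $c(\kappa)$.
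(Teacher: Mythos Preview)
Your supersolution plan has a genuine gap that the ingredients you list cannot close. Consider $\alpha=1$ (hence $\alpha_N=1$) and place the particles at radii $d_l = 4R + lR$, $l=1,\dots,N-2$: the transition windows $(d_l-R,d_l+R)$ chain into a single stretch of length $\sim NR$ and log-length $\sim\ln N$, on which $\hat\Phi$ is strictly increasing and $\rho(r)=\dist(\hat\Phi(r),2\Z)^2<1=\alpha_N^2$ except at the isolated odd-integer crossings. Your ``good set'' $\{\rho\ge\alpha_N^2\}$ thus has measure zero there, and after replacing $\rho$ by $0$ on the bad set you would need a positive $g$ with $-\kappa g''\ge c^2 g$ over a log-interval of length $\sim\ln N$, forcing $c\lesssim\sqrt{\kappa}/\ln N\to 0$. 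For general $\alpha_N<1$ the good stretches between consecutive zeros have $\hat\Phi$-length $2(1-\alpha_N)$ against bad stretches of $\hat\Phi$-length $2\alpha_N$; when many windows overlap both become short in $r$, and a power-law recovery over a good stretch of log-length $\ell$ gains only a factor $1+O(\beta\ell)$, which cannot offset the bending of the caps across a long chain. (Your point (ii) is also misstated: monotonicity of $\hat\Phi$ permits $\sim\alpha(N-2)$ even-integer crossings, not $\le N-2$; the count $\le N-2$ applies to the \emph{distinct} intervals on which $\N$ increases by $1$.)

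What is missing is any use of $\rho$ \emph{inside} the bad set beyond $\rho\ge 0$, and this is precisely what the paper exploits. The paper never builds a global supersolution; instead it runs two rounds of a local projection bound (Lemma~\ref{lem:1Dprojection}: on any interval of length $\le 2R$, borrow part of the kinetic energy to replace $\rho$ by its weighted mean $\bar\rho$). The first round is applied on each interval $I_q=(z_q^-,z_q^+)$ around a zero of $\rho$, where $\N$ increases by exactly $1$; the needed estimate $\bar\rho_{I_q}\gtrsim\alpha^2$ (Lemma~\ref{lem:GoodIntervals}) comes from an explicit lower bound on $\N'$ on $I_q$, which in turn rests on the concavity of the one-particle flux profile and the fact that at least one particle covers all of $I_q$ (Lemma~\ref{lem:OneParticleCoverIk}). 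The $I_q$'s for which this fails --- both short and with large oscillation of $\N'$ --- are shown, via a shape lemma for $\N'$ (Lemma~\ref{lem:ShapeLemma}), to cover only a bounded fraction of any $R/2$-interval (Lemma~\ref{lem:BadIkHasSmallMeasure}); a second projection on $R/2$-intervals then yields the uniform constant. This geometric analysis of $\N'$ has no analogue in your outline and is not replaceable by bounded window log-length plus flux monotonicity alone.
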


\begin{remark} \label{rem:MainRadialBound}
	The margins which appear here as a cut-off for the potential
	are not optimal and could be improved with more work, 
	to the cost of an even weaker constant. 
	The main reason for the weakness of the constant $c(\kappa)$ is the fact that we
	have chosen to control the above form by means of filling the gaps around the zeros of the 
	potential by smearing it
	over longer (but not too long) intervals, 
	and that in the worst possible situation there are very large regions of 
	intense oscillation and many such zeros.

	By considering the special case $\alpha=\alpha_N=1$ and
	densely packed, overlapping particles (i.e.\ when $\bar \gamma$ is large) 
	distributed so that the effective magnetic field is 
	approximately constant, we find that $c(\kappa)$ cannot 
	be greater than $1/{\sqrt{3}}$, which is what the 
	corresponding constant would be if one applied the same 
	argument to the case of a homogeneous magnetic field (see below).
	However, for $\alpha_* \le 1/2$ (or small enough so that $\rho$ is larger 
	than $\alpha_*^2$ for a sufficiently large set of radii), 
	we expect that the ground-state energy of the 
	left-hand side (though difficult to compute in general)
	should in almost all situations be bounded by
	that with $\rho(r)$ replaced by $\alpha_N^2$ (compare Figure~\ref{fig:rho/r_plot}). 
	We discuss further possible improvements to the constant 
	$c(\kappa)$ at the end of Section~\ref{sec:ProofLongRange}.
\end{remark}

\begin{figure}[ht]
	\centering
	\begin{tikzpicture}
		\node [above left] at (-0.3,0) {\includegraphics[scale=0.7, trim=0.2cm -0.4cm 0.5cm 0cm]{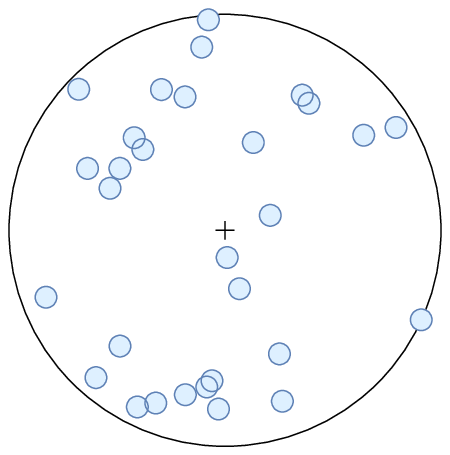}};
		\node [above right] at (0,0) {\includegraphics[scale=0.63]{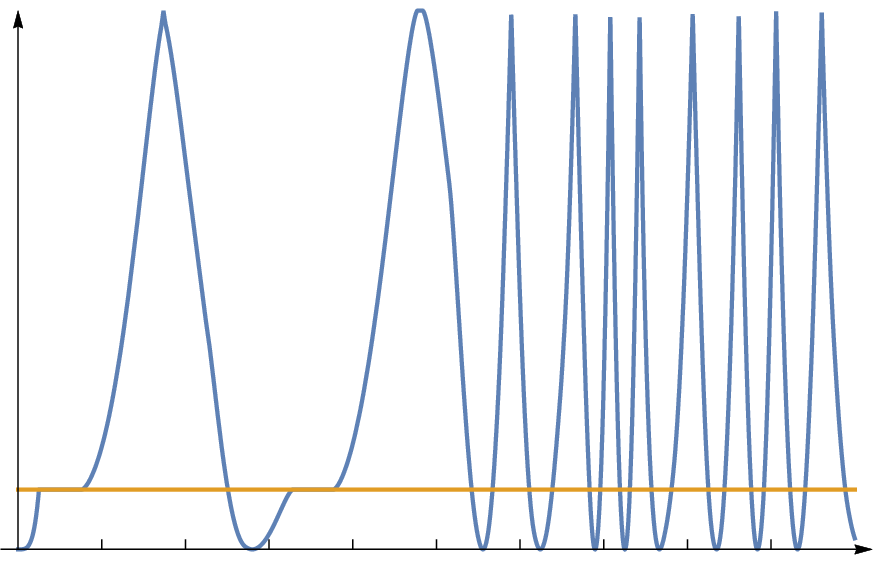}};
		\node [above right] at (5.65,0) {\scalebox{0.8}{$r$\hspace{-5pt}}};
		\node [above right] at (0.05,3.6) {\scalebox{0.8}{$\rho$}};
		\node [above right] at (-0.3,0.3) {\scalebox{0.8}{$\alpha_*^2$}};
		\node [above right] at (-0.2,3.3) {\scalebox{0.8}{$1$}};
		\node [above right] at (6+0,0) {\includegraphics[scale=0.63]{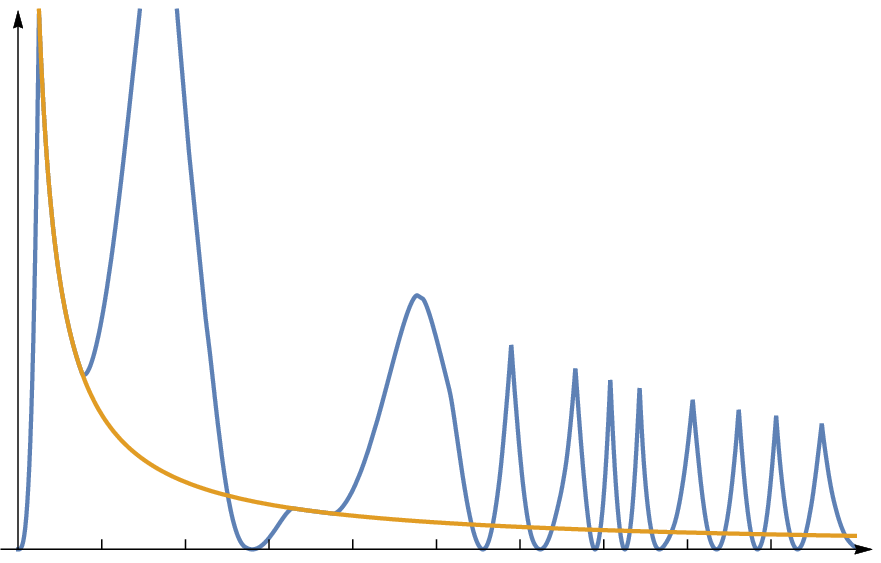}};
		\node [above right] at (6+5.65,0) {\scalebox{0.8}{$r$\hspace{-5pt}}};
		\node [above right] at (6+0,3.6) {\scalebox{0.8}{$\rho/r$}};
	\end{tikzpicture}
	\caption{The function $\rho(r)$ and the effective potential 
		$\rho(r)/r$ for a random (uniformly distributed)
		configuration of $30$ particles in a disk of radius $L=20 R$ with 
		$\alpha=\alpha_*=1/3$ plotted from $r=0$ to $r=L$, 
		where $\alpha_*^2$ resp.\ $\alpha_*^2/r$ are shown for comparison. 
		As one can see, the effective potential is generally quite a lot larger 
		than $\alpha_*^2/r$.}
	\label{fig:rho/r_plot}
\vspace{15pt}
	\begin{tikzpicture}
		\node [above left] at (-0.3,0) {\includegraphics[scale=0.7, trim=0.2cm -0.4cm 0.5cm 0cm]{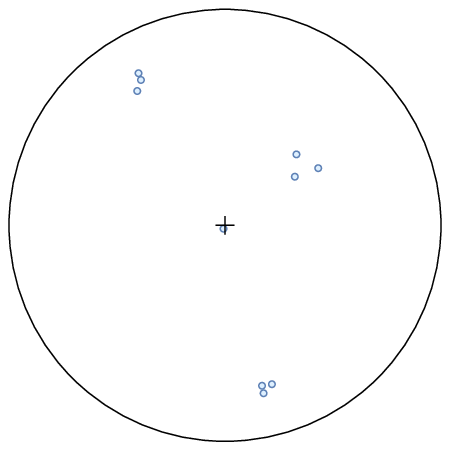}};
		\node [above right] at (0,0) {\includegraphics[scale=0.63]{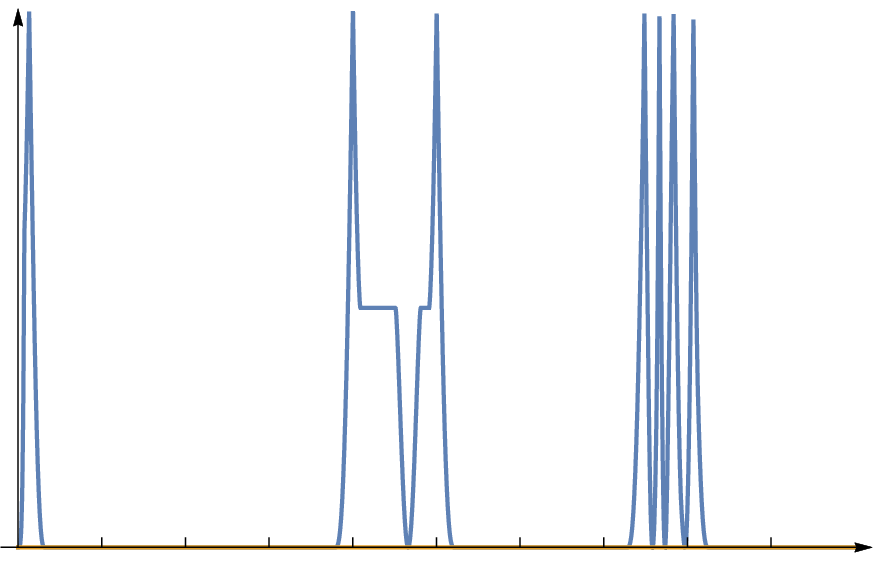}};
		\node [above right] at (5.65,0) {\scalebox{0.8}{$r$\hspace{-5pt}}};
		\node [above right] at (0.05,3.6) {\scalebox{0.8}{$\rho$}};
		\node [above right] at (-0.3,-0.05) {\scalebox{0.8}{$\alpha_*^2$}};
		\node [above right] at (-0.2,3.3) {\scalebox{0.8}{$1$}};
		\node [above right] at (6+0,0) {\includegraphics[scale=0.63]{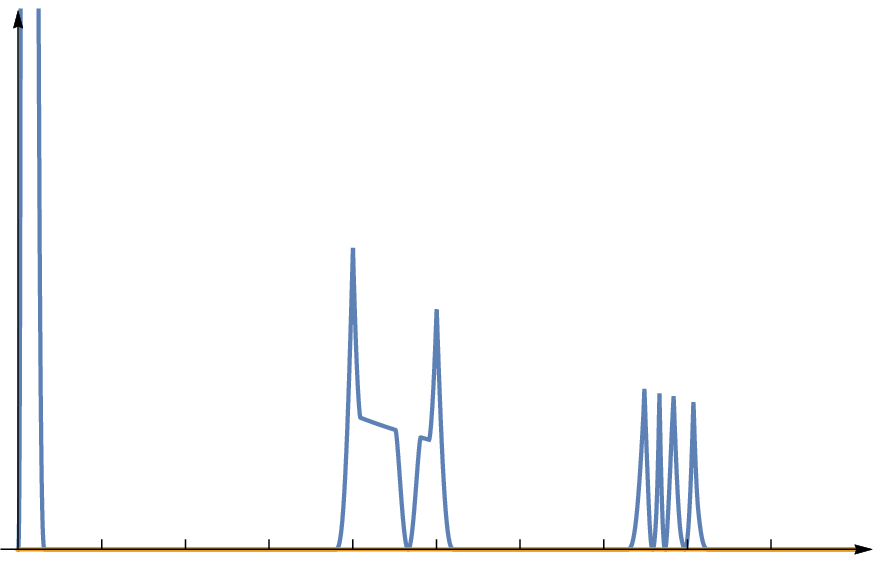}};
		\node [above right] at (6+5.65,0) {\scalebox{0.8}{$r$\hspace{-5pt}}};
		\node [above right] at (6+0,3.6) {\scalebox{0.8}{$\rho/r$}};
	\end{tikzpicture}
	\caption{The same as in Figure~\ref{fig:rho/r_plot}, now
    for $10$ particles in a disk of radius $L=60 R$ with 
    $\alpha=2/3$ ($\alpha_*=0$), and with a single particle close to 
    our center of mass and the remaining nine in clusters of three. 
    Note that in this case the effective potential can become identically 
    zero on long intervals.}
	\label{fig:rho/r_plot_clusters}
\end{figure}

\begin{proof}[Proof of Theorem~\ref{thm:long-range}]
	Inserting the bound of Theorem~\ref{thm:MainRadialBound} 
	with $L = \delta(\bX)$ into the expressions
	\eqref{eq:RelativeScalarEnergy}, \eqref{eq:RelativeMagneticEnergy}, 
	and \eqref{eq:DerivativeSplit}, 
	we obtain the first bound of the theorem.
	Furthermore, by rescaling $v(r) := u((L-3R)r)$ and
	considering the minimizer $v$ which is the solution of the Bessel equation
	$$
		-v''(r) -v'(r)/r + \nu^2 v(r)/r^2 = \lambda v(r), 
		\qquad  v'(\gamma)=0, \ v'(1) = 0, 
	$$
	with the minimal eigenvalue
	$\lambda = g\Bigl( \nu = \frac{c(\kappa)\alpha_N}{\sqrt{1-\kappa}}, 
		\gamma = \frac{3R/L}{1-3R/L} \Bigr)^2 \ge 0$, 
	one obtains
	\begin{multline*}
		\int_{3R}^{L-3R} \biggl( (1-\kappa)|u'|^2 + c(\kappa)^2\frac{\alpha_N^2}{r^2}|u|^2 \biggr) r\, dr
		= (1-\kappa) \int_{\gamma}^{1} \biggl( |v'|^2 
			+ \frac{c(\kappa)^2}{1-\kappa} \frac{\alpha_N^2}{r^2}|v|^2 \biggr) r\, dr \\
		\ge (1-\kappa) 
			g\Bigl(\frac{c(\kappa)\alpha_N}{\sqrt{1-\kappa}}, \gamma\Bigr)^2 
			\int_{\gamma}^{1} |v|^2 r\, dr
		= (1-\kappa)\frac{ 
			g\Bigl(\frac{c(\kappa)\alpha_N}{\sqrt{1-\kappa}}, \gamma\Bigr)^2 
			}{L^2(1-3R/L)^2} \int_{3R}^{L-3R} |u|^2 r\, dr, 
	\end{multline*}
	and therefore, 
	after the simplifying 
	estimate $(1-3R/L)^{-2} \ge 1$, 
	the second bound of Theorem~\ref{thm:long-range}. 
	The properties of $g$ described in the theorem are direct consequences of Propositions~\ref{prop:BesselPrimeZeroBounds} and~\ref{prop:BesselEigenvalueBounds}.
\end{proof}

Before continuing with the proof of Theorem~\ref{thm:MainRadialBound}
we note that, 
although this method involving the magnetic Hardy inequality turns out
to be sufficient and indeed well-suited for our purposes, 
it does not deal well with strong magnetic fields
(hence also the presence of a large external field), 
as the following example shows.
The strong magnetic fields arising from a large overlap between
the particles will instead be handled by the short-range part of the 
interaction, Lemma~\ref{lem:short-range}.

\begin{proposition}[Constant magnetic field on a disk]\label{prop:constant-field}
	The ground-state energy $\lambda_1(\beta)$ for the 
	Neumann form (with no symmetry imposed) with 
	a constant magnetic field $b(\br)=\beta \ge 0$ on the unit disk, 
	$$
		\lambda_1(\beta) := \inf_{\|u\|_2 = 1} \int_{B_1(0)} 
			\bigl|(-i\nabla + \beta\br^\perp/2)u\bigr|^2 \, d\br, 
	$$
	satisfies
	$$
		\lambda_1(\beta) \sim \Theta_0 \beta \quad \text{as} \quad \beta \to \infty, 
		\quad \text{where} \quad \Theta_0 \approx 0.59.
	$$
	However, the ground-state energy for the corresponding lower bound
	obtained from the Hardy inequality, 
	$$
		\mu_1(\beta) := \inf_{\|u\|_2 = 1} \int_{B_1(0)} \biggl( 
			\bigl|\partial_r |u|\bigr|^2 
			+ \inf_{k \in \Z}|k - \beta r^2/2|^2 \frac{1}{r^2} |u|^2 
			\biggr)\, d\br, 
	$$
	is bounded from above by $g(1/2, 0)^2 = (j_{1/2}')^2$ independent of $\beta$.
\end{proposition}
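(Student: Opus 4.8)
The plan is to establish the two halves of the statement separately. The large-field asymptotics $\lambda_1(\beta)\sim\Theta_0\beta$ is the classical ground-state asymptotics of the magnetic Neumann Laplacian on a smooth bounded planar domain, which I would quote from \cite{FouHel-10} (see also~\cite{BalEvLew-15}); whereas the bound $\mu_1(\beta)\le(j_{1/2}')^2$ is elementary, once one exploits that the effective scalar potential appearing in $\mu_1$ is bounded by $1/(4r^2)$ uniformly in $\beta$.

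If one wants a self-contained argument for $\lambda_1(\beta)$ rather than a citation, the two bounds proceed as follows. The \emph{upper} bound $\lambda_1(\beta)\le\Theta_0\beta+O(\beta^{1/2})$ comes from a trial state concentrated in a boundary layer of width $\beta^{-1/2}$: writing $t=1-|\br|$ for the distance to $\partial B_1(0)$ and $s$ for arc length, one gauges the vector potential so that in the layer the form reduces to leading order to a rescaled de Gennes form, and tests with $u(s,t)=\beta^{1/4}e^{-i\beta^{1/2}\xi_0 s}f(\beta^{1/2}t)\chi(t)$, where $f$ is the normalized Neumann ground state of $-\partial_t^2+(t-\xi_0)^2$ on $\R_+$, $\xi_0$ realizes $\Theta_0=\inf_\xi\mu(\xi)$, and $\chi$ localizes to the layer; expansion yields $\Theta_0\beta$ plus curvature corrections of order $\beta^{1/2}$. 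The matching \emph{lower} bound uses an IMS partition of unity into balls of radius $\sim\beta^{-1/2+\eta}$: on balls in the interior the magnetic inequality \eqref{eq:magnetic inequality} gives local energy $\ge\beta$, while on balls meeting the boundary one rescales by $\beta^{1/2}$, flattens the boundary, and compares with the de Gennes half-plane model to get local energy $\ge(\Theta_0-o(1))\beta$, the localization error being $O(\beta^{1-2\eta})$. Optimizing in $\eta$ gives $\lambda_1(\beta)=\Theta_0\beta+o(\beta)$, with numerically $\Theta_0\approx 0.59$.

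For $\mu_1(\beta)$, note that the normalized flux of the field $b\equiv\beta$ through $B_r(0)$ is $\hat\Phi(r)=\beta r^2/2$, so
\[
	\inf_{k\in\Z}\bigl|k-\beta r^2/2\bigr|^2\le\tfrac14\qquad\text{for every }r\in(0,1),
\]
independently of $\beta$. Restricting the infimum defining $\mu_1(\beta)$ to nonnegative radial functions $u=u(r)$ and using this pointwise bound gives
\[
	\mu_1(\beta)\le\inf_{\substack{u=u(r)\ge 0\\ 2\pi\int_0^1 u^2 r\,dr=1}}2\pi\int_0^1\Bigl(|u'|^2+\frac{1}{4r^2}u^2\Bigr)r\,dr .
\]
The right-hand side is precisely the smallest positive eigenvalue of $-u''-u'/r+\tfrac14 u/r^2$ on $[0,1]$ with the natural integrability condition at $r=0$ and the Neumann condition $u'(1)=0$, i.e.\ $g(1/2,0)^2$ in the notation of Theorem~\ref{thm:long-range}; the eigenfunction is $u(r)=J_{1/2}(j_{1/2}'r)\propto r^{-1/2}\sin(j_{1/2}'r)$, the endpoint condition forcing $j_{1/2}'$ to be the first positive zero of $J_{1/2}'$. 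Hence $\mu_1(\beta)\le(j_{1/2}')^2$ for all $\beta\ge 0$.

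The only genuinely nontrivial work is the lower bound in the first part — the boundary-layer localization — which is standard and which I would simply cite; everything else is a short computation. Comparing the two displayed quantities then exhibits exactly the phenomenon the proposition records: the Hardy-type bound $\mu_1$ stays bounded by $(j_{1/2}')^2$ while the true ground-state energy $\lambda_1$ grows linearly in $\beta$, so the magnetic Hardy inequality is ill-suited to strong fields.
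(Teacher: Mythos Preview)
Your proposal is correct and follows essentially the same approach as the paper: the asymptotics $\lambda_1(\beta)\sim\Theta_0\beta$ is obtained by citation to \cite{FouHel-10}, and the bound $\mu_1(\beta)\le(j_{1/2}')^2$ by observing $\inf_{k\in\Z}|k-\beta r^2/2|^2\le 1/4$ and testing with the Bessel function $J_{1/2}(j_{1/2}'r)$. Your additional self-contained sketch of the boundary-layer argument for $\lambda_1$ is not in the paper's proof (which simply cites the reference), but it is the standard argument underlying that reference and does not constitute a different route.
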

\begin{proof}
	The first estimate follows e.g.\ from~\cite[Theorem~5.3.1]{FouHel-10}, 
	while the second from  bounding the infimum by $1/4$ and taking as a trial state 
	the Bessel function $u(\br) = J_{1/2}(j_{1/2}'r)$.
\end{proof}

% ------------------  1D Projection bound  --------------------

\subsection{A one-dimensional projection bound}\label{sec:ProjArgInterval}

Our strategy in order to find a uniform bound for the scalar interaction 
of Theorem~\ref{thm:MainRadialBound}
will be to borrow a bit of the radial kinetic energy to smear 
$\rho$ over intervals whenever it has critical oscillations.
As a preliminary to the proceeding analysis we therefore
study the localized effective quadratic form 
\begin{equation}
	h_{I, \rho}(u) := \int_{I} \bigl(\kappa |u'|^2 + \frac{\rho}{r^2}|u|^2\bigr)r\, dr, \quad \kappa \in [0, 1],
\end{equation}
on an interval $I=(r_1, r_2) \subseteq \R_\limplus$, 
and our goal is to find a bound of the form
\begin{equation}\label{eq:LocalprojHardyBound}
	h_{I, \rho}(u) \gtrsim \int_I \frac{|u|^2}{r}\, dr, 
\end{equation}
i.e.\ corresponding to $\rho$ being constant.
\begin{lemma}\label{lem:1Dprojection}
	Let $I$ be an interval $(r_1, r_2)$, such that $r_1\ge R$ and 
	$|r_2-r_1|\le 2 R$, and let $\rho\in L^\infty(I)$ be non-negative with 
	$\|\rho\|_\infty\le1$. Then for any $\kappa \in [0, 1]$ we have that
		\begin{equation}
			\int_{I} \bigl(\kappa|u'|^2+\frac{\rho}{r^2}|u|^2\bigr)r\, dr\ge \frac{\kappa \bar \rho}{\beta(\kappa)} \int_I \frac{|u|^2}{r}\, dr, 
		\end{equation}
	where $\bar\rho$ denotes the weighted mean on $I$, 
	\begin{equation}
		\bar \rho := \int_I \frac{\rho}{r}\, dr \Bigm/ \int_I \frac{dr}{r}, 
	\end{equation}
	and $\beta(\kappa)$ is an explicit function satisfying 
	$\kappa < \beta(\kappa) < \kappa+1/4$.
\end{lemma}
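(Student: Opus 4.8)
The plan is two-fold: pass from the weighted interval $(I,r\,dr)$ to a flat interval by the logarithmic substitution $t=\ln r$, and then control the oscillation of $u$ around its mean by a Neumann--Poincar\'e inequality, paying for it with the available radial kinetic energy $\kappa|u'|^2$. First I would observe that replacing $u$ by $|u|$ only decreases the left-hand side (since $\bigl||u|'\bigr|\le|u'|$) and leaves the right-hand side unchanged, so we may take $u$ real; then, writing $\tilde u(t):=u(e^t)$, the three weights $r\,dr$, $r^{-1}\,dr$ and $r^{-2}\cdot r\,dr$ all become $dt$, and the asserted inequality turns into the flat estimate
\[
\int_J\bigl(\kappa|\tilde u'|^2+\rho\,|\tilde u|^2\bigr)\,dt\ \ge\ \frac{\kappa\bar\rho}{\beta(\kappa)}\int_J|\tilde u|^2\,dt ,
\qquad \bar\rho=\frac1{|J|}\int_J\rho\,dt ,
\]
on $J=(\ln r_1,\ln r_2)$. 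The hypotheses $r_1\ge R$ and $r_2\le r_1+2R$ give $r_2\le 3r_1$, hence $\ell:=|J|=\ln(r_2/r_1)\le\ln 3$; set $A:=\pi^2/\ell^2\ge\pi^2/(\ln3)^2$.

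Next I would split $\tilde u=c+v$ orthogonally in $L^2(J)$ into its mean $c$ and a remainder with $\int_J v=0$, so that $\|\tilde u\|^2=\ell c^2+\|v\|^2$ and, by the Neumann--Poincar\'e inequality on $J$, $\|\tilde u'\|^2=\|v'\|^2\ge A\|v\|^2$ (here $\|\cdot\|$ is the $L^2(J)$-norm). Expanding $\int_J\rho|\tilde u|^2=\ell\bar\rho c^2+2c\int_J\rho v+\int_J\rho v^2$, the crucial move is to estimate the cross term by a \emph{weighted} Cauchy--Schwarz, $\bigl|\int_J\rho v\bigr|\le(\int_J\rho)^{1/2}(\int_J\rho v^2)^{1/2}=(\ell\bar\rho)^{1/2}X^{1/2}$ with $X:=\int_J\rho v^2$, followed by Young's inequality with a parameter $\tau\in(0,1)$, and then to bound the residual $X-\tau^{-1}X$ from below using $0\le X\le\|v\|^2$ (this is where $\rho\le1$ enters). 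This gives, for every $\tau\in(0,1)$,
\[
\int_J\bigl(\kappa|\tilde u'|^2+\rho|\tilde u|^2\bigr)\,dt\ \ge\ (1-\tau)\,\bar\rho\,\ell c^2+\bigl(\kappa A+1-\tau^{-1}\bigr)\|v\|^2 .
\]
Comparing with $\tfrac{\kappa\bar\rho}{\beta}(\ell c^2+\|v\|^2)$ and using $\bar\rho\le1$, it suffices that $\beta\ge\kappa/(1-\tau)$ and $\kappa A\ge\tau^{-1}-\tau$; taking $\beta:=\kappa/(1-\tau)$ and $\tau=\tau_*(\kappa)$ the unique root of $\tau^{-1}-\tau=\kappa A$ in $(0,1)$ yields the explicit value
\[
\beta(\kappa)=\frac{\kappa}{1-\tau_*}=\frac{2+\kappa A+\sqrt{\kappa^2A^2+4}}{2A}=\frac\kappa2+\frac1A+\frac12\sqrt{\kappa^2+\frac4{A^2}} .
\]

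Finally I would check the two-sided bound on $\beta$. That $\beta(\kappa)>\kappa$ is immediate from the last expression; and replacing $A$ by its smallest admissible value $\pi^2/(\ln3)^2$ only enlarges $\beta$, so it is harmless. The upper bound $\beta(\kappa)<\kappa+\tfrac14$ is equivalent, after elementary manipulation, to $\tau_*<(4\kappa+1)^{-1}$, hence (as $\tau\mapsto\tau^{-1}-\tau$ is decreasing) to $A>\tfrac{8(2\kappa+1)}{4\kappa+1}$, whose right-hand side decreases from $8$ as $\kappa$ grows; this holds because $A\ge\pi^2/(\ln3)^2>8$. I expect the only genuine obstacle to be the precise organisation of the cross-term estimate just described: bounding $\int_J\rho v$ against $(\int_J\rho v^2)^{1/2}$ rather than against $\|v\|$ (the latter costs a factor $\bar\rho^{1/2}$ and pushes $\beta$ past $\kappa+\tfrac14$ for small $\kappa$), and retaining $\int_J\rho v^2$ until it can be absorbed into $\|v\|^2$. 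It is exactly this bookkeeping that makes the admissible margin ``$\tfrac14$'' coincide with the Poincar\'e constant $\pi^2/(\ln3)^2>8$ of an interval of length $\ln 3$.
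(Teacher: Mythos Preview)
Your proof is correct and is essentially the paper's argument, just written in scalar rather than operator language: the paper projects onto the constant function $P$ and its complement $P^\perp$ and uses the operator inequality $\tilde\rho\ge(1-\mu)P\tilde\rho P+(1-\mu^{-1})P^\perp\tilde\rho P^\perp$, which, evaluated on $\tilde u=c+v$, is exactly your weighted Cauchy--Schwarz/Young estimate on the cross term with $\tau=\mu$; the resulting optimization yields the same explicit $\beta(\kappa)=\tfrac{\kappa}{2}+\tfrac1A+\tfrac12\sqrt{\kappa^2+4/A^2}$ with $A=\pi^2/(\ln3)^2$. The only stylistic difference is your verification of $\beta(\kappa)<\kappa+\tfrac14$ via the clean reduction to $A>8$, whereas the paper instead invokes convexity of $\beta$ and a numerical check at the endpoints.
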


\begin{remark}
	This lemma can be proven under more general conditions;
	the only condition on $I$ needed for our proof is that $r_2/r_1$ 
	is sufficiently small.
	The current setting is simply what we require later.
\end{remark}

\begin{proof}[Proof of Lemma~\ref{lem:1Dprojection}]
	By the change of variables $r=e^t$ and with $\tilde u(t)=u(e^t)$ we find that
	\begin{equation}
		h_{I, \rho}(u)= \tilde h(\tilde u):= \int_{\ln(I)}\bigl(\kappa|\tilde u'|^2+\tilde\rho|\tilde u|^2\bigr)\, dt.
	\end{equation}
	For this quadratic form we can perform a projection-type argument to bound 
	the first eigenvalue of the associated operator 
	$\widetilde{H}:= -\kappa \frac{d^2}{dt^2}+\tilde\rho$ 
	(with Neumann boundary conditions), 
	which in turn will imply a bound of the desired form.

	Let $P$ denote the orthogonal projection onto the ground state 
	$\psi_0\equiv 1/{\sqrt{|{\ln(I)}|}}$ of $-d^2/dt^2$, where 
	$|{\ln(I)}|=\ln(r_2/r_1)$, and let $P^\perp=\1-P$. 
	Then $\bigl(-\frac{d^2}{dt^2} \bigr)P=0$ and 
	$\bigl(-\frac{d^2}{dt^2}\bigr)P^\perp \ge \pi^2/|{\ln(I)}|^2 P^\perp$
	(the first non-zero Neumann eigenvalue of $-d^2/dt^2$). 

	Since $\tilde{\rho}\ge 0$, an application of Cauchy--Schwarz' and Young's inequalities 
	yields, for any $u\in L^2(\ln(I))$ and $\mu>0$, that
	\begin{align}
	\bigl| \langle u, (P\tilde{\rho}P^\perp+P^\perp \tilde{\rho} P)u\rangle \bigr|
	&= \bigl| \langle \tilde{\rho}^{1/2}Pu, \tilde{\rho}^{1/2}P^\perp u \rangle 
		+ \langle \tilde{\rho}^{1/2}P^\perp u, \tilde{\rho}^{1/2}P u \rangle \bigr|\\
	&\le \mu \|\tilde{\rho}^{1/2}Pu\|_2^2 + \mu^{-1}\|\tilde{\rho}^{1/2}P^\perp u\|_2^2 
	= \langle u, (\mu P \tilde{\rho} P + \mu^{-1}P^\perp \tilde{\rho} P^\perp) u \rangle.
	\end{align}
	Hence we see that
	\begin{equation}\label{eq:Proj_Split_Potential_1D}
	\tilde{\rho} = (P+P^\perp)\tilde{\rho}(P+P^\perp) \ge (1-\mu)P\tilde{\rho}P+(1-\mu^{-1})P^\perp \tilde{\rho}P^\perp.
	\end{equation}
	The 
	operator $P\tilde \rho P$ is equal to 
	$\|\tilde\rho\|_1/{|{\ln(I)}|} P$, 
	where
	\begin{align*}
		\|\tilde\rho\|_1=\int_{\ln(I)} \tilde\rho\, dt = \int_I \rho(r)r^{-1}\, dr, 
	\end{align*}
	and  
	$P^\perp \tilde \rho P^\perp$
	we bound from above by 
	$\|\tilde{\rho}\|_{\infty} P^\perp$. 

	We find that for any 
	$\mu \in (0, 1)$ the operator 
	$\widetilde{H}$ satisfies
	\begin{align}\label{eq:ProjectionBound}
		\widetilde{H} 
		&\ge
		\kappa\Bigl(-\frac{d^2}{dt^2}\Bigr)P+\kappa\Bigl(-\frac{d^2}{dt^2}\Bigr)P^\perp + (1-\mu)P\tilde\rho P +(1-\mu^{-1})P^\perp\tilde\rho P^\perp\\
		&\ge 
		\frac{(1-\mu)}{|{\ln(I)}|}\|\tilde\rho\|_1 P + \biggl(\frac{\kappa\pi^2}{|{\ln(I)}|^2} +(1-\mu^{-1})\|\tilde\rho\|_\infty\biggr)P^\perp\\
		&\ge
		\min\biggl\{\frac{(1-\mu)\|\tilde \rho\|_1}{|{\ln(I)}|}, \frac{\kappa\pi^2}{|{\ln(I)}|^2} +(1-\mu^{-1})\|\tilde\rho\|_\infty\biggr\}.
	\end{align}
	With $|r_2-r_1|=|I|\le 2R$ and $r_1\ge R$ we find that
	\begin{equation}
		|{\ln(I)}| = \ln \Bigl(\frac{r_2}{r_1}\Bigr) = \ln \Bigl(1+\frac{|I|}{r_1}\Bigr) \le \ln \Bigl(1+\frac{2R}{R}\Bigr)= \ln (3).
	\end{equation}
	Hence, writing $\mu=1-\kappa/\beta$, $\beta> \kappa$, and using that $\|\tilde\rho\|_\infty=\|\rho\|_\infty\le 1$, $\|\tilde\rho\|_1/|{\ln(I)}|\le 1$ we have that
	\begin{align}
		\min\biggl\{\frac{(1-\mu)\|\tilde \rho\|_1}{|{\ln(I)}|}, \frac{\kappa\pi^2}{|{\ln(I)}|^2} +(1-\mu^{-1})\|\tilde\rho\|_\infty\biggr\} 
		& \ge
		\kappa \frac{\|\tilde\rho\|_1}{|{\ln(I)}|}\min\biggl\{\frac{1}{\beta}, \frac{\pi^2}{\ln(3)^2} -\frac{1}{\beta-\kappa}\biggr\}, 
	\end{align}
	where we assumed the positivity of the second argument (this will be clear by the choice of $\beta$ below).
	Note that the first argument of the minimum is decreasing in $\beta>\kappa$ while the second one is increasing. Thus to find the maximizing $\beta$ we only need to solve the equation $1/\beta = \pi^2/\ln(3)^2-1/(\beta-\kappa)$. Plugging the solution, given by
	\begin{equation}
		\beta(\kappa) = \frac{\pi^2\kappa+\sqrt{\pi^4\kappa^2+4\ln(3)^4}+2\ln(3)^2}{2\pi^2} > \kappa, 
	\end{equation}
	into the above yields
	\begin{align}
		\tilde{H}\ge \frac{\kappa}{\beta(\kappa)}\frac{\|\tilde\rho\|_1}{|{\ln(I)|}} = \frac{\kappa\bar\rho}{\beta(\kappa)}.
	\end{align}
	
	Finally, since $\beta(\kappa)$ is a convex function for $\kappa \in [0, 1]$ we can simplify this expression using 
	\begin{equation}
	 	\beta(\kappa)\le \beta(0)+(\beta(1)-\beta(0))\kappa=:L_\beta(\kappa), 
	\end{equation} 
	and by simple numerical estimates one finds that $L_\beta(\kappa)< \kappa+1/4$.
\end{proof}

% ------------------  Structure of rho  --------------------

\subsection{Number-theoretic structure of the effective scalar potential}\label{sec:potential-structure}

To proceed with the analysis we will need a more precise understanding of how $\rho$ 
depends on the positions of the other particles. Note first that we may assume that 
$\alpha>0$ using the reflection-conjugation symmetry. We then begin by writing for the normalized flux
$$
	\hat\Phi(r) = \alpha(1 + 2\N(r)), \quad r \ge R/2, 
$$
where we introduce the \emph{particle counting function}
\begin{equation}\label{eq:Particle_Counting_Function}
	\N(r) := \sum_{l=1}^{N-2} \int_{B_r(0)} \frac{\1_{B_R(\by_l)}}{\pi R^2}. 
\end{equation}
Recall that in the expression \eqref{eq:def_anyonPhi} for the flux $\hat\Phi$,
all particles are treated relative to the
fixed center of mass $\bX$ of the considered particle pair, and have also been renumbered for convenience:
${\by_l := \bx_l-\bX \in \R^2}$, with $l \in \{1, \ldots, N-2\}$.

In terms of the function $\N$ we have that
\begin{equation}\label{eq:rho_q_Identity}
	\rho(r) = \min_{q \in \Z} \bigl( \alpha(1+2\N(r))-2q \bigr)^2, 
	\qquad \N(r) = \frac{1}{2\alpha}\hat{\Phi}(r) - \frac{1}{2}, 
\end{equation}
and we may cover the interval $[R/2, L]$ by smaller intervals
$J_q$ labeled by the minimizer $q \in \NN$ 
(note the monotonicity of the function $\N(r)$, and that we might already have 
$q \gg 1$ on the first such interval at $r=R/2$).
Each $J_q$ contains, 
except possibly for the first and last such interval, 
exactly one zero of $\rho$ which we denote by $r_q$:
$$
	\rho(r_q) = ( \alpha(1+2\N(r_q))-2q )^2 = 0 
	\quad \Leftrightarrow \quad 
	\N(r_q) = \frac{q}{\alpha} - \frac{1}{2}, 
$$
so that
\begin{equation}\label{eq:NAwayFromZ}
	|\N(r_q) - p| = \frac{1}{2\alpha}| (2p+1)\alpha - 2q | 
	\ge \frac{\alpha_N}{2\alpha}
	\quad \forall p \in \{ 0, 1, \ldots, N-2 \}.
\end{equation}
We then also have the very useful identity 
\begin{align}
	\rho(r) &= |\alpha(1+2\N(r))-2q|^2
	= |\alpha(1 + 2\N(r))-\alpha(1+2\N(r_q))|^2 \nonumber\\
	&= 4\alpha^2|\N(r)-\N(r_q)|^2, \label{eq:rhoInTermsOfN}
\end{align}
whenever $r \in J_q$.
Let us denote by $e^\limminus_q$ and 
$e^\limplus_q$ the nearest points 
to the left resp.\ right of $r_q$ where $\rho(r) = 1$\footnote{Typically we have that
$e^\limplus_q = e^\limminus_{q+1}$ and $J_q=[e^\limminus_q, e^\limplus_q]$ unless
$\rho$ stabilizes at $1$ on some interval between $r_q$ and $r_{q+1}$, 
in which case $e^\limplus_q < e^\limminus_{q+1}$ and the intervals $J_q$ and $J_{q+1}$ overlap.}, then
$$
	\rho(e^\limpm_q) = 1, \quad \mbox{and}\quad
	\rho(r) = 4\alpha^2 (\N(r)-\N(r_q))^2 < 1
	\quad \forall r \in (e^\limminus_q, e^\limplus_q) \subseteq J_q. 
$$
Finally, we also denote by $z^\limminus_q$ and $z^\limplus_q$ 
the nearest points to the left resp.\ right of $r_q$ where 
$\N(z^\limminus_q), \N(z^\limplus_q) \in \Z$, 
and hence $\N(z^\limplus_q) - \N(z^\limminus_q) = 1$, 
and we observe due to \eqref{eq:NAwayFromZ}, \eqref{eq:rhoInTermsOfN} 
and monotonicity that
\begin{equation}\label{eq:rhoOutsideIk}
	\rho(r) \ge \alpha_N^2 \qquad 
	\forall r \in J_q \setminus (z^\limminus_q, z^\limplus_q).
\end{equation}
Recall that this constant depends in a non-trivial way on number-theoretic
aspects of the parameter $\alpha$, and that it remains bounded away from
zero for all $N$ if and only if $\alpha$ is an odd-numerator rational number
(see~\cite[Proposition~5]{LunSol-13a}). To clarify the above definitions, 
two sets of points $r_q$, $e^\limpm_q$, $z^\limpm_q$ are illustrated in Figure~\ref{fig:N_plot}
for a particular particle configuration.
\begin{figure}[ht]
	\centering
	\begin{tikzpicture}
		\node [above right] at (0,0) {\includegraphics[scale=0.9]{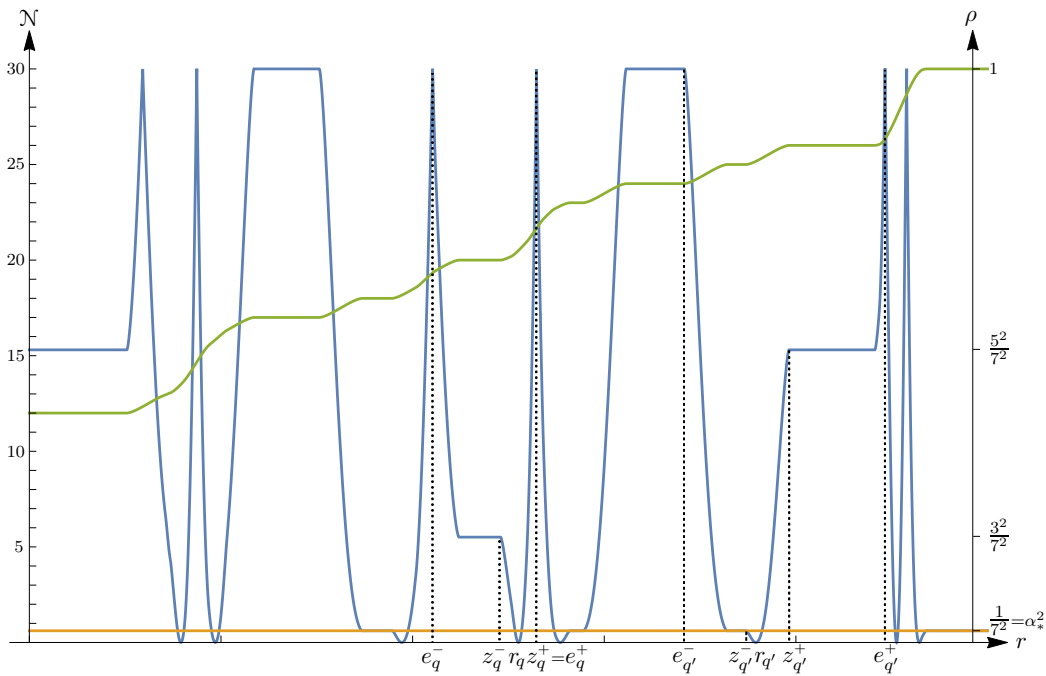}};
		\node [above right] at (0.17, 8.255) {\scalebox{0.8}{$\N$}};
		\node [above right] at (13.27, 0) {\scalebox{0.8}{$r$}};
		\node [above right] at (5.45,-0.31) {\scalebox{0.8}{$e_{q}^\limminus$}};
		\node [above right] at (6.25,-0.31) {\scalebox{0.8}{$z_{q}^\limminus$}};
		\node [above right] at (6.6,-0.28) {\scalebox{0.8}{$r_{\!q}$}};
		\node [above right] at (6.83,-0.31) {\scalebox{0.8}{$z_{q}^\limplus{\scriptstyle =} e_{q}^\limplus$}};
		\node [above right] at (8.75,-0.33) {\scalebox{0.8}{$e_{q'}^\limminus$}};
		\node [above right] at (9.486,-0.33) {\scalebox{0.8}{$z_{q'}^\limminus$}};
		\node [above right] at (9.836,-0.28) {\scalebox{0.8}{$r_{\!q'}$}};
		\node [above right] at (10.2,-0.33) {\scalebox{0.8}{$z_{q'}^\limplus$}};
		\node [above right] at (11.4,-0.33) {\scalebox{0.8}{$e_{q'}^\limplus$}};
		\node [above right] at (12.9,0.18) {\scalebox{0.8}{$\frac{1}{7^2}{\scriptstyle=\alpha_*^2}$}};
		\node [above right] at (12.9,1.29) {\scalebox{0.8}{$\frac{3^2}{7^2}$}};
		\node [above right] at (12.9,3.77) {\scalebox{0.8}{$\frac{5^2}{7^2}$}};
		\node [above right] at (12.92,7.62) {\scalebox{0.8}{$\scriptstyle{1}$}};
		\node [above right] at (12.6,8.255) {\scalebox{0.8}{$\rho$}};
		\node [above right] at (0.1,1.27) {\scalebox{0.8}{$\scriptstyle{5}$}};
		\node [above right] at (0,2.54) {\scalebox{0.8}{$\scriptstyle{10}$}};
		\node [above right] at (0,3.81) {\scalebox{0.8}{$\scriptstyle{15}$}};
		\node [above right] at (0,5.08) {\scalebox{0.8}{$\scriptstyle{20}$}};
		\node [above right] at (0,6.35) {\scalebox{0.8}{$\scriptstyle{25}$}};
		\node [above right] at (0,7.62) {\scalebox{0.8}{$\scriptstyle{30}$}};
	\end{tikzpicture}
	\caption{The function $\N(r)$ (green) together with $\rho(r)$ (blue) and 
	$\alpha_*^2$ (yellow), for $\alpha=3/7$, over an interval where the enclosed 
	number of particles increases from $12$ to $30$. Two separate zeros 
	$r_q$ and $r_{q'}$ of $\rho$, with $q'=q+2$, are indicated together with the 
	corresponding points $z_q^\limpm, e_q^\limpm$ and $z_{q'}^\limpm, e_{q'}^\limpm$.}
	\label{fig:N_plot}
\end{figure}

% ------------------  One particle covers I_q --------------------

Hence, we can reduce our problem to studying precisely those smaller intervals around
each zero of $\rho$ not covered by \eqref{eq:rhoOutsideIk}.  
To this end we let $I_q$ denote the interval $(z_q^\limminus, z_q^\limplus)$ around the zero ${r_q\in [R/2, L]}$. When considering a fixed $I_q$ we may for notational simplicity drop the subscripts $q$ when referring to its endpoints. 
Observe by the size of each particle that $|I_q| \le 2R$, 
and furthermore that there is always at least one particle covering 
the entire interval:

\begin{lemma}\label{lem:OneParticleCoverIk}
	If $r_q \ge R/2$ is a zero of $\rho$ then with $I_q$ constructed as above there 
	exists a particle centered at $\by_l$, at a distance $d=|\by_l|=|\bx_l-\bX|$, 
	such that $I_q \subseteq [d-R, d+R]$. 
	In other words, the angular projection of some particle completely 
	covers $I_q$.
\end{lemma}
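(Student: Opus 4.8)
Throughout, fix the zero $r_q$ and write $I_q=(z_q^-,z_q^+)$. For each particle $l$ I introduce its radial mass profile
$m_l(r):=|B_r(0)\cap B_R(\by_l)|/(\pi R^2)$, so that $\N=\sum_{l=1}^{N-2}m_l$. Each $m_l$ is continuous and nondecreasing, vanishes on $[0,a_l]$, equals $1$ on $[b_l,\infty)$, and is strictly increasing on the ``active interval'' $(a_l,b_l)$, where $a_l:=\max\{0,|\by_l|-R\}$ and $b_l:=|\by_l|+R$. Since $(a_l,b_l)\subseteq[\,|\by_l|-R,\,|\by_l|+R\,]$, it is enough to produce a single index $l$ with $I_q\subseteq(a_l,b_l)$, i.e.\ with $a_l\le z_q^-$ and $b_l\ge z_q^+$; the bound $|I_q|\le 2R$ then follows for free.

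The first step is to read off from the construction of $z_q^\pm$ exactly what I will use: if $p:=\N(z_q^-)$, then $p$ is a nonnegative integer, $\N(z_q^+)=p+1$, and $\N(r)\in(p,p+1)$ for every $r\in I_q$. Indeed, $\N(r_q)$ is not an integer --- by \eqref{eq:NAwayFromZ} it stays bounded away from every element of $\{0,\dots,N-2\}$, while $0\le\N\le N-2$ --- so $\N(r_q)\in(p,p+1)$ with $p=\lfloor\N(r_q)\rfloor$; monotonicity and continuity of $\N$, together with the fact that $z_q^\pm$ are the \emph{nearest} points to $r_q$ at which $\N$ takes an integer value, then force $\N(z_q^-)=p$, $\N(z_q^+)=p+1$, and $\N(r)\notin\Z$ --- hence $\N(r)\in(p,p+1)$ --- on the open interval $I_q$.

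The heart of the proof is then a short counting argument. Since $\N$ cannot be constant equal to $p$ on any subinterval of $I_q$, some particle must straddle the circle $\partial B_{z_q^-}(0)$: there is an index $l$ with $a_l\le z_q^-<b_l$, for otherwise every particle would be, throughout a right-neighbourhood of $z_q^-$ inside $I_q$, either fully enclosed ($b_l\le z_q^-$) or fully outside ($a_l>z_q^-$), making $\N$ equal to the integer $p$ there. Among all such indices choose $l_1$ so that $B_1:=b_{l_1}$ is maximal; I claim $B_1\ge z_q^+$, which finishes the proof because then $I_q=(z_q^-,z_q^+)\subseteq(a_{l_1},b_{l_1})\subseteq[\,|\by_{l_1}|-R,\,|\by_{l_1}|+R\,]$. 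Suppose instead $B_1<z_q^+$, so that $B_1\in I_q$ and hence $\N(B_1)<p+1$. Set $P:=\{l:a_l\le z_q^-\}$. Every $l\in P$ satisfies $b_l\le B_1$ --- either $b_l\le z_q^-<B_1$, or $a_l\le z_q^-<b_l$ and then the maximality of $B_1$ applies --- so $m_l(B_1)=1$ and therefore $\N(B_1)\ge|P|$. On the other hand $m_l(z_q^-)=0$ for $l\notin P$, so $p=\N(z_q^-)=\sum_{l\in P}m_l(z_q^-)\le|P|$. Combining, $p\le|P|\le\N(B_1)<p+1$, which forces $|P|=p$ and hence $m_l(z_q^-)=1$ for every $l\in P$; but $l_1\in P$ while $b_{l_1}=B_1>z_q^-$ gives $m_{l_1}(z_q^-)<1$, a contradiction.

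The step I expect to require the most care is the first one: one must be attentive to the degenerate situations flagged in the text --- the first and last subintervals, and the possibility that $\rho$ vanishes identically on a whole interval --- and check that the zero $r_q$ under consideration genuinely has $\N(r_q)\notin\Z$. As observed above, this is automatic from \eqref{eq:NAwayFromZ} precisely when $\alpha_N>0$, which is the only regime in which Theorem~\ref{thm:MainRadialBound}, and hence this lemma, carries content. Once $\N(I_q)\subseteq(p,p+1)$ has been secured, the counting argument above is entirely elementary.
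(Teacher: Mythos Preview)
Your proof is correct and follows essentially the same approach as the paper's: both identify the particles whose support begins at or before $z_q^-$, pick the one whose support extends furthest to the right, and show that this support must reach at least $z_q^+$. The paper does this directly via an auxiliary truncated counting function $\widetilde{\N}$ (with $\widetilde{\N}(r')=|P|$ an integer strictly exceeding $\widetilde{\N}(z_q^-)=p$), while you reach the same conclusion by contradiction through the cardinality of $P$; the two arguments are interchangeable.
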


\begin{proof}
	Let $I_q=(z^\limminus, z^\limplus)$ and let $\widetilde\N(r)$ be the 
	particle counting function corresponding to our particle configuration 
	but where we remove all particles (seen as closed disks $\bar B_R(\by_l)$) 
	that have empty intersection with the closed disk $\bar B_{z^\limminus}(0)$, 
	i.e.\ we remove all particles that are centered at a distance strictly 
	larger than $z^\limminus+R$ from the origin. 
	By the construction of $I_q$, there is at least one particle that has 
	non-empty intersection with $\partial B_{z^\limminus}(0)$
	(not counting any fully enclosed ones), since 
	otherwise $\N(r)$ would be constant here which 
	contradicts the choice of $z^\limminus$. 
	Let now $r'$ be the radius such that all the particles that intersected 
	$\partial B_{z^\limminus}(0)$ are completely contained in the closed 
	disk $\bar B_{r'}(0)$. 
	By the construction of $\widetilde\N(r)$, 
	its value at $r'$ is an integer which, 
	since there were particles intersecting 
	$\partial B_{z^\limminus}(0)$, is strictly larger than 
	$\widetilde\N(z^\limminus)=\N(z^{\limminus})$. 
	But then, since $\widetilde\N(r)\le \N(r)$, the function $\N(r)$ must 
	take at least one integer value on $(z^\limminus, r']$. 
	Thus, by the definition of $z^\limplus$ we conclude that 
	$z^\limplus\le r'$, which completes the proof.
\end{proof}

% ------------------  Geometric structure of N --------------------

\subsection{Geometric structure of the particle counting function}\label{sec:GeometricCalculationsForN}

To proceed we will need more information on the local behavior of 
the particle counting function $\N(r)$. We note that 
\begin{equation}
	\N(r)=\sum_{l=1}^{N-2}\frac{|B_r(0) \cap B_R(\by_l)|}{\pi R^2}, 
\end{equation}
where $\by_l$ are the centers (in relative coordinates) of the $N-2$ particles 
not in our presently studied pair.

To analyze $\N(r)$ we thus need to work with the area of the intersection of pairs of disks. An elementary, although slightly tedious, calculation yields the following expression.  
\begin{proposition}\label{prop:the_area_of_intersecting_disks}
Let $B_1=B_{r_1}(\bx_1)$ and $B_2=B_{r_2}(\bx_2)$ be disks of radii 
$r_1, r_2$, with $r_1\le r_2$, centered at the points $\bx_1$ and $\bx_2$. 
Then with $d=|\bx_1-\bx_2|$ we have for the area of intersection, 
in the non-trivial regime $d\le r_1+r_2$ and $d+r_1\ge r_2$, that
\begin{align*}
	|B_1\cap B_2| 
	=\ & 
	r_1^2 \arccos\Bigl(\frac{d^2+r_1^2-r_2^2}{2 d r_1}\Bigr)+ r_2^2\arccos\Bigl(\frac{d^2+r_2^2-r_1^2}{2 d r_2}\Bigr)\\
	&-
	\frac{1}{2}\sqrt{(-d+r_1+r_2)(d+r_1-r_2)(d-r_1+r_2)(d+r_1+r_2)}.
\end{align*}
If $d>r_1+r_2$ the area is zero and if $d+r_1 < r_2$ the area is $\pi r_1^2$.
\end{proposition}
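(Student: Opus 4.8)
The plan is to move the two disks into a convenient frame and reduce the area to a one–dimensional integral. After a rigid motion we may assume $\mathbf{x}_1=(0,0)$ and $\mathbf{x}_2=(d,0)$ with $d\ge 0$. The two asserted trivial cases are immediate: if $d>r_1+r_2$ the disks are disjoint, and if $d+r_1<r_2$ then every point of $B_1$ lies within distance $d+r_1<r_2$ of $\mathbf{x}_2$, so $B_1\subseteq B_2$; the degenerate case $d=0$ likewise gives $B_1\subseteq B_2$ since $r_1\le r_2$. Henceforth I would assume $d>0$, $d\le r_1+r_2$ and $d+r_1\ge r_2$, so that the quantity $x_0:=\dfrac{d^2+r_1^2-r_2^2}{2d}$ is well defined. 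Unravelling the two hypotheses shows $|d-r_1|\le r_2$ and $|d-r_2|\le r_1$, which translate precisely into $x_0\in[-r_1,r_1]$ and $d-x_0=\dfrac{d^2+r_2^2-r_1^2}{2d}\in[0,r_2]$; this is exactly the range information needed so that the $\arccos$'s below have arguments in $[-1,1]$.

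The key structural observation is that the \emph{radical line} of the two circles is $\{x=x_0\}$, obtained by subtracting the two circle equations, and that it cleanly splits the lens. Indeed, for any $(x,y)\in B_1$ one has $(x-d)^2+y^2=(x^2+y^2)-2dx+d^2\le r_1^2-2dx+d^2$, and $r_1^2-2dx+d^2\le r_2^2$ is equivalent to $x\ge x_0$; hence $B_1\cap\{x\ge x_0\}\subseteq B_2$. Expanding $(x-d)^2+y^2\le r_2^2$ the same way gives symmetrically $B_2\cap\{x\le x_0\}\subseteq B_1$. Therefore
$$
	B_1\cap B_2 \;=\; \bigl(B_1\cap\{x\ge x_0\}\bigr)\,\cup\,\bigl(B_2\cap\{x\le x_0\}\bigr),
$$
the two pieces overlapping only in the null set $\{x=x_0\}$, so their areas add. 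I would compute each circular segment directly by slicing in $y$: for $c\in[-r,r]$, using $\int 2\sqrt{r^2-x^2}\,dx=x\sqrt{r^2-x^2}+r^2\arcsin(x/r)$ and $\tfrac{\pi}{2}-\arcsin t=\arccos t$,
$$
	\bigl|\{x^2+y^2\le r^2,\ x\ge c\}\bigr| \;=\; \int_c^{r} 2\sqrt{r^2-x^2}\,dx \;=\; r^2\arccos\!\Big(\tfrac{c}{r}\Big)-c\sqrt{r^2-c^2}.
$$
This identity applied with $(r,c)=(r_1,x_0)$, and — after the substitution $x\mapsto d-x$ — with $(r,c)=(r_2,d-x_0)$, together with the fact that the intersection points are $(x_0,\pm y_0)$ with the common value $y_0:=\sqrt{r_1^2-x_0^2}=\sqrt{r_2^2-(d-x_0)^2}$, yields
$$
	|B_1\cap B_2| \;=\; r_1^2\arccos\!\Big(\tfrac{x_0}{r_1}\Big)+r_2^2\arccos\!\Big(\tfrac{d-x_0}{r_2}\Big)-x_0y_0-(d-x_0)y_0 .
$$
Using this integration form for the segment areas rather than a sector-minus-triangle decomposition is deliberate: it avoids any case split between a "minor" and a "major" segment (which is the one place sign bookkeeping could bite).

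It only remains to recognise the stated closed form. Since $\tfrac{x_0}{r_1}=\tfrac{d^2+r_1^2-r_2^2}{2dr_1}$ and $\tfrac{d-x_0}{r_2}=\tfrac{d^2+r_2^2-r_1^2}{2dr_2}$, the two $\arccos$ terms are already the ones in the statement, and the last two terms combine to $-d\,y_0$. Finally, a one-line simplification — two applications of the difference-of-squares identity —
$$
	4d^2y_0^2 \;=\; (2dr_1)^2-(d^2+r_1^2-r_2^2)^2 \;=\; \bigl(r_2^2-(d-r_1)^2\bigr)\bigl((d+r_1)^2-r_2^2\bigr)
$$
gives $d\,y_0=\tfrac12\sqrt{(-d+r_1+r_2)(d+r_1-r_2)(d-r_1+r_2)(d+r_1+r_2)}$, which finishes the proof. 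There is no genuine obstacle here; the only care required is (i) confirming that the hypotheses $d\le r_1+r_2$ and $d+r_1\ge r_2$ indeed keep $x_0/r_1$ and $(d-x_0)/r_2$ in $[-1,1]$, and (ii) keeping straight in the decomposition which half-plane is paired with which disk.
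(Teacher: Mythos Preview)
Your proof is correct and complete. The paper itself does not supply a proof of this proposition; it merely introduces the formula with the remark that ``an elementary, although slightly tedious, calculation yields the following expression,'' so your argument fills in exactly what the authors chose to omit. The radical-line decomposition you use is clean and, as you note, sidesteps the sign bookkeeping that a sector-minus-triangle approach would require; the range verifications for the $\arccos$ arguments and the recognition that both square roots collapse to the common $y_0$ are handled correctly.
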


Differentiating the flux contribution from a single particle located at 
$\by_l \in \R^2$, given by 
\begin{equation}
	F(|\by_l|, r) := |B_r(0)\cap B_R(\by_l)|/(\pi R^2), 
\end{equation}
we find for arbitrary $d, r \ge 0$ that 
\begin{equation}\label{eq:oneParticleProfile}
	f(d, r):=\frac{\partial}{\partial r}F(d, r) = \left\{\begin{array}{ll}
	2r/ R^2, & \textrm{if }r\le R-d, \\[3pt]
	0, & \textrm{if }r>R+d \textrm{ or }r<d-R, \\
 \frac{2 r}{\pi R^2}\arccos\Bigl(\frac{d^2+r^2-R^2}{2 d r}\Bigr), & \textrm{otherwise.}
\end{array}\right.
\end{equation}

In what follows we will frequently use that $f(d, \cdot\,)$ is essentially concave on its support (compare Figure~\ref{fig:particleProfile}); 
the precise statement and its proof is found in Appendix~\ref{app:concavity}.
Furthermore, it satisfies some simple bounds:

\begin{lemma}\label{lem:ProfileBounds}
	With $f(d, \cdot\, )$ denoting the one-particle profile~\eqref{eq:oneParticleProfile} we have for 
	$d\ge 0$ and $r\ge R$ the following bounds:
	\begin{align}\label{eq:DerivativeBounds}
		f(d, r) &\le f_\sqcap(d, r):= \frac{2}{R} \1_{(d-R, d+R)}(r), \\
		f(d, r) &\ge f_{\wedge}(d, r):= \frac{2(R-d+r)}{\pi R^2}\1_{(d-R, d)}(r)+ \frac{2(d+R-r)}{\pi R^2}\1_{[d, d+R)}(r).
	\end{align}
\end{lemma}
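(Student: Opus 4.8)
The plan is to read off the two bounds from the formula \eqref{eq:oneParticleProfile} (equivalently from Proposition~\ref{prop:the_area_of_intersecting_disks}): since $F(d,r)=|B_r(0)\cap B_R(\by_l)|/(\pi R^2)$ is a normalized area, $f(d,r)$ is $1/(\pi R^2)$ times the length of the arc of $\partial B_r(0)$ lying inside $\bar B_R(\by_l)$, and in the only non-trivial ("otherwise") branch this arc subtends the total angle $2\theta$ at the origin with half-angle $\theta=\theta(d,r):=\arccos\bigl(\tfrac{d^2+r^2-R^2}{2dr}\bigr)\in[0,\pi]$, so $f(d,r)=\tfrac{2r}{\pi R^2}\theta$. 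The cases in which $f\equiv0$ (namely $r>d+R$ or $r<d-R$) and the top branch $r\le R-d$ (which, together with $r\ge R$, forces the single point $d=0$, $r=R$, which is harmless under integration) are immediate, since there all of $f$, $f_\sqcap$, $f_\wedge$ vanish; so I would work on the "otherwise" branch assuming $d>0$, where $|R-d|\le r\le R+d$ and the whole argument rests on the two-term factorizations
\[
	1-\cos\theta=\frac{(R-d+r)(R+d-r)}{2dr},
	\qquad
	1+\cos\theta=\frac{(d+r-R)(d+r+R)}{2dr},
\]
in which all four displayed factors are nonnegative on this branch; in particular $\sin^2\tfrac{\theta}{2}=\tfrac12(1-\cos\theta)=\tfrac{(R-d+r)(R+d-r)}{4dr}$.

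\textbf{Upper bound.} First, $r\ge R$ gives $\cos\theta=\tfrac{d^2+r^2-R^2}{2dr}\ge\tfrac{d^2}{2dr}\ge0$, so $\theta\in[0,\tfrac\pi2]$ and Jordan's inequality (concavity of $\sin$ on $[0,\tfrac\pi2]$) yields $\theta\le\tfrac\pi2\sin\theta$. On the other hand $r^2\sin^2\theta=r^2-\bigl(\tfrac{d^2+r^2-R^2}{2d}\bigr)^2\le R^2$, because after clearing denominators this inequality is exactly $(d^2-(r^2-R^2))^2\ge0$. Combining, $r\theta\le\tfrac\pi2\,r\sin\theta\le\tfrac\pi2 R$, that is $f(d,r)=\tfrac{2r\theta}{\pi R^2}\le\tfrac1R\le\tfrac2R$. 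Finally, wherever $f$ is positive one has $r\in(d-R,d+R)$, so $f_\sqcap(d,r)=\tfrac2R$ there and the claimed bound holds (and trivially where $f=0$).

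\textbf{Lower bound.} Using $\sin x\le x$ in the form $\theta\ge2\sin\tfrac\theta2$ together with the identity for $\sin^2\tfrac\theta2$ above,
\[
	r\theta\ \ge\ 2r\sin\tfrac\theta2\ =\ \sqrt{\frac{r(R-d+r)(R+d-r)}{d}}.
\]
On the part $R\le r\le d$ the assertion $f(d,r)\ge f_\wedge(d,r)$ reads $r\theta\ge R-d+r$; squaring (both sides are $\ge0$) and cancelling the nonnegative factor $R-d+r$ reduces it to $r(R+d-r)\ge d(R-d+r)$, i.e. to $(r-d)(R-r-d)\ge0$, which holds since $r-d\le0$ and $R-r-d\le R-r\le0$. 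On the part $r\ge d$ (still $r\ge R$) the assertion reads $r\theta\ge R+d-r$; the same manipulation reduces it to $(r-d)(R+r+d)\ge0$, true because $r\ge d$. In the degenerate sub-cases where the cancelled factor ($R-d+r$ resp. $R+d-r$) is zero, the right-hand side $f_\wedge$ vanishes and nothing is needed; and for $r\notin(d-R,d+R)$ both sides vanish. This completes the proof.

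\textbf{Main obstacle.} There is no real difficulty here: the computations are elementary. The only points requiring care are the bookkeeping of the piecewise definition of $f$ (in particular simply excluding the harmless boundary value at $d=0$, $r=R$) and getting the two-term factorizations of $1\mp\cos\theta$ right, since these are precisely what collapse the trigonometric estimates $\theta\le\tfrac\pi2\sin\theta$ and $\theta\ge2\sin\tfrac\theta2$ into the polynomial inequalities $(d^2-(r^2-R^2))^2\ge0$ and $(r-d)(R-r-d)\ge0$ on $\{r\le d\}$, $(r-d)(R+r+d)\ge0$ on $\{r\ge d\}$. If a purely geometric route is preferred, the upper bound also follows from the fact that a minor circular arc of radius $r\ge R$ contained in a disk of radius $R$ has length at most $\pi R$ (convexity of $\arcsin$ and the diameter bound $2r\sin\theta\le2R$ on its chord), and the lower bound from comparing the arc to a suitable chord of $B_R(\by_l)$; but the analytic version above is quicker to write out in full.
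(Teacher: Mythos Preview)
Your proof is correct and takes a genuinely different route from the paper's. For the upper bound the paper simply notes geometrically that the arc $\partial B_r(0)\cap B_R(\by_l)$ has length at most $2\pi R$, giving $f\le 2/R$; your Jordan-inequality argument is slightly more work but actually yields the sharper $f\le 1/R$. For the lower bound the paper invokes the concavity of $f(d,\cdot)$ on its support (established separately in Appendix~\ref{app:concavity} by computing the second derivative), which reduces the comparison with the piecewise linear tent $f_\wedge$ to checking a single value, namely $f(d,d)\ge f_\wedge(d,d)$ for $d\ge R$ and $f(d,R)\ge f_\wedge(d,R)$ for $d<R$. Your approach instead exploits $\theta\ge 2\sin(\theta/2)$ and the half-angle factorization directly, reducing everything to the sign of $(r-d)(R\mp(r+d))$; this is entirely self-contained and avoids the concavity appendix. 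The paper's route is shorter \emph{given} that concavity is already available (and it is needed anyway for the subsequent Shape Lemma), while yours is more elementary as a stand-alone argument.

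One small inaccuracy: at the isolated point $d=0$, $r=R$ you write that ``all of $f,f_\sqcap,f_\wedge$ vanish,'' but in fact $f(0,R)=2/R$ from the first branch of \eqref{eq:oneParticleProfile}. You already note this point is harmless under integration, which is the right resolution; just drop the claim that $f$ vanishes there.
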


\begin{proof}[Proof of Lemma~\ref{lem:ProfileBounds}]
	The upper bound for $f$ given by the lemma is clear from the geometric 
	construction of $f$ and $F$. 
	The value of $f$ is equal to the length of the circle segment 
	$\partial B_r(0)\cap B_R(\bx)$ where $|\bx|=d$, divided by $\pi R^2$, 
	and clearly this cannot exceed $2/R$. For the lower bound we use concavity. 

	For $d\ge R$ the function $f(d, \cdot\, )$ 
  	is concave on its support $[d-R, d+R]$ (see Appendix~\ref{app:concavity}). 
	Moreover, $f_\wedge(d, \cdot\,)$ is continuous, piecewise linear and has the same support as $f(d, \cdot\,)$.
	By the construction of $f_\wedge$ and the concavity of $f(d, \cdot\,)$ it suffices to prove that the inequality
	holds at the maximum of $f_\wedge(d, \cdot\,)$, i.e.\ that
	$f(d, d)\ge f_\wedge(d, d)$, 
	which is clear:
	for $d\ge R$ we have that $f(d, d)$ is a decreasing function and that 
	$\lim_{d\to\infty}f(d, d)=\frac{2}{\pi R}=f_\wedge(d, d)$. 

	For $d<R$ we have that $f(d, \cdot\, )$ and $f_\wedge(d, \cdot\, )$ are 
	concave on $[R, d+R]$ and zero otherwise (see Appendix~\ref{app:concavity}). 
	By the linearity of $f_\wedge(d, \cdot\,)$ on this interval it is
  	sufficient to prove that $f(d, R) \ge f_\wedge(d, R)$, which follows since 
	$f(d, R)=\frac{2}{\pi R}\arccos\bigl(\frac{d}{2 R}\bigr) \ge \frac{2d}{\pi R^2}=f_\wedge(d, R)$.
\end{proof}

\begin{figure}[ht]
	\centering
	\begin{tikzpicture}
		\node [above right] at (0,0) {\includegraphics[scale=0.86]{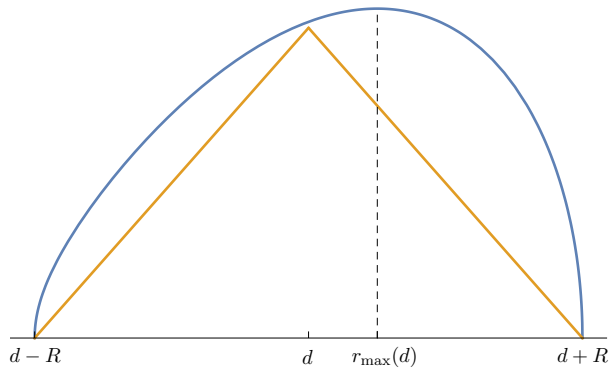}};
		\node [above right] at (0,0.05) {\scalebox{0.7}{$d-R$}};
		\node [above right] at (3.85,0.05) {\scalebox{0.7}{$d$}};
		\node [above right] at (4.5,0) {\scalebox{0.7}{$r_{\textrm{max}}(d)$}};
		\node [above right] at (7.2,0.05) {\scalebox{0.7}{$d+R$}};
	\end{tikzpicture}
	\caption{The one-particle profile $f(d, \cdot\, )$ and its lower bound 
		$f_\wedge(d, \cdot\, )$ plotted over the support of $f$. 
		The profile depicted is for $d = 3R/2$, while as $d$ increases 
		this profile more and more resembles the upper half of a disk.}
	\label{fig:particleProfile}
\end{figure}

% ------------------  Shape lemma  --------------------

The following lemma captures in a convenient form essential aspects of the shape
of the particle profile, and will play an important role in the analysis 
on intervals of oscillation below.

\begin{lemma}[Shape lemma]\label{lem:ShapeLemma}
	If $r\in [r_1, r_2]$ with $r_1\ge R$ and $r_2-r_1 \le R/2$, we have that
	\begin{equation}
		\N\hspace{0.5pt}'(r_1)+\N\hspace{0.5pt}'(r_2) \ge \N\hspace{0.5pt}'(r).
	\end{equation}
\end{lemma}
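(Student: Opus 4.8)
The plan is to reduce the statement to a single particle and then exploit the concavity of the one-particle profile from Appendix~\ref{app:concavity}. Since $\N\hspace{0.5pt}'(r)=\sum_{l=1}^{N-2}f(|\by_l|,r)$ is a finite sum and the asserted inequality is additive in $\N\hspace{0.5pt}'$, it suffices to prove the one-particle bound
$$
f(d,r_1)+f(d,r_2)\ \ge\ f(d,r),\qquad R\le r_1\le r\le r_2,\quad r_2-r_1\le R/2,
$$
for each fixed $d\ge 0$; summing over $l$ then gives the lemma. Fix such a $d$, write $g:=f(d,\cdot\,)$ and let $[a,b]$ be the closure of its support, so $g(a)=g(b)=0$ and, by Appendix~\ref{app:concavity} (together with the elementary observation that for $d<R$ the left linear piece $2r/R^2$ joins the concave piece of \eqref{eq:oneParticleProfile} with a downward jump of the derivative at $r=R-d$, so that $g$ is concave on all of $[0,d+R]$), the function $g$ is concave on $[a,b]$. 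In particular $g$ is unimodal; if the maximum of $g$ over $[r_1,r_2]$ is attained at an endpoint, then $g(r)\le\max\{g(r_1),g(r_2)\}\le g(r_1)+g(r_2)$ since $g\ge 0$, and we are done. Hence we may assume the maximizer $r_{\max}(d)$ of $g$ lies in $(r_1,r_2)$, and it then suffices to prove $g(r_1)+g(r_2)\ge g(r_{\max}(d))$.

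For this I would use two geometric facts about the position of the peak, obtained from the sign of $\partial_r f(d,r)$ via the explicit formula \eqref{eq:oneParticleProfile} by computations of the same flavour as the concavity estimate in Appendix~\ref{app:concavity}: the profile is non-decreasing up to $r=d$ (so $r_{\max}(d)\ge d$), and it is non-increasing from $r=d+R/2$ onward (so $b-r_{\max}(d)=d+R-r_{\max}(d)\ge R/2$). Concavity of $g$ on $[a,r_{\max}(d)]$ and on $[r_{\max}(d),b]$ then gives the chord lower bounds $g(r_1)\ge\frac{r_1-a}{r_{\max}(d)-a}\,g(r_{\max}(d))$ and $g(r_2)\ge\frac{b-r_2}{b-r_{\max}(d)}\,g(r_{\max}(d))$, so the claim reduces to the elementary inequality $\frac{r_1-a}{m}+\frac{b-r_2}{b-a-m}\ge 1$ with $m:=r_{\max}(d)-a$. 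Writing it over a common denominator, its numerator is $(r_1-a)\bigl((b-a)-2m\bigr)+m\bigl(m-(r_2-r_1)\bigr)$. If $(b-a)-2m\ge 0$ the first term is nonnegative and the second is too, because $m\ge R/2\ge r_2-r_1$ (using $r_{\max}(d)\ge d$ when $d\ge R$, and $r_{\max}(d)>r_1\ge R$ with $a=0$ when $d<R$). If $(b-a)-2m<0$, then $r_1-a\le m$ makes the first term $\ge m\bigl((b-a)-2m\bigr)$, so the numerator is $\ge m\bigl((b-a)-m-(r_2-r_1)\bigr)=m\bigl(b-r_{\max}(d)-(r_2-r_1)\bigr)\ge m(R/2-R/2)=0$ by the second geometric fact. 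This settles the reduced inequality in all cases, and hence the lemma.

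The main obstacle is establishing the second geometric fact, namely that the peak of $f(d,\cdot\,)$ stays a fixed distance ($\ge R/2$) from the right end of its support. The chord estimate above is genuinely false for an arbitrary concave function vanishing at both endpoints whose maximum is too close to an endpoint — one can then make $g(r_1)+g(r_2)$ strictly less than $g(r_{\max}(d))$ — so the argument really depends on this shape information about the profile. Extracting it sharply enough from \eqref{eq:oneParticleProfile}, uniformly in $d\ge 0$, and dealing cleanly with the regime $d<R$ (where the left endpoint of the relevant interval would fail to be a zero of the profile were it not for the linear piece), is where the work lies.
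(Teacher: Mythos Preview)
Your reduction to the one-particle profile and the trivial/non-trivial dichotomy are the same as the paper's. The chord-bound argument you set up for the non-trivial case is different from the paper's and works cleanly for $d\ge R$, where full concavity on $[d-R,d+R]$ does hold (Proposition~\ref{prop:Concavity}) and where the paper indeed verifies $r_{\max}(d)\in(d-R/2,d+R/2)$, which delivers both of your geometric facts.

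The gap is the parenthetical assertion that for $d<R$ the profile $g=f(d,\cdot)$ is concave on all of $[0,d+R]$. It is not: with $R=1$ the sign polynomial $p$ in Appendix~\ref{app:concavity} gives, e.g., $p(1/2,3/5)<0$, so $g''>0$ there; moreover $g'((R-d)^+)=-\infty$ while one computes $g'(r)>0$ at, say, $r=0.7$ when $d=1/2$, so $g$ is not even unimodal on $[0,d+R]$. Both places where you invoke full concavity --- the unimodality feeding your dichotomy and the left chord bound $g(r_1)\ge \tfrac{r_1-a}{r_{\max}(d)-a}\,g(r_{\max}(d))$ with $a=0$ --- are therefore unjustified as written. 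The argument is, however, repairable once you restrict to the region that actually matters: since $r_1\ge R$, take $r^*$ to be the maximizer of $g$ on $[R,d+R]$ (where concavity \emph{does} hold), and then the right chord bound from $b=d+R$ is valid, while the left chord bound from the origin follows not from concavity but from the fact that $g(r)/r=\tfrac{2}{\pi R^2}\arccos\bigl(\tfrac{d^2+r^2-R^2}{2dr}\bigr)$ is non-increasing (the argument of the $\arccos$ has $r$-derivative $\tfrac{r^2-d^2+R^2}{2dr^2}>0$ for $d<R$). The paper avoids the whole issue by a direct case split on $d$: for $d\le 2R/3$ it shows $g$ is monotone on $[R,d+R]$, so the non-trivial case never occurs, and for $2R/3<d<R$ it reduces to the extremal positions of $(r_1,r_2)$ inside $[R,d+R]$ and checks the worst case $(R,3R/2)$ by an explicit computation.
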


\begin{remark} 
	The assumption $r_2-r_1\le R/2$ can be relaxed slightly by instead 
	requiring that $r_1$ is sufficiently large. 
	In particular, in the limit $r_1\to \infty$ the one-particle profile 
	approaches a half disk and it is then geometrically clear that the 
	statement holds whenever $r_2-r_1\le R$. 
\end{remark}

\begin{proof}[Proof of Lemma~\ref{lem:ShapeLemma}]
By linearity it is sufficient to prove that the inequality holds with 
$\N\hspace{0.5pt}'$ replaced by the one-particle profile 
$f(d, \cdot\, )$ for any $d\ge 0$.

The proof utilizes that the profile $f(d, \cdot\, )$ is concave on its support 
intersected with $[R, \infty)$, which is shown in Appendix~\ref{app:concavity}. 
If in addition $d\ge R$ the profile is concave on its full support 
$(d-R, d+R)$, also shown in Appendix~\ref{app:concavity}. 
Thus, whenever $(r_1, r_2)$ does not contain the maximum of 
$f(d, \cdot\, )$ the 
statement is clear, since if this is the case $f(d, \cdot\, )$ is monotone here 
and thus has its maximum value in either $r_1$ or $r_2$.

Thus we may assume that the unique maximum of $f(d, \cdot\, )$ is attained at a point 
$r_{\textrm{max}}(d)$ in $(r_1, r_2)$. Moreover, by the concavity of 
$f(d, \cdot\, )$ it suffices to consider the case when $|r_2-r_1|=R/2$. 
The inequality we wish to prove can now be written as
\begin{equation}\label{eq:reducedShapeLemma}
	f(d, r_{\textrm{max}}(d))\le f(d, r_1)+f(d, r_1+R/2), 
\end{equation}
which should hold for all $r_1\ge R$ such that 
$r_{\textrm{max}}(d)\in (r_1, r_1+R/2)$. 

{\bf Case 1: $d\ge R$.} In this case it holds that 
$(r_1, r_1+R/2) \subseteq (d-R, d+R)$, since 
$r_{\textrm{max}}(d) \in (d-R/2, d+R/2)$.
This can be verified by considering $\frac{\partial}{\partial r}f(d, r)|_{r=d+R/2}$, 
which can be shown by straightforward computation to be decreasing in $d$ and
moreover it is negative at $d=R$.
Similarly, $\frac{\partial}{\partial r}f(d, r)|_{r=d-R/2}$
can be verified to be positive, and hence $d-R/2 < r_{\textrm{max}}(d) < d+R/2$.

This implies that the right-hand side of~\eqref{eq:reducedShapeLemma} is a 
concave function of $r_1$, and hence its minimum value is attained at one of 
the extremal points of the allowed intervals. But this is precisely when either 
$r_1$ or $r_2$ is equal to $r_{\textrm{max}}(d)$, in which case the statement 
is trivial by the non-negativity of $f$.

{\bf Case 2: $d\le 2R/3$.} 
By similar calculations as in Case 1, we have that
$\frac{\partial}{\partial r}f(d, r)|_{r=R} < 0$ for $d \le 2R/3$.
Then by concavity $f(d, \cdot\,)$ is a monotonically decreasing function on $[R, d+R]$.
Thus $f(d, r_1)\ge f(d, r)$ and the statement follows.

{\bf Case 3: $2R/3 < d < R$.} Again the function $f(d, \cdot\, )$ is concave on 
$(R, d+R)$. Thus we again only need to consider the extremal cases of the 
intervals $(r_1, r_2)$ containing the maximum of $f(d, \cdot\, )$ on this 
interval. This reduces to three different options. Either $r_1=R$, 
or $r_2=d+R$, or one of the endpoints of the interval is located at the maximum. 
In the last case the statement is trivially true.

If we were in the second option then $(r_1, r_2)=(d+R/2, d+R)$. 
Through a similar computation as above one checks that on this interval $f(d, \cdot\, )$ 
is monotone, and hence the statement follows. 

If however $(r_1, r_2)=(R, 3R/2)$ the inequality is reduced to
\begin{equation}
	f(d, r)\le f(d, R)+f(d, 3R/2).
\end{equation}
By scaling we may without loss of generality assume that $R=1$. Using the explicit expression of $f$ we need to show that
\begin{equation}
	\frac{2r}{\pi}\arccos\biggl(\frac{d^2+r^2-1}{2dr}\biggr)\le \frac{2}{\pi}\arccos\biggl(\frac{d}{2}\biggr)+\frac{3}{\pi}\arccos\biggl(\frac{d^2+5/4}{3d}\biggr).
\end{equation}
Since for
$d\le R=1$ we have that $r_{\textrm{max}}(d)\le 3/2$, it follows that
\begin{equation}
	f(d, r_{\textrm{max}}(d)) \le \frac{3}{\pi}\arccos\biggl(\frac{d^2+r_{\textrm{max}}(d)^2-1}{2d r_{\textrm{max}}(d)}\biggr).
\end{equation}
But the function $\frac{3}{\pi}\arccos\bigl(\frac{d^2+r^2-1}{2d r}\bigr)$ is decreasing in $r$, for $1\le r\le d+1$, and thus we only need to verify the inequality
\begin{equation}
	\frac{3}{\pi}\arccos\biggl(\frac{d}{2}\biggr)\le \frac{2}{\pi}\arccos\biggl(\frac{d}{2}\biggr)+\frac{3}{\pi}\arccos\biggl(\frac{d^2+5/4}{3d}\biggr).
\end{equation}
This is equivalent to $\arccos\bigl(\frac{d}{2}\bigr)\le 3\arccos\bigl(\frac{d^2+5/4}{3d}\bigr)$. 
We observe that the left-hand side of this inequality is decreasing whilst the right is increasing. Thus it suffices to check the validity at $d=2/3$, which is a simple numerical evaluation.
\end{proof}

% ------------------  Bounds on the local mean of \rho  --------------------

\subsection{Local bounds for the mean potential}

In this subsection we use the explicit form of $\N(r)$ uncovered above
for $r\in(R, L)$ 
and the projection argument of Lemma~\ref{lem:1Dprojection}
to locally replace the effective 1-dimensional potential $\rho(r)/r$ 
with some constant times $\alpha_N^2/r$. 
By Lemma~\ref{lem:1Dprojection} it suffices to prove that given an interval
$I\subseteq (R, L)$ of small enough measure we have a suitable 
bound for the weighted mean $\bar\rho$ on $I$. 
On intervals \eqref{eq:rhoOutsideIk} 
where $\rho$ is already larger than $\alpha_N^2$
we need not perform any detailed analysis.
Thus the only intervals that remain are those 
of the form $I_q = (z_q^\limminus, z_q^\limplus) \subseteq J_q$
close to the zeros of $\rho$. 
The analysis is split into several parts depending on the behavior of $\rho$ 
near a specific zero. Our first bound provides a general estimate for $\bar\rho$ 
on any subinterval of the $J_q$ constructed above (Section~\ref{sec:potential-structure}) 
which contains the unique zero of $\rho$ on this interval.

% ------------------  General bounds  --------------------

\newcommand{\rzero}{{r_0}}
\begin{lemma}\label{lem:GeneralIntegralBound}
	Let $(r_1, r_2)$, with $r_1\ge R/2$, be such that on this interval 
	$\rho(r)=|\hat\Phi(r)-2q|^2$ for some fixed $q\in \Z$ 
	and such that there exists some 
	$\rzero \in (r_1, r_2)$ with $\rho(\rzero)=0$. 
	Then, with $\delta(r):=\min\{r-r_1, r_2-r\}$, we have that
	\begin{equation}
	 	\int_{r_1}^{r_2}\frac{\rho(r)}{r}\, dr \ge \frac{2\alpha^2}{r_2(r_2-r_1)}\biggl(\int_{r_1}^{r_2}\N\hspace{0.5pt}'(r)\delta(r)\, dr\biggr)^2, 
	\end{equation} 
	where as before $\N(r)$ denotes the particle counting function~\eqref{eq:Particle_Counting_Function}.
\end{lemma}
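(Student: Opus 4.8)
The plan is to reduce the claimed bound to the identity \eqref{eq:rhoInTermsOfN} together with a single application of Cauchy--Schwarz in one variable. First I would observe that the hypotheses --- $\rho(r)=|\hat\Phi(r)-2q|^2$ on $(r_1,r_2)$ for one fixed $q\in\Z$, and $\rho(\rzero)=0$ for some $\rzero\in(r_1,r_2)$ --- immediately give
$$
	\rho(r) = 4\alpha^2\bigl(\N(r)-\N(\rzero)\bigr)^2, \qquad r\in(r_1,r_2),
$$
exactly as in \eqref{eq:rhoInTermsOfN}: since $r_1\ge R/2$ we have $\hat\Phi(r)=\alpha(1+2\N(r))$ throughout the interval, so $\rho(\rzero)=0$ forces $\alpha(1+2\N(\rzero))=2q$, and subtracting this from $\hat\Phi(r)$ before squaring removes the integer $q$. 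Bounding $1/r\ge 1/r_2$ pointwise on $(r_1,r_2)$, it then suffices to prove the purely one-dimensional inequality
$$
	\int_{r_1}^{r_2}\phi(r)^2\, dr \ \ge\ \frac{1}{r_2-r_1}\biggl(\int_{r_1}^{r_2}\N\hspace{0.5pt}'(r)\,\delta(r)\, dr\biggr)^2, \qquad \phi := \N-\N(\rzero),
$$
since multiplying by $4\alpha^2/r_2$ then yields the assertion, with a factor of two to spare.

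For the one-dimensional bound I would use that $\phi$ is absolutely continuous with $\phi'=\N\hspace{0.5pt}'\ge0$ (the counting function \eqref{eq:Particle_Counting_Function} is the integral of the bounded density $\sum_l f(|\by_l|,\cdot\,)$ and is monotone), while $\delta(r)=\min\{r-r_1,r_2-r\}$ is Lipschitz with $\delta(r_1)=\delta(r_2)=0$ and $|\delta'|=1$ a.e.\ (with $m:=(r_1+r_2)/2$ it equals $+1$ on $(r_1,m)$ and $-1$ on $(m,r_2)$). Integration by parts (the boundary terms vanish because $\delta$ vanishes at the endpoints) followed by Cauchy--Schwarz then gives
$$
	\biggl|\int_{r_1}^{r_2}\N\hspace{0.5pt}'(r)\,\delta(r)\, dr\biggr| = \biggl|\int_{r_1}^{r_2}\phi(r)\,\delta'(r)\, dr\biggr| \le \int_{r_1}^{r_2}|\phi(r)|\, dr \le (r_2-r_1)^{1/2}\biggl(\int_{r_1}^{r_2}\phi(r)^2\, dr\biggr)^{1/2},
$$
and squaring is precisely the inequality needed.

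This statement carries essentially no obstacle; the two points that require a little care are that the flux identity $\hat\Phi=\alpha(1+2\N)$, and hence \eqref{eq:rhoInTermsOfN}, really holds on all of $(r_1,r_2)$ --- which is why one assumes $r_1\ge R/2$ and why it is used that the minimizing $q$ in \eqref{eq:def_rho} stays constant on the interval --- and that the integration by parts is legitimate, for which the absolute continuity of $\N$ and the Lipschitz regularity of $\delta$ suffice. The factor-of-two slack relative to the stated constant $2\alpha^2$ is harmless and comes from discarding the tent shape of $\delta$ in the crude bound $|\delta'|=1$; it could be recovered by splitting the integral at $m$ and applying Cauchy--Schwarz on each half, but this refinement is not needed in what follows.
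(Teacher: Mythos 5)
Your argument is correct, and it proves a slightly stronger bound: you land on the constant $4\alpha^2$ rather than the paper's $2\alpha^2$, so the stated inequality follows with a factor of two to spare (as you note). The reduction steps are identical to the paper's --- the identity $\rho(r)=4\alpha^2(\N(r)-\N(r_0))^2$ from~\eqref{eq:rhoInTermsOfN} and the crude bound $1/r\ge1/r_2$ --- but the core one-dimensional estimate is derived differently. The paper splits $\int_{r_1}^{r_2}\phi^2$ at the zero $r_0$, applies Cauchy--Schwarz (with uniform weight) on each half, changes the order of integration to produce the weights $t-r_1$ and $r_2-t$, and finally joins the two halves via $2(a^2+b^2)\ge(a+b)^2$; the last step is what costs the factor of two. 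Your route replaces all of this with a single integration by parts --- passing the derivative from $\N$ to $\delta$, so that the kernel becomes $\delta'=\pm1$ --- followed by one Cauchy--Schwarz. This avoids the split-and-recombine entirely and is both shorter and quantitatively sharper. The hypotheses are used exactly as in the paper: $r_1\ge R/2$ and the fixed minimizer $q$ guarantee $\hat\Phi=\alpha(1+2\N)$ throughout, and the Lipschitz (hence absolutely continuous) regularity of both $\N$ (a finite sum of bounded-derivative profiles $F(|\by_l|,\cdot)$) and $\delta$ justifies the integration by parts, with the boundary term vanishing because $\delta(r_1)=\delta(r_2)=0$. One minor quibble: the concluding remark attributing the spare factor to ``discarding the tent shape'' and proposing to recover more by splitting at the midpoint is not quite right --- $|\delta'|=1$ a.e.\ is exact rather than lossy, and splitting at $m$ and reassembling via Cauchy--Schwarz reproduces the same bound --- but this speculation does not affect the validity of the proof.
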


\begin{proof}[Proof of Lemma~\ref{lem:GeneralIntegralBound}]
	On such an interval $(r_1, r_2)$ we can, 
	according to \eqref{eq:rhoInTermsOfN}, express $\rho$ in terms of $\N$ as
	\begin{align*}
		\rho(r) &= |\alpha(1+2\N(r))-\alpha(1+2\N(\rzero))|^2
		=
		4\alpha^2|\N(r)-\N(\rzero)|^2.
	\end{align*}
	Inserting this into the integral we wish to bound and using the trivial 
	estimate $1/r\ge 1/r_2$, we have that
	\begin{equation}
		\int_{r_1}^{r_2}\frac{\rho(r)}{r}\, dr \ge \frac{4\alpha^2}{r_2}\int_{r_1}^{r_2}|\N(r)-\N(\rzero)|^2\, dr.
	\end{equation}
	We split the above integral into two parts, 
	\begin{align*}
		\int_{r_1}^{r_2}|\N(r)-\N(\rzero)|^2\, dr 
		&=
		\int_{r_1}^{\rzero}(\N(r)-\N(\rzero))^2\, dr +\int_{\rzero}^{r_2}(\N(\rzero)-\N(r))^2\, dr \\
		&=
		\int_{r_1}^{\rzero}\biggl(\int_r^\rzero \N\hspace{0.5pt}'(t)\, dt\biggr)^2dr +
		\int_{\rzero}^{r_2}\biggl(\int_\rzero^r \N\hspace{0.5pt}'(t)\, dt\biggr)^2dr.
	\end{align*}
	Using the Cauchy--Schwarz inequality and changing the order of integration 
	one finds that
	\begin{align*}
		\int_{r_1}^{r_2}|\N(r)-\N(\rzero)|^2\, dr
		&\ge
		\frac{1}{r_2-r_1}\biggl(\biggl(\int_{r_1}^\rzero\int_r^\rzero \N\hspace{0.5pt}'(t)\, dt dr\biggr)^2+\biggl(\int_{\rzero}^{r_2}\int_\rzero^r \N\hspace{0.5pt}'(t)\, dt dr\biggr)^2\biggr)\\
		&=
		\frac{1}{r_2-r_1}\biggl(\biggl(\int_{r_1}^\rzero\int_{r_1}^t \N\hspace{0.5pt}'(t)\, dr dt\biggr)^2+\biggl(\int_{\rzero}^{r_2}\int_t^{r_2} \N\hspace{0.5pt}'(t)\, dr dt\biggr)^2\biggr)\\
		&=
		\frac{1}{r_2-r_1}\biggl(\biggl(\int_{r_1}^\rzero \N\hspace{0.5pt}'(t)(t-r_1)\, dt\biggr)^2+\biggl(\int_{\rzero}^{r_2}\N\hspace{0.5pt}'(t)(r_2-t)\, dt\biggr)^2\biggr).
	\end{align*}
	To obtain the desired estimate we combine the above with the observation 
	that both $t-r_1$ and $r_2-t$ are larger than $\delta(t)$, and the elementary 
	inequality $2(a^2+b^2)\ge (a+b)^2$, 
	\begin{align*}
		\int_{r_1}^{r_2}\frac{\rho(r)}{r}\, dr 
		&\ge
		\frac{4\alpha^2}{r_2(r_2-r_1)}\biggl(\biggl(\int_{r_1}^\rzero \N\hspace{0.5pt}'(r)\delta(r)\, dr\biggr)^2+\biggl(\int_{\rzero}^{r_2}\N\hspace{0.5pt}'(r)\delta(r)\, dr\biggr)^2\biggr)\\
		&\ge
		\frac{2\alpha^2}{r_2(r_2-r_1)}\biggl(\int_{r_1}^{r_2} \N\hspace{0.5pt}'(r)\delta(r)\, dr\biggr)^2.\qedhere
	\end{align*}
\end{proof}

% ------------------  Good interval lemma  --------------------

We now study $\bar\rho$ on the intervals $I_q = (z^\limminus_q, z^\limplus_q)$
constructed earlier around zeros of $\rho$, with $\N(z^\limpm_q) \in \Z$.
We begin with a lemma providing a bound for the local weighted mean on a 
certain subclass of these intervals where the potential is in some sense 
well behaved.

\begin{lemma}[Good intervals]\label{lem:GoodIntervals}
	Let $I_q=(z^\limminus, z^\limplus)$ be one of the intervals constructed 
	above which satisfies $z^\limminus\ge R$. Then if either
	\begin{equation}
		|I_q|\ge C R \quad\mbox{or}\quad \frac{\inf_{I_q}\N\hspace{0.5pt}'}{\sup_{I_q}\N\hspace{0.5pt}'}\ge \frac{C^2}{\pi} 
	\end{equation}
	for some $0<C\le 1$, we have that
	\begin{equation}
		\bar\rho_{I_q} := {\int_{I_q}\frac{\rho(r)}{r}\, dr} \biggm/{\int_{I_q}\frac{dr}{r}} \ge \frac{\alpha^2C^4}{24\pi^2}.
	\end{equation}
\end{lemma}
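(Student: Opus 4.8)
The plan is to apply the general integral bound of Lemma~\ref{lem:GeneralIntegralBound} on the interval $I_q=(z^\limminus,z^\limplus)$, which contains the unique zero $r_q$ of $\rho$ with $\rho=|\hat\Phi-2q|^2$ throughout. That lemma gives $\int_{I_q}\rho(r)/r\,dr \ge \frac{2\alpha^2}{z^\limplus |I_q|}\bigl(\int_{I_q}\N'(r)\delta(r)\,dr\bigr)^2$, with $\delta(r)=\min\{r-z^\limminus, z^\limplus-r\}$. Since $\int_{I_q} dr/r \le |I_q|/z^\limminus \le |I_q|/R$ and also $z^\limplus/z^\limminus$ is bounded (because $|I_q|\le 2R \le 2z^\limminus$, so $z^\limplus \le 3z^\limminus$), we get a lower bound of the form $\bar\rho_{I_q} \gtrsim \frac{\alpha^2}{|I_q|^2}\bigl(\int_{I_q}\N'\delta\bigr)^2$ up to a universal constant. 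Actually one wants to be careful with the ratio $z^\limplus/z^\limminus$; the cleanest route is $z^\limplus/z^\limminus = 1 + |I_q|/z^\limminus \le 3$ so $z^\limplus \le 3 z^\limminus$, hence $\frac{z^\limminus}{z^\limplus\,|I_q|}\ge\frac{1}{3|I_q|}$, and altogether $\bar\rho_{I_q}\ge \frac{2\alpha^2}{3|I_q|^2}\bigl(\int_{I_q}\N'(r)\delta(r)\,dr\bigr)^2$.

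So the task reduces to bounding $\int_{I_q}\N'(r)\delta(r)\,dr$ from below in terms of $|I_q|$ and $\sup_{I_q}\N'$ or $\inf_{I_q}\N'$. First I would use the normalization $\N(z^\limplus)-\N(z^\limminus)=1$, i.e.\ $\int_{I_q}\N'(r)\,dr=1$. Combined with the Shape Lemma (Lemma~\ref{lem:ShapeLemma}), which says $\N'(r_1)+\N'(r_2)\ge\N'(r)$ whenever $r\in[r_1,r_2]$ and $r_2-r_1\le R/2$, one controls $\sup_{I_q}\N'$ in terms of averages: partition $I_q$ (whose length is at most $2R$) into at most four subintervals of length $\le R/2$; on each, the supremum of $\N'$ is bounded by the sum of the endpoint values, hence by a constant multiple of the subinterval's average of $\N'$, which in turn is bounded by $\frac{1}{|I_q|_{\text{sub}}}\le \frac{C_0}{|I_q|}$ times the subinterval's contribution to $\int\N'=1$. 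This yields $\sup_{I_q}\N' \le C_1/|I_q|$ for a universal $C_1$. Then $\int_{I_q}\N'(r)\delta(r)\,dr$: since $\delta(r)\ge |I_q|/4$ on the middle half of $I_q$, I lower-bound this integral by $\frac{|I_q|}{4}\int_{\text{middle half}}\N'(r)\,dr$, and I need the $\N'$-mass on the middle half to be a definite fraction of the total mass $1$; this is where the two hypotheses enter.

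In the first case, $|I_q|\ge CR$: here I would use that some single particle covers all of $I_q$ (Lemma~\ref{lem:OneParticleCoverIk}), so $\N'(r)\ge f_\wedge(d,r)$ for that particle's center distance $d$, and $f_\wedge$ is an explicit tent of base $2R$ and height $2/(\pi R^2)$; integrating the tent over the middle half of $I_q$ (length $\ge CR/2$) gives a contribution $\gtrsim C^2/\pi$ to $\int\N'\delta / |I_q|$, which after squaring and tracking constants produces $\bar\rho_{I_q}\ge \alpha^2 C^4/(24\pi^2)$. In the second case, $\inf_{I_q}\N'/\sup_{I_q}\N'\ge C^2/\pi$: since $\int_{I_q}\N'=1$ we get $\sup_{I_q}\N'\ge 1/|I_q|$, hence $\inf_{I_q}\N'\ge C^2/(\pi|I_q|)$ everywhere on $I_q$, and then $\int_{I_q}\N'(r)\delta(r)\,dr\ge \frac{C^2}{\pi|I_q|}\int_{I_q}\delta(r)\,dr = \frac{C^2}{\pi|I_q|}\cdot\frac{|I_q|^2}{4}=\frac{C^2|I_q|}{4\pi}$, so $\bar\rho_{I_q}\ge \frac{2\alpha^2}{3|I_q|^2}\cdot\frac{C^4|I_q|^2}{16\pi^2}=\frac{\alpha^2 C^4}{24\pi^2}$, matching the stated constant exactly.

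The main obstacle is the first case: extracting a lower bound on the $\N'$-mass concentrated in the middle portion of $I_q$ when $I_q$ is long, since a priori $\N'$ could be large near the endpoints and small in the middle. This is precisely what the combination of the Shape Lemma (to cap $\sup\N'$, forbidding spikes) and the one-particle covering lemma (to guarantee a genuine tent-shaped contribution that is sizeable in the middle) is designed to handle, and getting the bookkeeping of the various universal constants to collapse into the clean $C^4/(24\pi^2)$ will require some care but no new idea. The second case is essentially immediate from Lemma~\ref{lem:GeneralIntegralBound} and the normalization, as shown above.
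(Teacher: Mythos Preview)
Your treatment of the second case ($\inf_{I_q}\N'/\sup_{I_q}\N'\ge C^2/\pi$) is correct and coincides with the paper's argument exactly, including the constant.

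For the first case ($|I_q|\ge CR$), your approach and the paper's diverge. The paper does \emph{not} use the Shape Lemma here, nor does it restrict to the middle half of $I_q$. Instead it bounds $\int_{I_q}\N'(r)\delta(r)\,dr$ directly by $\int_{I_q}f_\wedge(d,r)\delta(r)\,dr$ for the covering particle, observes this integral is piecewise linear in $d$, reduces to the extremal position (particle edge at an endpoint of $I_q$), and computes explicitly that it is at least $|I_q|^3/(4\pi R^2)$ when $|I_q|\le R$ and $|I_q|/(4\pi)$ when $|I_q|>R$. Combined with $|I_q|\ge CR$ this gives $\int_{I_q}\N'\delta\ge C^2|I_q|/(4\pi)$, and the stated constant $\alpha^2 C^4/(24\pi^2)$ follows.

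Two issues with your version of Case~1. First, the Shape Lemma application you sketch is not valid: it bounds $\sup\N'$ on a short subinterval by the sum of two endpoint values of $\N'$, but those endpoint values are not themselves controlled by $1/|I_q|$ --- many particles can concentrate near a point, making $\N'$ arbitrarily large there. So the conclusion $\sup_{I_q}\N'\le C_1/|I_q|$ is false in general. Fortunately it is not needed. Second, your middle-half approach \emph{does} work (via the $f_\wedge$ lower bound alone, no Shape Lemma), but it loses a factor: in the worst case (tent rising linearly from $z^\limminus$) one gets $\int_{\text{middle half}}f_\wedge = |I_q|^2/(2\pi R^2)$, hence $\int_{I_q}\N'\delta\ge |I_q|^3/(8\pi R^2)\ge C^2|I_q|/(8\pi)$, leading to $\bar\rho_{I_q}\ge \alpha^2 C^4/(96\pi^2)$ rather than $\alpha^2 C^4/(24\pi^2)$. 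So the approach is sound but the claim of recovering the exact constant is not; the paper's direct integration of $f_\wedge\delta$ over all of $I_q$ is both simpler and sharper.
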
 

\begin{remark}
We will later see that for our treatment of intervals $I_q$ that are not 
covered by this lemma we will need to choose $C$ rather small, 
approximately $C \approx 1/10$.
\end{remark}

\begin{proof}[Proof of Lemma~\ref{lem:GoodIntervals}] By Lemma~\ref{lem:GeneralIntegralBound} we may estimate the integral of the potential by
\begin{equation}
	\int_{I_q}\frac{\rho(r)}{r}\, dr \ge \frac{2\alpha^2}{z^\limplus(z^\limplus-z^\limminus)}\biggl(\int_{I_q}\N\hspace{0.5pt}'(r)\delta(r)\, dr\biggr)^2.
\end{equation}

By Lemma~\ref{lem:OneParticleCoverIk} the interval $I_q$ is covered by at least one particle. Thus for $r\in I_q$ we can bound $\N\hspace{0.5pt}'(r)$ from below by using our lower bound for the one-particle profile $f(d, r)$ and minimizing over particle positions $d$ such that $I_q \subseteq (d-R, d +R)$. Let as before $f_\wedge(d, r)$ denote the lower bound for $f$ given by Lemma~\ref{lem:ProfileBounds}. We conclude that
\begin{equation}
	\int_{I_q} \N\hspace{0.5pt}'(r) \delta(r)\, dr \ge \inf_{d\in (z^\limminus-2R, z^\limplus+2R)} \int_{I_q} f_\wedge(d, r)\delta(r)\, dr.
\end{equation}
As this integrand is piecewise linear in $d$ we must have that the integral is minimized in one of the extremal points: a particle starting at $z^\limminus$, a particle ending at $z^\limplus$ or a particle centered at $(z^\limplus-z^\limminus)/2$. By symmetry the last alternative maximizes the integral and thus we can discard this option. Moreover, the same symmetry implies that the first two alternatives are equal. Through a straightforward calculation we find that
\begin{equation}
	\int_{I_q} \N\hspace{0.5pt}'(r) \delta(r)\, dr \ge 
	\frac{1}{4\pi}\left\{\begin{array}{ll}
		|I_q|^3/R^2, & \textrm{if } |I_q|\le R, \\[3pt]
		|I_q|, & \textrm{if } |I_q| >R.
	\end{array}\right.
\end{equation}
Thus if $|I_q|\ge CR$, $0<C\le1$, the above yields
\begin{align*}
	 \int_{I_q}\frac{\rho(r)}{r} \, dr &\ge \frac{\alpha^2 C^4}{8\pi^2 z^\limplus} |I_q|.
\end{align*}

If instead of $|I_q|\ge CR$ we have that
\begin{equation}\label{eq:InfSupRatioLarge}
	\frac{\inf_{I_q}\N\hspace{0.5pt}'}{\sup_{I_q}\N\hspace{0.5pt}'}\ge \frac{C^2}{\pi}
\end{equation}
we can obtain the same bound. Namely, if we again consider the bound given by Lemma~\ref{lem:GeneralIntegralBound}, 
\begin{equation}
	\int_{I_q} \frac{\rho(r)}{r} \, dr \ge
	\frac{2\alpha^2}{z^\limplus(z^\limplus-z^\limminus)}
	\biggl(\int_{I_q} \N\hspace{0.5pt}'(r) \delta(r)\, dr \biggr)^2, 
\end{equation}
we find, using
$\int_{I_q} \delta(r) \, dr = |I_q|^2/4$, 
that
\begin{equation}
	\int_{I_q} \frac{\rho(r)}{r} \, dr \ge
	\frac{\alpha^2}{8 z^\limplus}(\inf_{I_q}\N\hspace{0.5pt}')^2|I_q|^3 \ge \frac{\alpha^2 (\inf_{I_q}\N\hspace{0.5pt}')^2}{8 z^\limplus(\sup_{I_q}\N\hspace{0.5pt}')^2}|I_q| 
	\ge
	\frac{\alpha^2 C^4}{8\pi^2z^\limplus}|I_q|, 
\end{equation}
where we also used that $(\sup_{I_q}\N\hspace{0.5pt}') |I_q| \ge  \int_{I_q} \N\hspace{0.5pt}' = 1$ 
for each $q$.

For the weighted mean we now find that
\begin{align*}
	\bar \rho_{I_q} &= \frac{\int_{I_q}\rho(r)/r \, dr}{\int_{I_q}1/r\, dr} \ge \frac{z^\limminus}{|I_q|}\int_{I_q}\frac{\rho(r)}{r}\, dr \ge 
	\frac{z^\limminus \alpha^2 C^4}{z^\limplus 8\pi^2}
	\ge \frac{\alpha^2 C^4}{24 \pi^2}, 
\end{align*} 
where we used that $|I_q|\le 2R$ and $z^\limminus\ge R$ implies that $z^\limminus/z^\limplus \ge 1/3$. 
\end{proof}

The previous lemma does not cover the scenario where $\N(r)$ increases rapidly, resulting in rapid oscillations on many short intervals $I_q$. In the next lemma we consider the remaining intervals $I_q$ and use our 
geometric knowledge of $\N(r)$ to show that these intervals cannot cover too much 
of our large-scale interval $[R, L]$. To achieve this we first cover the 
remaining collection of intervals $I_q$ with a collection of intervals $J_l$ 
such that $|J_l|=R/2$ for all $l$.

% ------------------  The BAD intervals ------------------

\begin{lemma}[Bad intervals]\label{lem:BadIkHasSmallMeasure}
	Let $J\subseteq (R, L]$ be an interval of length $R/2$. Then the fraction of $J$ covered by intervals $I_q$ satisfying both
	\begin{equation}\label{eq:AssumptionsOnRemainingZeros}
		|I_q|<CR \quad\mbox{and}\quad \frac{\inf_{I_q}\N\hspace{0.5pt}'}{\sup_{I_q}\N\hspace{0.5pt}'}<\frac{C^2}{\pi}, 
	\end{equation}
	with $C < \sqrt{\pi/2}$, is less than
	\begin{equation}
		\frac{8C(\pi-C^2)}{\pi-2C^2}.
	\end{equation}
\end{lemma}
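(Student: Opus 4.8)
The plan is to fix the window $J=(a,a+R/2]\subseteq(R,L]$, so that $a\ge R$ and the Shape Lemma (Lemma~\ref{lem:ShapeLemma}) is available on $J$ and — since we may assume $C$ small enough that $\frac{8C(\pi-C^2)}{\pi-2C^2}<1$, in particular $C<1/2$, so $|I_q|<CR<R/2$ — also on each bad interval itself. Denote by $I_q=(z_q^{-},z_q^{+})$ the intervals from \eqref{eq:AssumptionsOnRemainingZeros} that meet $J$. Since distinct $I_q$ are preimages under the monotone counting function $\N$ of distinct unit windows $[n,n+1]$ of integers, they are pairwise disjoint, and the quantity to be estimated is $\frac{2}{R}\sum_q|I_q\cap J|$.

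On each bad $I_q$ I would assemble the three facts at hand. First, $\int_{I_q}\N'=\N(z_q^{+})-\N(z_q^{-})=1$, which forces one full unit of $\N'$-mass into $I_q$; in particular $\sup_{I_q}\N'\ge 1/|I_q|$. Second, by Lemma~\ref{lem:OneParticleCoverIk} there is a particle at some distance $d_q$ with $I_q\subseteq[d_q-R,\,d_q+R]$, so that on $I_q$ we have $\N'\ge f(d_q,\cdot)\ge f_{\wedge}(d_q,\cdot)$ with $f_{\wedge}$ from Lemma~\ref{lem:ProfileBounds}; since $f_{\wedge}(d_q,\cdot)$ is the tent vanishing at the edges $d_q\pm R$, this yields $\inf_{I_q}\N'\ge\frac{2}{\pi R^2}\,\dist\!\bigl(I_q,\{d_q-R,\,d_q+R\}\bigr)$. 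Third, the Shape Lemma gives $\sup_{I_q}\N'\le\N'(z_q^{-})+\N'(z_q^{+})$. The heart of the proof is to feed these into the defining inequality $\inf_{I_q}\N'<\frac{C^2}{\pi}\sup_{I_q}\N'$: it then follows that $I_q$ hugs one of the support edges $d_q\pm R$ at distance $O(C^2R^2\sup_{I_q}\N')$, and, because the full unit of mass must nevertheless be collected while $\N'$ stays below $\frac{\pi}{C^2}\inf_{I_q}\N'$ on the part of $I_q$ away from the spike, a definite fraction of $|I_q|$ is ``wasted'' near that edge. It is precisely the sharp bookkeeping of this wasted length against the constraint $|I_q|<CR$, using the near-edge behaviour $\arccos\!\bigl(\frac{d_q^2+r^2-R^2}{2d_qr}\bigr)\sim\sqrt{r-(d_q-R)}$ of $f(d_q,\cdot)$ from \eqref{eq:oneParticleProfile}, that is designed to produce the factors $\pi-C^2$ and $\pi-2C^2$ and the restriction $C<\sqrt{\pi/2}$.

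To finish I would run a packing argument inside the length-$\frac{R}{2}$ window $J$. The covering particles of the bad intervals meeting $J$ all have $d_q$ in a fixed interval of length $\frac{5R}{2}$ around $J$, and by the previous step each $I_q\cap J$ sits inside a short edge zone $[d_q-R,\,d_q-R+t_q]$ or $[d_q+R-t_q,\,d_q+R]$, whose length $t_q\le|I_q|<CR$ is controlled; moreover, since $f(d_q,\cdot)$ collects only a vanishing fraction of its unit of flux on such a thin edge zone while $\int_{I_q}\N'=1$, the edge zones cannot overlap substantially, and the Shape Lemma bound $\N'\le\N'(a)+\N'(b)$ on $J$ keeps their total $\N'$-mass — hence their number and total length — in check. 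Summing $|I_q\cap J|\le t_q$ over the resulting essentially disjoint family and dividing by $|J|=R/2$ gives the claimed bound.

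The main obstacle is the quantitative geometric step of the second paragraph: turning ``$\inf/\sup<C^2/\pi$ on an interval of length $<CR$ covered by a single particle'' into a sharp lower bound on the wasted length near a support edge. This requires the concavity of the one-particle profile on its support intersected with $[R,\infty)$ (Appendix~\ref{app:concavity}), the tent lower bound $f_{\wedge}$, and the explicit $\arccos$ form of $f(d_q,\cdot)$, all while correctly attributing the mass of \emph{all} particles (not just the covering one) to $\sup_{I_q}\N'$ through the Shape Lemma; making the constants close with exactly $\frac{8C(\pi-C^2)}{\pi-2C^2}$ is the delicate part, and is also where the role of the hypothesis $C<\sqrt{\pi/2}$ becomes visible.
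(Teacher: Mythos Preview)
Your proposal has a genuine gap in the second paragraph, and the packing argument of the third paragraph does not close it.

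The edge--hugging step does not go through. From $f_{\wedge}$ you correctly get $\inf_{I_q}\N'\ge \frac{2}{\pi R^2}\dist\bigl(I_q,\{d_q-R,d_q+R\}\bigr)$, and then $\inf<\frac{C^2}{\pi}\sup$ yields only $\dist<\frac{C^2R^2}{2}\sup_{I_q}\N'$. But $\sup_{I_q}\N'$ is a sum over \emph{all} particles and is not bounded a priori; arbitrarily many particles may overlap $I_q$, so this inequality places no constraint on the position of $I_q$ relative to the covering particle's edge. Consequently the ``wasted length near an edge'' picture, and the subsequent packing via edge zones, has no foundation: there need be no edge nearby, and even if there is, nothing prevents many bad intervals from sharing overlapping edge zones of distinct particles. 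Invoking the Shape Lemma globally on $J$ as $\N'(r)\le\N'(a)+\N'(a+R/2)$ gives a pointwise bound on $\N'$, but not on the number of bad intervals nor on their total length.

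The paper's mechanism is different and does not use any single--particle edge analysis. One classifies each bad interval as type A or type B according to whether the point realising $\sup_{I_k}\N'$ lies to the left or right of the point realising $\inf_{I_k}\N'$. The Shape Lemma is applied \emph{across consecutive} bad intervals: if $I_{j}$ is of type A, then $i_{j-1}<s_j<i_j$ lie within $R/2$ of each other, so $\sup_j\le\inf_{j-1}+\inf_j$, and combining with $\inf_j<\frac{C^2}{\pi}\sup_j$ gives $\inf_{j-1}>\frac{\pi-C^2}{C^2}\inf_j$. Iterating and using $|I_k|\inf_k\le 1\le|I_k|\sup_k$ turns this into $|I_{k}|<\frac{\pi}{C^2}\bigl(\frac{C^2}{\pi-C^2}\bigr)^{j-k}|I_j|$ for $k<j$; the analogous inequality holds going right from a type B interval. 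One also checks via the Shape Lemma that type B cannot be followed by type A, so the sequence splits into an A--block followed by a B--block, each with geometrically decaying lengths headed by $|I_j|<CR$. Summing two geometric series with ratio $\frac{C^2}{\pi-C^2}$ (valid precisely when $C<\sqrt{\pi/2}$) gives the constant $\frac{8C(\pi-C^2)}{\pi-2C^2}$. This inter--interval recursion is the missing idea in your attempt.
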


\begin{proof} 
	Let $\{I_k\}_{k=1}^m$ denote the subset of the intervals $I_q$ for 
	which~\eqref{eq:AssumptionsOnRemainingZeros} is satisfied and 
	$J\cap I_k\neq \emptyset$ for each $k=1, \ldots, m$, and ordered from left 
	to right (note in particular that throughout this proof the labeling of
	the intervals differs from that described below~\eqref{eq:rho_q_Identity}). 
	For further notational convenience we will let $\inf_k$ and $\sup_k$ 
	denote $\inf_{I_k}\N\hspace{0.5pt}'$ and $\sup_{I_k}\N\hspace{0.5pt}'$, respectively. 
	We will also denote by $i_k$ and $s_k$ a (fixed) choice of points in 
	each $I_k$ such that $\N\hspace{0.5pt}'(i_k)=\inf_k$ and $\N\hspace{0.5pt}'(s_k)=\sup_k$.

	We begin by showing that we may assume that the distance between any
	two points in two consecutive intervals is less than $R/2$, 
	allowing us to apply Lemma~\ref{lem:ShapeLemma}. 
	If, for some $k\in\{1, \ldots, m\}$, $I_k=(z^\limplus_k, z^\limminus_k)$ 
	and $I_{k+1}=(z^\limplus_{k+1}, z^\limminus_{k+1})$ are such that 
	$z^\limplus_{k+1}-z^\limminus_k>R/2$, then since both intervals have 
	non-empty intersection with $J$ we must have that $m=2$. 
	But this implies that ${|J\cap (\cup_{k=1}^m I_k)| \le 2CR}$ and the 
	statement follows. Similarly the statement is true if $m=1$.

	Suppose that there exists a $j$ such that $i_j < s_j <s_{j+1}<i_{j+1}$. 
	Then, since we by the above may assume that $i_{j+1}-i_j<R/2$, 
	Lemma~\ref{lem:ShapeLemma} implies that 
	\begin{equation}
		\max\{\sup\nolimits_j, \sup\nolimits_{j+1}\}
		\le \inf\nolimits_j+\inf\nolimits_{j+1}.
	\end{equation}
	But combined with~\eqref{eq:AssumptionsOnRemainingZeros} this leads to a contradiction:
	\begin{align}
	 	\max\{\sup\nolimits_j, \sup\nolimits_{j+1}\} 
	 	\le \inf\nolimits_j+\inf\nolimits_{j+1} 
	 	\le 2\max\{\inf\nolimits_j, \inf\nolimits_{j+1}\} 
	 	< \frac{2C^2}{\pi}\max\{\sup\nolimits_j, \sup\nolimits_{j+1}\}, 
	\end{align} 
	which is impossible since $\frac{2C^2}{\pi} < 1$.

	Let us say that an interval $I_k$ where $s_k<i_k$ is of type A, and one 
	where instead $i_k<s_k$ is of type B 
	(note that $i_k \neq s_k$ by the assumption on $I_k$). 
	We let $\mathcal{A}$ and $\mathcal{B}$ denote the collections of intervals 
	of type A and type B respectively. 

	The above contradiction argument yields that an interval of type A cannot follow one of type B, 
	i.e.\ if we for some $j$ have that $I_j\in \mathcal{A}$ then $I_k\in\mathcal{A}$ for all $k<j$, 
	and similarly, if $I_j\in\mathcal{B}$ then $I_k\in\mathcal{B}$ for all $k>j$. 
	We conclude that there is at most one $k$ such that $I_k$ and $I_{k+1}$ are of different type, and $I_k$ must then be of type A.

	As we will now show, it turns out that the sequence of lengths $|I_k|$ of 
	consecutive intervals starting at any interval of type A and going to the 
	left, resp.\ type B and going to the right, is monotonically decreasing and 
	bounded from above by a geometric sequence. 
	By assumption~\eqref{eq:AssumptionsOnRemainingZeros}, all $|I_k|<CR$, 
	and in particular this holds for the first interval in any such sequence. 
	Using these observations we will be able to bound the total measure of 
	$\cup_k I_k$. 

	We begin by studying a sequence starting at an interval of type A and going to the left (note that such a sequence may not exist if all $I_k \in \mathcal{B}$). We wish to prove that $|I_k|$ decreases along this sequence.

	Let $j$ be such that $I_j\in\mathcal{A}$. Then 
  $i_{j-1}<s_{j}<i_j$, and by Lemma~\ref{lem:ShapeLemma} we have that $\sup_{j}\le \inf_{j-1}+\inf_j$. Since we assume that $\inf_{j}< C^2/\pi \sup_{j}$ this implies that
	\begin{equation}
		\frac{\pi-C^2}{C^2}\inf\nolimits_j
		< \Bigl(1-\frac{C^2}{\pi}\Bigr)\sup\nolimits_j 
		< \inf\nolimits_{j-1}.
	\end{equation}
	The only thing we used above was that $I_j \in \mathcal{A}$. Since this implies that also
 	$I_{j-1}\in \mathcal{A}$, we can iterate this argument until we reach $I_1$. 
	This yields for $k<j$ that
	\begin{align}\label{eq:inf_sup_ineq}
		\Bigl(\frac{\pi-C^2}{C^2}\Bigr)^{j-k} \inf\nolimits_j 
		< \Bigl(\frac{\pi-C^2}{C^2}\Bigr)^{j-k}\frac{C^2}{\pi}\sup\nolimits_j 
		< \inf\nolimits_{k}.
	\end{align}

	Using that $|I_k|\inf_k \le 1 \le |I_k|\sup_k$ (for any $k$) we, for $k<j$, find that~\eqref{eq:inf_sup_ineq} implies 
	\begin{equation}
		|I_{j}|\ge \frac{1}{\sup_j} \ge \Bigl(\frac{\pi-C^2}{C^2}\Bigr)^{j-k}\frac{C^2}{\pi} \frac{1}{\inf_{k}} \ge \Bigl(\frac{\pi-C^2}{C^2}\Bigr)^{j-k}\frac{C^2}{\pi} |I_{k}|, 
	\end{equation}
	where we used that, for $k\le j$, $\inf_{k}>0$ since otherwise $\sup_j$ would be zero which cannot happen by the construction of the $I_k$'s. Since $C$ is small this proves the claim in the case of type A intervals. 

	For the case of type B intervals the proof is almost identical and one finds instead that, if $I_j \in \mathcal{B}$, 
	\begin{equation}
		|I_{j}| \ge \Bigl(\frac{\pi-C^2}{C^2}\Bigr)^{k-j} \frac{C^2}{\pi}|I_{k}|, 
		\qquad k > j.
	\end{equation}

	We are now ready to complete the proof of the lemma. Begin by finding $j$ such that $I_j\in\mathcal{A}$ and $I_{j+1}\in\mathcal{B}$ (if $\mathcal{A}$, alt.\ $\mathcal{B}$, is the empty set we take $j=0$, alt.\ $j=m$). Then using the above estimates we obtain that
	\begin{align}
		\biggl|J \cap \bigcup_k I_k \biggr| &\le \sum_k |I_k| = \sum_{k\le j}|I_k|+ \sum_{k>j}|I_k| 
		\\
		&\le 
		|I_j|\Bigl(1+
		\frac{\pi}{C^2}\sum_{k=1}^{j-1}\Bigl(\frac{C^2}{\pi-C^2}\Bigr)^{j-k} \Bigr)
		+
		|I_{j+1}|\Bigl(1+ \frac{\pi}{C^2}\sum_{k=j+2}^{m}\Bigl(\frac{C^2}{\pi-C^2}\Bigr)^{k-j-1}\Bigr)\\
		&< 
		CR\Bigl(1+
		\frac{\pi}{C^2}\sum_{l=1}^{\infty}\Bigl(\frac{C^2}{\pi-C^2}\Bigr)^l \Bigr)
		+
		CR\Bigl(1+ \frac{\pi}{C^2}\sum_{l=1}^{\infty}\Bigl(\frac{C^2}{\pi-C^2}\Bigr)^{l}\Bigr)\\
		&=
		\frac{4C(\pi-C^2)}{\pi-2C^2}R, 
	\end{align}
	and dividing this quantity by $|J|=R/2$ completes the proof.
\end{proof}

% ------------------  wrap party --------------------

\subsection{Proof of Theorem~\ref{thm:MainRadialBound}}\label{sec:ProofLongRange}

What we have found is that the Lebesgue measure of the subset of $J$ 
where $\rho$ is already large, or 
can be averaged to be large,  
is at least  
\begin{equation}\label{eq:FractionOfJ}
	\Bigl(\frac{1}{2}-\frac{4C(\pi-C^2)}{\pi-2C^2}\Bigr)R.
\end{equation}
Using this we can find a non-trivial uniform lower bound on $\bar{\rho}_J$ and
therefore, using the local projection argument, we finally obtain that there 
exists a constant $c(\kappa)>0$ such that
\begin{equation}
	\int_{R}^{L} \Bigl(|u'|^2+\frac{\rho}{r^2}|u|^2\Bigr)r\, dr \ge \int_{R}^{L}\Bigl((1-\kappa)|u'|^2+c(\kappa)^2\frac{\alpha_N^2}{r^2}\1_{[3R, L-3R]}|u|^2\Bigr)r\, dr.
\end{equation}

We proceed as follows:
\begin{align}\label{eq:SmearingCalculations}
	\int_{R}^{L}\Bigl(|u'|^2+\frac{\rho}{r^2}|u|^2\Bigr)r\, dr 
	&=
	\int_{R}^{L}(1-\kappa)|u'|^2r\, dr + \int_{R}^{L}\Bigl(\kappa|u'|^2+\frac{\rho}{r^2}|u|^2\Bigr)r\, dr\\
	&\ge
	\int_{R}^{L}(1-\kappa)|u'|^2r\, dr + \int_{R}^{L}\Bigl(\frac{\kappa}{2}|u'|^2+ \frac{\hat \rho}{r^2}|u|^2\Bigr)r\, dr, 
\end{align}
where $\hat\rho$ denotes a new weight obtained by replacing $\rho(r)$ with 
$\frac{2\kappa}{1+2\kappa}\frac{\alpha^2C^4}{24\pi^2}$ on all $I_q$ covered 
by Lemma~\ref{lem:GoodIntervals} that intersect $(3R, L-2R)$, by using 
Lemma~\ref{lem:1Dprojection} with $\kappa/2$. 
Thus the only remaining zeros of $\hat\rho$ on $(3R, L-2R)$
are those contained in intervals $I_q$ which satisfy the assumptions of 
Lemma~\ref{lem:BadIkHasSmallMeasure}. Let $\mathcal{Q}\subset \NN$ 
denote the set of integers $q$ for which $I_q$ is such an interval. 
We now cover $(3R, L-3R)$ by a collection of disjoint intervals 
$J \subset (3R, L-2R)$, each of length $|J|=R/2$.
Specifically, we take the intervals
$\bigl(3R+\frac{(n-1)R}{2}, 3R+\frac{nR}{2}\bigr)$ 
where $n$ runs from $1$ to $\bigl\lfloor\frac{2(L-5R)}{R}\bigr\rfloor$.
On each such $J=(r_1, r_2)$ we then have that
\begin{align}
	\int_J \frac{\hat\rho}{r}\, dr &\ge \frac{1}{r_2}\int_J \hat\rho \, dr\ge 
	\frac{1}{r_2}\int_{J\cap (\cup_{q\in \mathcal{Q}} I_q)^c} \hat\rho \, dr \ge \frac{2\kappa}{1+2\kappa}\frac{\alpha_N^2 C^4}{r_2 24\pi^2}\Bigl|J \cap \Bigl(\bigcup_{q\in \mathcal{Q}} I_q\Bigr)^c\Bigr|.
\end{align}
By Lemma~\ref{lem:BadIkHasSmallMeasure} we then obtain for the weighted mean of 
$\hat\rho$ that
\begin{align}
	\int_J \frac{\hat\rho}{r}\, dr \Bigm/ \int_J \frac{dr}{r} &\ge \frac{r_1}{r_2}\frac{2\kappa}{1+2\kappa}\frac{\alpha_N^2 C^4}{12\pi^2} \Bigl(\frac{1}{2}- \frac{4C(\pi - C^2)}{\pi-2C^2}\Bigr), 
	\quad \text{with} \quad \frac{r_1}{r_2} \ge \frac{6}{7}.
\end{align}
Thus for each $J$ we can again apply Lemma~\ref{lem:1Dprojection} and obtain
\begin{equation}
	\int_J\Bigl(\frac{\kappa}{2}|u'|^2+ \frac{\hat \rho}{r^2}|u|^2\Bigr)r\, dr 
	\ge
	\Bigl(\frac{2\kappa}{1+2\kappa}\Bigr)^2 \frac{C^4}{14\pi^2}\Bigl(\frac{1}{2}-\frac{4C(\pi - C^2)}{\pi-2C^2}\Bigr)\int_J\frac{\alpha_N^2}{r}|u|^2\, dr. 
\end{equation}
Applying this for each $J$ we obtain the desired estimate with
\begin{equation}
	c(\kappa)^2 = \Bigl(\frac{2\kappa}{1+2\kappa}\Bigr)^2 \frac{C^4}{14\pi^2}\Bigl(\frac{1}{2}-\frac{4C(\pi - C^2)}{\pi-2C^2}\Bigr).
\end{equation}
Maximizing this in $C\in (0, 1)$ we obtain for $C\approx 0.0996$ the extremely 
small (but positive) constant
\begin{equation}
	c(\kappa) \ge 5.3 \cdot 10^{-4} \frac{\kappa}{1+2\kappa}.
\end{equation}
This concludes the proof of Theorem~\ref{thm:MainRadialBound}
and hence the treatment of the long-range interaction of
Theorem~\ref{thm:long-range}.

We note that with this choice of $C$ we allow for approximately $80\%$ of any 
(and all) $R/2$ long interval contained in $(R, L]$ to be covered by the 
intervals $I_q$ satisfying~\eqref{eq:AssumptionsOnRemainingZeros}. 
As we expect that this is rather far from the actual situation for most 
particle configurations there seems to be room for improvement in the above 
considerations. One such improvement could be to use that the 
effective potential must between every two $I_q$ intervals go up to one and 
then back down again. Our current method does not take this into account and 
is blind to the fact that  
there must be helpful gaps between the $I_q$'s.

Another way of improving this constant would be to refine the bounds in 
Lemma~\ref{lem:GoodIntervals} by using the precise shape of the 
one-particle profile instead of the simpler lower bound provided by $f_\wedge$. 
One could also take into account that all intervals cannot be at the edge of a 
particle, i.e.\ make use of the observation that a large number of the particles 
are likely to cover more than one interval $I_q$.

% ------------------  EXCLUSION  --------------------

\section{Local exclusion}\label{sec:exclusion}

We now formulate the obtained energy bounds for anyons in terms of local exclusion
principles, following~\cite{LunSol-13a, LunSol-13b, LunSol-14, LunPorSol-15, LunNamPor-16},
with some refinements to take both the short- and the long-range magnetic interactions
into account.

With a weight partition $\bkappa=(\kappa_1, \kappa_2, \kappa_3) \in [0, 1]^3$, 
$\kappa_1+\kappa_2+\kappa_3=1$, 
we can write for the total kinetic energy 
for $N$ anyons in a normalized state $\Psi \in \domD{\alpha, R}$
\begin{align} \label{eq:T-split}
	\langle \Psi, \hat{T}_\alpha \Psi \rangle
	&= \kappa_1 \sum_{j=1}^N \int_{\R^{2N}} |D_j\Psi|^2\, d\sx
	+ \kappa_2 \sum_{j=1}^N \int_{\R^{2N}} |D_j\Psi|^2\, d\sx
	+ \kappa_3 \sum_{j=1}^N \int_{\R^{2N}} |D_j\Psi|^2\, d\sx \\
	&\ge \int_{\R^{2N}} \sum_{j=1}^N \biggl(
		\kappa_1 \bigl|\nabla_j|\Psi|\bigr|^2
		+ \kappa_2 \sum_{\substack{k=1\\k \neq j}}^N 2\pi|\alpha| 
		\frac{\1_{B_R(0)}}{\pi R^2}(\bx_j - \bx_k) \, |\Psi|^2
		+ \kappa_3 |D_j\Psi|^2
		\biggr) d\sx, 
\end{align}
where we used Lemma~\ref{lem:diamagnetic} and Lemma~\ref{lem:short-range}.
We then make a partitioning of the plane $\R^2$ into disjoint squares 
$Q$'s:
\begin{equation}
	\langle \Psi, \hat{T}_\alpha \Psi \rangle
	\ge \sum_Q T^\bkappa_Q[\Psi], 
\end{equation}
where the expected local energy on each square $Q$ is given by
(the definitions extend to all $\bkappa \in \R^3$)
\begin{align} \label{eq:localEnergyT}
	T^\bkappa_Q[\Psi] &:= \sum_{j=1}^N \int_{\R^{2N}} \biggl(
		\kappa_1 \bigl|\nabla_j|\Psi|\bigr|^2
		+ \kappa_2 \sum_{\substack{k=1\\k \neq j}}^N 2\pi|\alpha| 
		\frac{\1_{B_R(0)}}{\pi R^2}(\bx_j - \bx_k) \, |\Psi|^2
		+ \kappa_3 |D_j\Psi|^2
		\biggr) \1_Q(\bx_j) \, d\sx \\[-20pt]
	&\phantom{:}\ge \sum_{n=0}^N E^\bkappa_n(|Q|) p_n(\Psi;Q).
\end{align}
Here the local $n$-particle energy 
(translation invariant and with Neumann b.c.) is given by 
\begin{equation} \label{eq:localEnergyE}
	E^\bkappa_n(|Q|) := \inf_{\int_{Q^n}|\psi|^2=1}  
		\sum_{j=1}^n \int_{Q^n} \biggl(
		\kappa_1 \bigl|\nabla_j|\psi|\bigr|^2
		+ \kappa_2 \sum_{\substack{k=1\\k \neq j}}^n 2\pi|\alpha| 
		\frac{\1_{B_R(0)}}{\pi R^2}(\bx_j - \bx_k) \, |\psi|^2
		+ \kappa_3 |D_j\psi|^2
		\biggr) d\sx, 
\end{equation}
and $p_n(\Psi;Q)$ denotes the $n$-particle probability distribution induced from $\Psi$, 
\begin{equation}
	p_n(\Psi;Q) := \sum_{A \subseteq \{1, \ldots, N\}, |A|=n} 
		\int_{(Q^c)^{N-n}} \int_{Q^n} |\Psi|^2
		\prod_{k \in A} d\bx_k \prod_{l \notin A} d\bx_l, 
\end{equation}
having the normalizations $\sum_{n=0}^N p_n(\Psi;Q) = 1$ 
and $\sum_{n=0}^N np_n(\Psi;Q) = \int_Q \varrho_\Psi$, 
the expected number of particles on $Q$.
In~\eqref{eq:localEnergyE} the operators $D_j$ still depend on all $N$ particles, 
with the first $n$ on $Q$, 
and we take the infimum over the remaining $N-n$ positions in $\R^2 \setminus Q$.

The inequality \eqref{eq:localEnergyT} is obtained by simply partitioning 
the configuration space $\R^{2N}$, e.g.\ by inserting into the integrand the partition
of unity $\1 = \prod_{k=1}^N(\1_Q(\bx_k) + \1_{Q^c}(\bx_k))$ and expanding.
This approach to bound the energy goes all the way back to Dyson and Lenard~\cite{DysLen-67}.

\subsection{Short-range exclusion}\label{sec:exclusionSR}

We consider first the contribution to the local energy coming solely from the 
short-range part of the magnetic interaction.

\begin{lemma}[Local exclusion --- short range] \label{lem:LocalExclusionSR}
	For any $\alpha \in \R$, $R > 0$ and $Q \subseteq \R^2$ a square, 
	and with $\gamma(Q) := R|Q|^{-\frac{1}{2}}$, we have that
	\begin{equation}
		E_n^{(1, 1, 0)}(|Q|)
		\ge \frac{\eSR(\alpha, \gamma(Q), n)}{|Q|} (n-1)_+, 
	\end{equation}
	and 
	\begin{equation}
		T^{(1, 1, 0)}_Q[\Psi] \ge 
			\frac{\eSR(\alpha, \gamma(Q), \int_Q\varrho_\Psi)}{|Q|} 
			\biggl( \int_Q\varrho_\Psi \ -1 \biggr)_{\!\!+}, 
	\end{equation}
	where 
	\begin{equation}
		\eSR(\alpha, \gamma, n) := \left\{ \begin{array}{ll} \displaystyle
			\frac{ |\alpha|\min\bigl\{(1-\gamma^2/2)_+^{-1}, K_\alpha/2\bigr\} }{ 
			K_\alpha 
			+ 2|\alpha|\bigl( -\ln (\gamma/{\sqrt{2}}) \bigr)_+ }
			& \text{for $\gamma < \sqrt{2}$, } \\[1cm]
			2|\alpha| \gamma^{-2} n
			& \text{for $\gamma \ge \sqrt{2}$. }
			\end{array}\right.
	\end{equation}
	Here
	\begin{equation}
		K_\alpha := \sqrt{2|\alpha|} \frac{I_0(\sqrt{2|\alpha|})}{I_1(\sqrt{2|\alpha|})}
		 \ge 2, \qquad K_0:=2,
	\end{equation}
	and $I_\nu$ denotes the modified Bessel function of order $\nu$.
\end{lemma}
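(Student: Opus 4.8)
The plan is to derive the bound from (i) a trivial estimate when $Q$ is small compared to $R$, (ii) a two‑body bound obtained by passing to relative coordinates and solving a radial ODE, (iii) a super‑additivity (pairing) argument to reach $n$ particles, and (iv) a Jensen step to pass from $E_n$ to $T_Q$. Note first that no diamagnetic reduction is needed: since $\kappa_3=0$, the functional $E_n^{(1,1,0)}(|Q|)$ is already expressed through $|\nabla_j|\psi||$. Now if $\gamma(Q)=R|Q|^{-1/2}\ge\sqrt2$ then $\operatorname{diam}Q=\sqrt2\,|Q|^{1/2}\le R$, so $\frac{\1_{B_R(0)}}{\pi R^2}(\bx_j-\bx_k)=\frac{1}{\pi R^2}$ for a.e.\ $\sx\in Q^n$; dropping the non‑negative gradient term gives $E_n^{(1,1,0)}(|Q|)\ge 2\pi|\alpha|\sum_{j\neq k}\frac{1}{\pi R^2}=\frac{2|\alpha|}{R^2}n(n-1)=\frac{2|\alpha|\gamma(Q)^{-2}n}{|Q|}(n-1)_+$, which is exactly the claimed bound with $\eSR=2|\alpha|\gamma^{-2}n$.

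For the main regime $\gamma(Q)<\sqrt2$ the key step is the two‑body bound $E_2^{(1,1,0)}(|Q|)\ge 2\,\eSR(\alpha,\gamma(Q),2)/|Q|$ (the factor $2$ coming from the reduced‑mass Laplacian below, and $\eSR(\alpha,\gamma,2)=\eSR(\alpha,\gamma,n)$ since the $\gamma<\sqrt2$ formula is $n$‑independent). Writing $\phi=|\psi|$ and passing to relative and centre‑of‑mass coordinates $\bz=\bx_1-\bx_2$, $\bX=(\bx_1+\bx_2)/2$, one has $|\nabla_1\phi|^2+|\nabla_2\phi|^2=\tfrac12|\nabla_\bX\phi|^2+2|\nabla_\bz\phi|^2\ge 2|\nabla_\bz\phi|^2$, so after discarding the centre‑of‑mass energy the problem becomes, for each fixed $\bX$, the Neumann ground‑state problem for $2(-\Delta_\bz)+\frac{4|\alpha|}{R^2}\1_{B_R(0)}(\bz)$ on the symmetric convex set $D_\bX=\{\bz:\bX\pm\bz/2\in Q\}$; the infimum over $\bX$ is controlled by the largest such set (centre of $Q$), namely a square of half‑side $|Q|^{1/2}$, which is contained in the disk $B_{\sqrt2\,|Q|^{1/2}}(0)$. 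Comparing with that concentric disk reduces everything to the radial ODE $-u''-u'/r+\tfrac{2|\alpha|}{R^2}\1_{[0,R]}(r)\,u=\lambda u$ on $[0,\rho]$, $\rho=\sqrt2\,|Q|^{1/2}$, $u'(\rho)=0$: on $[0,R]$ the regular solution is $I_0\!\bigl(\sqrt{2|\alpha|}\,R^{-1}\sqrt{1-\lambda R^2/(2|\alpha|)}\,r\bigr)$, whose logarithmic derivative at $r=R$ produces $K_\alpha=\sqrt{2|\alpha|}\,I_0(\sqrt{2|\alpha|})/I_1(\sqrt{2|\alpha|})$ in the dilute regime, and matching to the near‑zero‑energy solution $a+b\ln r$ on $[R,\rho]$ with the Neumann condition at $\rho$ gives the smallest eigenvalue of the size $\lambda\simeq\frac{1}{|Q|}\cdot\frac{|\alpha|}{K_\alpha+2|\alpha|\ln(\sqrt2/\gamma)}$, up to the bounded‑domain factor $(1-\gamma^2/2)_+^{-1}=(1-R^2/(2|Q|))^{-1}$ (the annulus carries the constant part of $u$) and the saturation bound $\min\{\cdot,K_\alpha/2\}$ that takes over as $|\alpha|\to\infty$ (hard‑disk/$\Theta$‑regime). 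This yields $E_2^{(1,1,0)}(|Q|)\ge 2\,\eSR(\alpha,\gamma(Q),2)/|Q|$.

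To promote this to $n$ particles, write $H_n=\sum_j h_j+\sum_{j<k}W_{jk}$ with $h_j=-\Delta_j$ acting on $\phi=|\psi|$ (Neumann on $Q$) and $W_{jk}=4\pi|\alpha|\frac{\1_{B_R(0)}}{\pi R^2}(\bx_j-\bx_k)\ge0$. Using a collection of disjoint pairs and discarding the remaining non‑negative terms, $H_n\ge\sum_{\text{disjoint pairs }(j,k)}(h_j+h_k+W_{jk})$, whence $E_n^{(1,1,0)}(|Q|)\ge\lfloor n/2\rfloor\,E_2^{(1,1,0)}(|Q|)\ge 2\lfloor n/2\rfloor\,\eSR(\alpha,\gamma(Q),n)/|Q|\ge(n-1)_+\,\eSR(\alpha,\gamma(Q),n)/|Q|$, which is the first displayed inequality of the Lemma (this is the standard local‑exclusion superadditivity as in \cite{LunSol-14,LunNamPor-16}). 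Finally, inserting $E_n^{(1,1,0)}(|Q|)\ge\eSR(\alpha,\gamma(Q),n)(n-1)_+/|Q|$ into $T_Q^{(1,1,0)}[\Psi]\ge\sum_{n=0}^N E_n^{(1,1,0)}(|Q|)\,p_n(\Psi;Q)$ and noting that $n\mapsto\eSR(\alpha,\gamma(Q),n)(n-1)_+$ is convex (a positive constant times $(n-1)_+$ when $\gamma<\sqrt2$, and $2|\alpha|\gamma^{-2}\,n(n-1)_+$ when $\gamma\ge\sqrt2$), Jensen's inequality with $\sum_n p_n=1$ and $\sum_n n\,p_n=\int_Q\varrho_\Psi$ gives $T_Q^{(1,1,0)}[\Psi]\ge\frac{\eSR(\alpha,\gamma(Q),\int_Q\varrho_\Psi)}{|Q|}\bigl(\int_Q\varrho_\Psi-1\bigr)_+$.

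The main obstacle is the two‑body radial estimate: first, justifying the reduction to a radial Neumann problem after dropping the centre‑of‑mass energy, i.e.\ controlling $\inf_\bX$ over the domains $D_\bX$ uniformly and comparing the square with a concentric disk (for which $\lambda_1(-\Delta^N+V)$ is not a priori monotone under inclusion, so some care or rearrangement is required); and second, extracting from the Bessel ODE the precise dependence on $|\alpha|$ and $\gamma$ — in particular the two‑dimensional dilute‑gas logarithm $\ln(\sqrt2/\gamma)=\ln(\sqrt2\,|Q|^{1/2}/R)$ in the denominator (reflecting that a single soft obstacle of radius $R$ in a box of side $L$ raises the Neumann ground state only by $\sim(L^2\ln(L/R))^{-1}$), the bounded‑domain correction $(1-\gamma^2/2)_+^{-1}$, and the $|\alpha|\to\infty$ saturation encoded by $K_\alpha$ and by the factor $\min\{(1-\gamma^2/2)_+^{-1},K_\alpha/2\}$.
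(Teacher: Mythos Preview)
Your trivial case $\gamma(Q)\ge\sqrt2$ and your final Jensen step both match the paper. The core of the argument, however, is handled quite differently, and the obstacles you flag are precisely where your route diverges from a clean proof.

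The paper does \emph{not} pass through a two-body relative-coordinate problem followed by a disjoint-pairs superadditivity step. Instead, for $\gamma(Q)<\sqrt2$ it applies the two-dimensional Dyson lemma (as in \cite{LieYng-01,LieSeiSolYng-05} and specifically \cite[Proposition~19]{LunPorSol-15}) directly at the $n$-body level: a fraction $1-\kappa$ of each particle's kinetic energy $|\nabla_j|\psi||^2$ together with its short-range interaction is bounded below by a smeared annular potential $U(|\bx_j-\bx_k|)$ supported on $[R,\sqrt2|Q|^{1/2}]$, applied on the star-shaped domain $Q-\bx_k$. This yields $(n-1)_+$ immediately, with no pairing. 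The constant $K_\alpha$ enters as the scattering length of the soft disk with $W_0=4|\alpha|$; the factor $(1-\gamma^2/2)_+^{-1}$ is exactly the Dyson normalization of $U$; and the truncation $\min\{(1-\gamma^2/2)_+^{-1},K_\alpha/2\}$ is an artifact of matching the smeared and retained portions of the potential so that $\kappa\le1/2$. In short, the precise form of $\eSR$ is \emph{built from} the Dyson construction rather than extracted from a spectral computation.

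Your approach faces two genuine difficulties. First, the Neumann-monotonicity issue you identify is real and has no soft fix: $\lambda_1(-\Delta^{\mathcal N}+V)$ is not monotone under domain inclusion, and rearrangement arguments do not apply to Neumann problems, so the reduction from $D_\bX$ (a square) to a concentric disk cannot be justified as written. Second, even granting a sharp two-body bound on the disk, your pairing step $E_n\ge\lfloor n/2\rfloor E_2$ requires $E_2\ge 2\eSR/|Q|$, twice the lemma's claim at $n=2$; while the reduced-mass factor of $2$ helps, your effective potential has height $2|\alpha|/R^2$ rather than $4|\alpha|/R^2$, and a careful matching of the radial ODE would produce constants of the right order but not literally the stated $\eSR$. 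The Dyson-lemma route bypasses both issues by working on the original square $Q$ (star-shaped, so the lemma applies without domain comparison) and by treating all $n$ particles at once.
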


\begin{proof}[Proof of Lemma~\ref{lem:LocalExclusionSR}]
	We consider the local energy form in \eqref{eq:localEnergyE}.
	In the case that $\gamma(Q) \ge \sqrt{2}$, the short-range potential 
	in the second term covers the full domain $Q$ for every particle, and hence
	$$
		E^{(1, 1, 0)}_n(|Q|) \ge \frac{2\pi|\alpha|}{\pi R^2} n(n-1)_+
		= \frac{2|\alpha|}{|Q|} \gamma(Q)^{-2} n(n-1)_+.
	$$
	By convexity we then also have that
	$$
		\sum_{n=0}^N E^{(1, 1, 0)}_n(|Q|)p_n(\Psi;Q) 
		\ge \frac{2|\alpha|}{|Q|} \gamma(Q)^{-2} 
			\biggl({\int_Q\varrho_\Psi}\biggr)\biggl({\int_Q\varrho_\Psi} \ -1\biggr)_{\!\!+}.
	$$
	In the case that $\gamma(Q) < \sqrt{2}$, we use Dyson's 
	lemma~\cite{Dyson-57} in two dimensions
	(see~\cite{LieYng-01, LieSeiSolYng-05, LunPorSol-15})
	to smear the potential to the full domain as done 
	in~\cite[Proposition~19]{LunPorSol-15}, 
	keeping part of the potential intact and smearing the rest.
	For $n > 1$ and any $\kappa \in [0, 1]$ we can bound the energy form in 
	$E^{(1, 1, 0)}_n(Q)$ from below by
	\begin{align}
		&n \int_{Q^2} \biggl(
			(1-\kappa) \Bigl( \bigl|\nabla_1|\psi|\bigr|^2
			+ \frac{2\pi|\alpha|}{\pi R^2} \1_{B_R(\bx_2)}(\bx_1) \, |\psi|^2 \Bigr)
			+ \kappa \frac{2\pi|\alpha|}{\pi R^2} \1_{B_R(\bx_2)}(\bx_1) \, |\psi|^2 
			\biggr) d\sx \\
		&\ge (n-1)_+ \int_{Q^2} \biggl(
			(1-\kappa) U(|\bx_1-\bx_2|) \1_{B_R(\bx_2)^c}(\bx_1) 
			+ \kappa \frac{2\pi|\alpha|}{\pi R^2} \1_{B_R(\bx_2)}(\bx_1) 
			\biggr) |\psi|^2 \, d\sx, 
	\end{align}
	with
	$$
		U(r) := |Q|^{-1} \biggl(1 - \frac{R^2}{2|Q|}\biggr)^{-1} \biggl( 
			\frac{K_\alpha}{2|\alpha|}
			+ \ln \frac{\sqrt{2}|Q|^{1/2}}{R} \biggr)^{-1} 
			\1_{[R, \sqrt{2}|Q|^{1/2}]}(r).
	$$
	This expression arises from the application of Dyson's 
	lemma~\cite[Lemma~3.1]{LieSeiSolYng-05}	on the star-shaped
	domain $Q-\bx_2$ with the requirement that
	$$
		\int_R^{\sqrt{2}|Q|^{1/2}} U(r) \ln (r/a_R) \, rdr \le 1, 
		\qquad U(r) = 0 \ \text{for} \ r < R, 
	$$
	and where the considered pair potential is
	$$
		W(\bx) := \frac{W_0}{R^2} \1_{B_R(0)}(\bx), 
		\qquad W_0 = 4|\alpha|, 
	$$
	with scattering length (see e.g.~\cite[Appendix~A.2.4]{LunPorSol-15})
	\begin{equation} \label{eq:soft-disk-scattering-length}
		a_R = R\exp\biggl( -\frac{1}{\sqrt{W_0/2}} \frac{I_0(\sqrt{W_0/2})}{I_1(\sqrt{W_0/2})} \biggr)
		= R\exp\biggl( -\frac{K_\alpha}{2|\alpha|} \biggr).
	\end{equation}
	We now demand that $\kappa$ be chosen such that the potentials match:
	$$
		(1-\kappa) U(r) = \kappa \frac{2|\alpha|}{R^2}, 
	$$
	that is, 
	$$
		\frac{\kappa}{1-\kappa} = \gamma(Q)^2 \bigl(1-\gamma(Q)^2/2\bigr)^{-1} \bigl( 
			K_\alpha 
			+ 2|\alpha|\bigl(-\ln (\gamma(Q)/{\sqrt{2}})\bigr) \bigr)^{-1}.
	$$
	However, note that the factor $(1-\gamma(Q)^2/2)^{-1}$ in $U$ diverges as 
	$\gamma(Q) \to \sqrt{2}$ while the other potential term stays bounded, 
	implying $\kappa \to 1$. Hence, 
	in order to be able to bound $1-\kappa$ uniformly we instead truncate the
	potential $U$ by replacing the unbounded factor with
	$$
		\min\bigl\{ \bigl(1-\gamma(Q)^2/2\bigr)^{-1}, K_\alpha/2 \bigr\} 
		\in [1, K_\alpha/2], 
	$$
	also using that $K_\alpha \ge 2$ (see~\cite[Eqn.~10.33.1]{OlvMax-10}).
	With this replacement in the above we then find that
	$$
		\frac{\kappa}{1-\kappa} 
		= \gamma(Q)^2 \frac{ \min\bigl\{ \bigl(1-\gamma(Q)^2/2\bigr)^{-1}, K_\alpha/2 \bigr\} }{
			K_\alpha + 2|\alpha|(-\ln (\gamma(Q)/{\sqrt{2}})) }
		\le \frac{\gamma(Q)^2}{2} \le 1, 
	$$
	and hence $\kappa \le 1/2$ and $1-\kappa \ge 1/2$.
	Summing up, we find for all $n \ge 0$ that
	$$
		E_n(Q) \ge \frac{(n-1)_+}{|Q|} (1-\kappa) 
			2|\alpha| \frac{ \min\bigl\{ \bigl(1-\gamma(Q)^2/2\bigr)_+^{-1}, K_\alpha/2 \bigr\} }{
			K_\alpha + 2|\alpha|\bigl(-\ln(\gamma(Q)/{\sqrt{2}})\bigr)_+ }, 
	$$
	and may again use convexity in $n$ to obtain the corresponding bound for
	$T_Q[\Psi]$.
\end{proof}

%--------------------- SOFT-CORE VERSION ---------------------

Although not aiming to provide the sharpest possible bound, the above lemma
has the advantage of being relatively simple and it captures the overall 
dependence of the pure short-range interaction on the parameters.
In a certain regime however, referred to below as the \emph{soft-core regime}, 
the following version 
(which could in some sense be viewed as a mix between the two and three-dimensional cases 
studied in~\cite{LieYng-98, LieYng-00, LieYng-01, LieSeiSolYng-05})
will yield a comparatively good bound.

\begin{lemma}[Soft-core exclusion] \label{lem:TempleSmearBound}
	For any $R \ge 0$ and $Q \subseteq \R^2$ a square, 
	and with $\gamma(Q) := R|Q|^{-\frac{1}{2}}$, we have that
	\begin{multline}
		E_n^{(\kappa, 1-\kappa, 0)}(Q)
		\ge 2\pi|\alpha|(1-\kappa)\bigl(1-2\gamma(Q)\bigr)_+^2 \frac{n(n-1)}{|Q|} \biggl(
			1 - \frac{2|\alpha| \gamma(Q)^{-2} n(n-1)}{\pi^2\kappa/(1-\kappa) - 2\pi|\alpha|n(n-1)}
			\biggr)_{\!\!+}, 
	\end{multline}
	for any $\kappa \in (0, 1)$, $\alpha \in \R$ and $n \ge 2$ such that
	$\pi^2\kappa/(1-\kappa) > 2\pi|\alpha|n(n-1)$.
\end{lemma}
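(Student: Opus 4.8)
The plan is to observe that the weight $\kappa_3=0$ removes every magnetic operator $D_j$ from the form in~\eqref{eq:localEnergyE}, so that $E_n^{(\kappa,1-\kappa,0)}(Q)$ is simply the ground-state energy of an honest $n$-body Schr\"odinger operator on the cube with Neumann boundary conditions and a bounded, nonnegative pair interaction, which I would bound from below by Temple's inequality with the constant trial state. Concretely, setting $f=|\psi|\ge 0$ and $V(\bx):=\frac{2|\alpha|}{R^2}\1_{B_R(0)}(\bx)$, the energy form in~\eqref{eq:localEnergyE} becomes $\langle f,Hf\rangle$ with
\[
	H := \kappa\sum_{j=1}^n(-\Delta_j^{\mathrm N}) + W,
	\qquad
	W(\sx) := 2(1-\kappa)\!\!\sum_{1\le j<k\le n}\!\! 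V(\bx_j-\bx_k)\ \ge\ 0,
\]
where $-\Delta_j^{\mathrm N}$ is the Neumann Laplacian on the square $Q$ and $W$ acts by multiplication; hence $E_n^{(\kappa,1-\kappa,0)}(Q)\ge\inf\sigma(H)$. (If $R=0$ the asserted right-hand side is nonpositive, so there is nothing to prove, and I take $R>0$.)

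First I would assemble the data that Temple's inequality needs, evaluated on the normalized constant $\psi_0:=|Q|^{-n/2}$. Since $\sum_j(-\Delta_j^{\mathrm N})\psi_0=0$ one has $H\psi_0=W\psi_0$, so $\bar E:=\langle\psi_0,H\psi_0\rangle=\langle\psi_0,W\psi_0\rangle$ and $\|(H-\bar E)\psi_0\|^2=\|W\psi_0\|^2-\bar E^2\le\|W\|_\infty\,\bar E$, the last step using the pointwise bound $0\le W(\sx)^2\le\|W\|_\infty W(\sx)$. Here $\|W\|_\infty=(1-\kappa)n(n-1)\frac{2|\alpha|}{R^2}$, attained when all particles coincide, while computing $\langle\psi_0,V(\bx_1-\bx_2)\psi_0\rangle=|Q|^{-2}\int_{Q^2}V(\bx_1-\bx_2)$ and restricting the outer integration to the concentric sub-square of points at distance $\ge R$ from $\partial Q$, of area $|Q|(1-2\gamma(Q))_+^2$, gives the two-sided estimate
\[
	2\pi|\alpha|(1-\kappa)(1-2\gamma(Q))_+^2\,\frac{n(n-1)}{|Q|}
	\ \le\ \bar E\ \le\
	2\pi|\alpha|(1-\kappa)\,\frac{n(n-1)}{|Q|}.
\]
For the spectral gap I would use that, $W\ge0$ being a nonnegative perturbation, the min--max principle forces the second eigenvalue of $H$ to be at least that of $\kappa\sum_j(-\Delta_j^{\mathrm N})$, namely $\mu:=\kappa\pi^2/|Q|$ (the product of $\kappa$ with the first non-zero Neumann eigenvalue $\pi^2/|Q|$ of $-\Delta$ on $Q$, realized by exciting a single particle), so that $\sigma(H)$ has no spectrum in $(\inf\sigma(H),\mu)$. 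The standing hypothesis $\pi^2\kappa/(1-\kappa)>2\pi|\alpha|n(n-1)$ is exactly the inequality $\mu>2\pi|\alpha|(1-\kappa)n(n-1)/|Q|\ge\bar E$ required to run Temple.

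It then remains to run the estimate $\inf\sigma(H)\ge\bar E-\|(H-\bar E)\psi_0\|^2/(\mu-\bar E)\ge\bar E\bigl(1-\|W\|_\infty/(\mu-\bar E)\bigr)$, substitute $\mu-\bar E\ge\frac{1-\kappa}{|Q|}\bigl(\frac{\pi^2\kappa}{1-\kappa}-2\pi|\alpha|n(n-1)\bigr)$ together with the value of $\|W\|_\infty$, and use $|Q|/R^2=\gamma(Q)^{-2}$ to conclude $\|W\|_\infty/(\mu-\bar E)\le\frac{2|\alpha|\gamma(Q)^{-2}n(n-1)}{\pi^2\kappa/(1-\kappa)-2\pi|\alpha|n(n-1)}$; bounding $\bar E$ below by $2\pi|\alpha|(1-\kappa)(1-2\gamma(Q))_+^2n(n-1)/|Q|$ in the prefactor, and invoking $\inf\sigma(H)\ge0$ to cover the case where the resulting bracket is nonpositive, then yields exactly the claimed inequality with its positive-part convention. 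There is no genuine obstacle --- this is a textbook application of Temple's inequality --- and the only points needing care are checking that the stated hypothesis is literally the condition $\bar E<\mu$ that Temple requires, and tracking the boundary loss $(1-2\gamma(Q))_+^2$ in the lower bound for $\bar E$, which is precisely why this bound is competitive only in the soft-core regime where $\gamma(Q)$ and $n$ are not too large.
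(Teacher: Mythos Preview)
Your proposal is correct and follows essentially the same route as the paper: both reduce to the Neumann Schr\"odinger operator $H=-\kappa\Delta^{\mathrm N}_{Q^n}+(1-\kappa)W$, apply Temple's inequality with the constant ground state $\psi_0$, bound $\langle W\rangle_{\psi_0}$ two-sidedly via the sub-square of margin $R$, control $\langle W^2\rangle_{\psi_0}$ by $\|W\|_\infty\langle W\rangle_{\psi_0}$, and use $\lambda_1(-\kappa\Delta^{\mathrm N}_{Q^n})=\kappa\pi^2/|Q|$ in the denominator. The only cosmetic difference is that you phrase Temple with a lower bound $\mu\le E_1(H)$ obtained from $W\ge0$ and min--max, whereas the paper states Temple directly with $\lambda_1(H_0)$ in the denominator; since you then take $\mu=\lambda_1(H_0)$ anyway, the two arguments coincide line by line.
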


\begin{proof}
	Following~\cite{LieSeiSolYng-05} we write for the operator of the left-hand side
	$$
		H = \kappa\sum_{j=1}^n (-\Delta_{\bx_j}) + (1-\kappa)W, 
	$$
	with (assuming $\alpha>0$ for notational simplicity)
	$$
		W = 2\pi\alpha \sum_{j \neq k} \frac{\1_{B_R(0)}}{\pi R^2}(\bx_j-\bx_k). 
	$$
	We apply the following result due to Temple~\cite{Temple-28, LieSeiSolYng-05}: 
	If $H=H_0+V$, for some Schr\"odinger operator $H_0\ge 0$ and scalar potential 
	$V\ge 0$, then the ground-state energy of $H$ is bounded from below by
	\begin{equation}
		\lambda_0(H_0)+\langle V \rangle_{\psi_0}- \frac{ \langle V^2 
		\rangle_{\psi_0} - \langle V \rangle_{\psi_0}^2}{\lambda_1(H_0)-\langle V \rangle_{\psi_0}}, 
	\end{equation}
	as long as $\lambda_1(H_0)-\langle V \rangle_{\psi_0}$ is positive. 
	Here $\psi_0$ denotes the normalized ground state of $H_0$, 
	$\langle V \rangle_{\psi_0} := \int V |\psi_0|^2$ 
	is the expectation of $V$ in the state $\psi_0$, 
	and $\lambda_0(H_0)$ resp.\ $\lambda_1(H_0)$ is the first resp.\ second eigenvalue of $H_0$.

	In our case, 
	$H_0 = -\kappa\Delta_{Q^n}^\mathcal{N}$ (the Neumann Laplacian)
	and $\psi_0 \equiv |Q|^{-n/2}$, we have that 
	\begin{align}
	 	2\pi\alpha \frac{n(n-1)}{|Q|} \ge \langle W \rangle_{\psi_0}
	 	\ge 2\pi\alpha (1-2\gamma(Q))^2 \frac{n(n-1)}{|Q|}, 
	\end{align}
	where for the lower bound one integrates the first particle of each pair 
	on a smaller domain with margin $R$ away from the boundary. 
	Moreover, by Cauchy--Schwarz
	\begin{align}
		\langle W^2 \rangle_{\psi_0} 
		\le \frac{2\alpha}{R^2} n(n-1) \langle W \rangle_{\psi_0}.
	\end{align}
	Thus Temple's inequality yields that
	\begin{align}
		H &\ge \langle(1-\kappa)W\rangle_{\psi_0} 
			- \frac{\langle(1-\kappa)^2 W^2\rangle_{\psi_0} - \langle(1-\kappa)W\rangle_{\psi_0}^2}{
				\lambda_1(\kappa\sum_j(-\Delta_j)) - \langle(1-\kappa)W\rangle_{\psi_0}} \\
		&\ge (1-\kappa)\langle W \rangle_{\psi_0} \biggl( 1 - 
			\frac{(1-\kappa)2\alpha R^{-2} n(n-1)}{\kappa \pi^2/|Q| - (1-\kappa)\langle W \rangle_{\psi_0}}
			\biggr) \\
		&\ge 2\pi\alpha(1-\kappa)(1-2\gamma(Q))^2 \frac{n(n-1)}{|Q|} \biggl( 1 - 
			\frac{2\alpha \gamma(Q)^{-2} n(n-1)}{\pi^2\kappa/(1-\kappa) - 2\pi\alpha n(n-1)}
			\biggr), 
	\end{align}
	as claimed.
\end{proof}

% ------------------  EXCLUSION LEMMA - LONG RANGE  --------------------

\subsection{Long-range exclusion}\label{sec:exclusionLR}

We now turn to local energy bounds for the pure long-range part of the magnetic
interaction.

\begin{lemma}[Local exclusion --- long range] \label{lem:LocalExclusionLR}
	For any $\alpha \in \R$, $R \ge 0$ and $Q \subseteq \R^2$ a square, 
	and with $\gamma(Q) := R|Q|^{-\frac{1}{2}}$, we have that
	\begin{equation}
		E_n^{(0, 0, 1)}(Q)
		\ge \frac{\eLR(\alpha, \gamma(Q))}{|Q|} (n-1)_+, 
	\end{equation}
	and 
	\begin{equation}
		T^{(0, 0, 1)}_Q[\Psi] \ge 
			\frac{\eLR(\alpha, \gamma(Q))}{|Q|} \biggl( \int_Q\varrho_\Psi \ -1 \biggr)_{\!\!+}, 
	\end{equation}
	with
	\begin{equation} \label{eq:eLRextended}
		\eLR(\alpha, \gamma) := 
		\frac{\pi}{24} g\bigl(c\alpha_N, 12\gamma\bigr)^2 (1-12\gamma)_+^3, 
	\end{equation}
	where $c = 5.3/{\sqrt{8}} \cdot 10^{-4}$.
	
	For $R=0$, the above bounds are valid with $\eLR(\alpha, 0) = f((j_{\alpha_N}')^2)$ 
	for all $\alpha \in \R$, where $f\colon [0, (j_1')^2] \to \R$ is a function 
	defined below satisfying
	\begin{equation} \label{eq:eLRideal}
		t/6 \le f(t) \le 2\pi t 
		\quad \text{and} \quad
		f(t) = 2\pi t \bigl(1 - O(t^{1/3})\bigr)
	\end{equation}
	(see Figure~\ref{fig:IdealPlots} for both lower and upper bounds for $f$).
\end{lemma}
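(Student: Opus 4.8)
The plan is to feed Theorem~\ref{thm:long-range} into the Dyson--Lenard local-exclusion scheme (developed in~\cite{LunSol-13a}), tracking the extra $\gamma$-dependence and, in the ideal case $R=0$, optimising the bookkeeping to reach the near-optimal constant encoded in $f$. For the two bounds on $E_n^{(0,0,1)}$ I would start from the form~\eqref{eq:localEnergyE} with $\bkappa=(0,0,1)$ and apply Theorem~\ref{thm:long-range} with $\Omega=Q$ (a square is open and convex), keeping its penultimate lower bound and discarding the nonnegative total-momentum term. For a fixed pair $(j,k)$, fixed positions of the remaining $N-n$ particles and a fixed centre of mass $\bX=\bX_{jk}\in Q$, the inner integral over the relative coordinate $\br_{jk}$, which runs over $(Q-\bX)\cap(\bX-Q)\supseteq B_{\delta(\bX)}(0)$ with $\delta(\bX)=\dist(\bX,\partial Q)$, decomposes slice by slice in the polar angle into one-dimensional Neumann problems for $-(1-\kappa)r^{-1}(ru')'+c(\kappa)^2\alpha_N^2 r^{-2}\1_{(3R,\delta(\bX)-3R)}$. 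Its ground-state eigenvalue is positive precisely when $\delta(\bX)>6R$, and the rescaling used in the proof of Theorem~\ref{thm:long-range} bounds it below by $(1-\kappa)\,g\bigl(c(\kappa)\alpha_N/\sqrt{1-\kappa},\tfrac{3R/\delta(\bX)}{1-3R/\delta(\bX)}\bigr)^2/\delta(\bX)^2$ up to a factor controlled by the length of the free outer slice. This turns $\sum_{j=1}^n\int_{Q^n}|D_j\Psi|^2$ into $\tfrac1n\sum_{j<k}\int_{Q^n}\Lambda(\bX_{jk})\,|\Psi|^2$ for an explicit nonnegative weight $\Lambda$ on $Q$, positive on $\{\delta(\bX)>6R\}$.

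To make $\Lambda$ uniformly positive I would restrict the $\bX$-integration per pair to the central subsquare $Q'=\{\bX:\delta(\bX)\ge \ell/4+3R\}$, $\ell:=|Q|^{1/2}$, which is nonempty exactly when $3R<\ell/4$, i.e.\ $\gamma(Q)<1/12$, and then has side $\tfrac12\ell(1-12\gamma(Q))$. On $Q'$ one has $\tfrac{3R/\delta}{1-3R/\delta}\le 12\gamma(Q)$ and $\delta(\bX)^2\le\operatorname{diam}(Q)^2=2|Q|$, so by the monotonicity of $g(\nu,\cdot)$ (part of Theorem~\ref{thm:long-range}) and after collecting three powers of $(1-12\gamma(Q))$ from the area of $Q'$ and the length of the effective radial interval, one gets $\Lambda(\bX)\ge \eLR(\alpha,\gamma(Q))/|Q|$ for $\bX\in Q'$, once $\kappa$ is optimised ($\kappa=1/2$ is optimal, giving the constant $c=5.3/\sqrt8\cdot10^{-4}$ via the constant $5.3\cdot10^{-4}$ from the proof of Theorem~\ref{thm:MainRadialBound}) and the Jacobian and geometric constants are absorbed. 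It then remains to pass from $\tfrac1n\sum_{j<k}\int_{Q^n}\1_{Q'}(\bX_{jk})|\Psi|^2$ to a multiple of $(n-1)_+$: this is the Dyson--Lenard step run as in~\cite{LunSol-13a}, with configurations carrying a fixed fraction of the particles in $Q'$ contributing through the $\binom{\cdot}{2}$ pair count, while configurations crowded into the frame $Q\setminus Q'$ are handled by iterating the bound on an $O(1)$ collection of smaller squares covering the frame, using the diamagnetic inequality (Lemma~\ref{lem:diamagnetic}) to absorb the residual highly concentrated states. The stated bound for $T_Q^{(0,0,1)}[\Psi]$ then follows from~\eqref{eq:localEnergyT} and Jensen's inequality applied to the convex function $n\mapsto(n-1)_+$, using $\sum_{n}(n-1)_+p_n(\Psi;Q)\ge\bigl(\int_Q\varrho_\Psi-1\bigr)_+$.

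For $R=0$ the annular cut-off becomes $\1_A=\1_{B_{\delta(\bX_{jk})}(0)}(\br_{jk})$ and the Hardy potential $\alpha_N^2/r^2$ is active on the whole interval $(0,\delta(\bX))$, so the relevant one-dimensional eigenvalue is the genuine Bessel value $g(\alpha_N,0)^2/\delta(\bX)^2=(j_{\alpha_N}')^2/\delta(\bX)^2$ and no splitting of $\kappa$ is needed ($c(\kappa)\equiv1$). The only remaining losses are the centre-of-mass margin and the ``far'' cut-off $r_{jk}<\delta(\bX_{jk})$ that forces the relative disk to lie in $Q$; optimising the side $\varepsilon\ell$ of the central subsquare against the shrinking area and the induced degradation of the radial eigenvalue defines $f(t)$ as the value of this elementary one-parameter optimisation at $t=(j_{\alpha_N}')^2\in[0,(j_1')^2]$. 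A crude admissible choice gives $f(t)\ge t/6$, the constant trial state gives $f(t)\le 2\pi t$, and balancing an $O(\varepsilon)$ relative loss against an $O(t/\varepsilon^2)$ degradation (optimal at $\varepsilon\sim t^{1/3}$) yields $f(t)=2\pi t(1-O(t^{1/3}))$ as $t\to0$.

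The main obstacle is this last point: extracting the sharp leading constant $2\pi$ in the ideal case, since the square geometry encoded in $\delta(\bX)$ and $(Q-\bX)\cap(\bX-Q)$ forces genuine, non-negligible losses that must be optimised against the radial eigenvalue without spoiling the uniform $(n-1)_+$ structure --- in particular one has to control carefully the mass of $\Psi$ at relative separations comparable to $|Q|^{1/2}$ and near $\partial Q$, which is precisely where the relative-coordinate weight degenerates. By contrast the $R>0$ part is comparatively mechanical once Theorems~\ref{thm:long-range} and~\ref{thm:MainRadialBound} are in hand, the only real care being the threshold $\gamma(Q)<1/12$ and the three powers of $1-12\gamma(Q)$.
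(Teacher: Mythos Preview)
Your overall strategy is right up to the point you yourself flag as the main obstacle, but there the proposal has a genuine gap. After restricting the centre of mass to a central subsquare $Q'$ you are left with
\[
\frac{1}{n}\sum_{j<k}\int_{Q^n}\1_{Q'}(\bX_{jk})\,|\Psi|^2\,d\sx,
\]
and you claim this can be bounded below by a multiple of $(n-1)_+$ via an iteration over smaller squares covering the frame $Q\setminus Q'$. This cannot work as stated: take $\psi$ supported on configurations with all $n$ particles packed into a small corner of $Q$. Neumann boundary conditions impose no kinetic cost on such a state, and you have already traded away all the magnetic kinetic energy for the potential $\Lambda$. Then every $\bX_{jk}$ lies in the corner, $\1_{Q'}(\bX_{jk})\equiv 0$, and the displayed quantity vanishes while $(n-1)_+$ is arbitrary. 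Iterating on subsquares does not recover positivity from zero, and the diamagnetic inequality on a Neumann box gives only $\sum_j\bigl|\nabla_j|\psi|\bigr|^2\ge 0$.

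The paper's remedy is precisely to \emph{retain} a fraction $\kappa$ of the kinetic energy rather than spend it all on the Hardy potential. One writes $\sum_j|D_j\Psi|^2$ as a $\kappa$-part, on which the diamagnetic inequality yields $\kappa\sum_j\bigl|\nabla_j|\Psi|\bigr|^2$, plus a $(1-\kappa)$-part on which Theorem~\ref{thm:long-range} gives the pair potential; rewriting $\sum_j = \frac{1}{n-1}\sum_{j<k}$ in the kinetic piece, each pair $(j,k)$ contributes a genuine two-body Schr\"odinger form
\[
\int_{Q^2}\Bigl(\kappa\bigl(\bigl|\nabla_1|\psi|\bigr|^2+\bigl|\nabla_2|\psi|\bigr|^2\bigr)+(1-\kappa)\,t\,\hat V\,|\psi|^2\Bigr)d\bx_1 d\bx_2,
\]
with $\hat V$ a bounded truncation of $\1_A/\delta(\bX)^2$. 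Its Neumann ground-state energy on $Q^2$ is then estimated by the projection method (split $P\oplus P^\perp$ onto $\psi_0\equiv 1$, balance $\langle\hat V\rangle_{\psi_0}$ against the spectral gap $\pi^2$ of $-\Delta_{Q^2}^{\mathcal N}$ and $\|\hat V\|_\infty$), or, for the sharp small-$t$ asymptotics, by Temple's inequality. This yields a uniform bound $2f(t)\int_{Q^2}|\psi|^2$ valid for \emph{every} state, including boundary-concentrated ones, and then $\frac{1}{n}\binom{n}{2}\cdot 2f(t)=(n-1)_+f(t)$ is automatic. In particular $f(t)$ is defined as half this optimised ground-state energy, not as a geometric optimisation over a margin $\varepsilon$; the $O(t^{1/3})$ error comes from Temple with $\kappa\sim t^{1/3}$ and truncation scale $\hat\delta\sim t^{1/3}$, not from a centre-of-mass cutoff. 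The extended case runs identically with $\hat V$ built from the $g$-weight and $\kappa=\mu=1/2$.
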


\begin{figure}[ht]
	\centering
	\begin{tikzpicture}
		\node [above right] at (0,0) {\includegraphics[scale=0.77,clip,
		trim = 0pt 0.4pt 0pt 0pt]{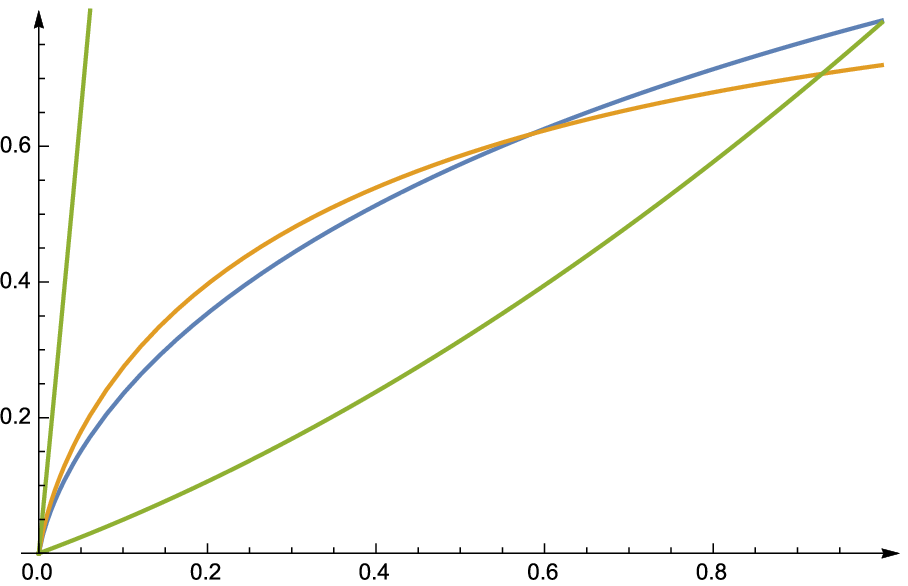}};
		\node [above right] at (7.12,.18) {\scalebox{0.8}{$\nu$\hspace{-2pt}}};
	\end{tikzpicture}
	\begin{tikzpicture}
		\node [above right] at (0,0) {\includegraphics[scale=0.79]{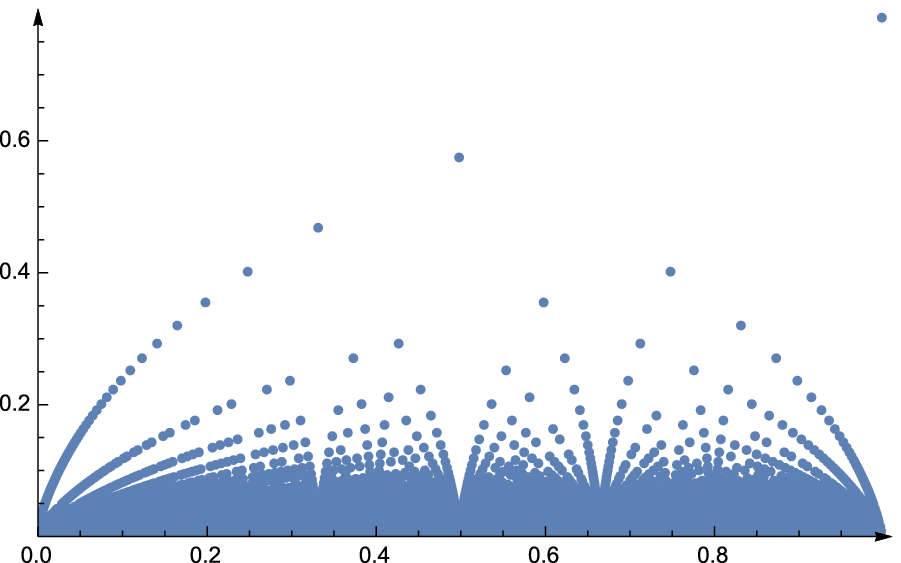}};
		\node [above right] at (7.23,.18) {\scalebox{0.8}{$\alpha$\hspace{-2pt}}};
	\end{tikzpicture}
	\caption{Left: A comparison between the optimized energy bounds for 
	$f((j_\nu')^2)$ on the unit square as a function of $\nu \in [0, 1]$, 
	obtained by means of the projection method (blue) and 
	Temple (yellow), as well as the upper and lower bounds given in 
	(\ref{eq:eLRideal})
	(green). 
	Right: A numerical lower bound to the energy 
	$\eLR(\alpha, 0) = f((j_{\alpha_*}')^2)$ 
	on the unit square as a function of $\alpha$. 
	The bound uses the projection method and the erratic behavior is due to 
	the function $\alpha \mapsto \alpha_*$ being discontinuous at all 
	odd-numerator rationals.}
	\label{fig:IdealPlots}
\end{figure} 

The tiny constant $c$ stems from Theorem~\ref{thm:long-range} and again we 
expect that it could be replaced with $c=1/{\sqrt{3}}$ or just slightly smaller
(recall Remark~\ref{rem:MainRadialBound}).
Accordingly we have not aimed for the sharpest possible bounds in our proof for $R>0$.
Note however that for $R=0$ and in the limit $\alpha \to 0$,
the two-particle energy per particle  
is exactly the expected one from average-field theory, 
$\pi (j_{\alpha_*}')^2 \sim 2\pi \alpha_* \sim 2\pi|\alpha|$ for suitable $\alpha$,
however the bound is only linear (and not quadratic) in
$n$ and hence only good for small enough boxes $Q$, resulting in a worse 
constant (by a factor $1/2$) when applied below in the thermodynamic limit.
Also note that the bounds involve $\alpha_N$ and not $\alpha_n$
or $\alpha_{\lceil \int_Q \varrho_\Psi \rceil}$ because there is a probability that
more particles (in fact all the way up to $N$) can be found on $Q$.

\begin{proof}[Proof of Lemma~\ref{lem:LocalExclusionLR}]
\mbox{} 

{\bf Ideal case.} 
We begin with the more transparent case of $R=0$, 
and note that we may set $|Q|=1$ by scaling. 
Our starting point is the long-range magnetic interaction bound provided by 
Theorem~\ref{thm:long-range}. For the ideal case the 
theorem states that
\begin{equation}
	\sum_{j=1}^n \int_{Q^n} |D_j\Psi|^2\, d\sx \ge \frac{1}{n}\sum_{j<k}\int_{Q^n}(j'_{\alpha_N})^2 
	\frac{\1_{B_{\delta(\bX_{jk})}}(\br_{jk})}{\delta(\bX_{jk})^2}|\Psi|^2\, d\sx.
\end{equation}
In order to convert this non-uniform potential to a uniform bound for the energy
we take part of the kinetic energy and then apply 
either Temple's inequality as in Lemma~\ref{lem:TempleSmearBound} 
or a projection argument as in~\cite[Lemma~7]{LunSol-13a}
or Lemma~\ref{lem:1Dprojection}.
To this end we take a fraction $\kappa\in [0, 1]$ of the original kinetic
energy for which we use the diamagnetic inequality and the identity
\begin{equation}
	\sum_{j=1}^n |\bz_j|^2 = \frac{1}{n-1}\sum_{j<k}(|\bz_j|^2+|\bz_k|^2), 
	\qquad \bz_j\in \C, 
\end{equation}
and on the remaining fraction $1-\kappa$
we use Theorem~\ref{thm:long-range}.
We then obtain that\footnote{It also turns out that we do not gain much by keeping the
$n$-dependence in the first term if we are aiming for a bound which is 
convex in $n$.} 
\begin{align}\label{eq:Diamagnetic+Longrange}
	&\sum_{j=1}^n \int_{Q^n} |D_j\Psi|^2\, d\sx \\
	&\ge \frac{1}{n}\sum_{j<k}\int_{Q^n}\biggl(\frac{\kappa n}{n-1}
		\bigl(\bigl|{\nabla_{j}|\Psi|}\bigr|^2 + \bigl|{\nabla_{k}|\Psi|}\bigr|^2\bigr)
		+ (1-\kappa) (j'_{\alpha_N})^2 \frac{\1_{B_{\delta(\bX_{jk})}}(\br_{jk})}{\delta(\bX_{jk})^2}|\Psi|^2\biggr)d\sx\\
	&\ge \frac{1}{n}\sum_{j<k}\int_{Q^{n-2}}\int_{Q^2}\biggl(\kappa\bigl(\bigl|{\nabla_{j}|\Psi|}\bigr|^2+\bigl|{\nabla_{k}|\Psi|}\bigr|^2\bigr)+ 
		(1-\kappa)(j'_{\alpha_N})^2 \frac{\1_{B_{\delta(\bX_{jk})}}(\br_{jk})}{\delta(\bX_{jk})^2}|\Psi|^2\biggr)d\bx_jd\bx_k d\sx' \\[-7pt]
	&\ge (n-1)_+ \, \eLR(\alpha, 0), 
\end{align}
where $\eLR(\alpha, 0) := f((j^{\prime}_{\alpha_N})^2)$ and
\begin{equation}\label{eq:def_of_ideal_pair_f}
	f(t) := \frac{1}{2} \sup_{\kappa \in (0, 1)} \inf_{\int_{Q^2}|\psi|^2=1} \int_{Q^2}
		\biggl(\kappa\bigl(\bigl|{\nabla_1|\psi|}\bigr|^2 
			+ \bigl|{\nabla_2|\psi|}\bigr|^2\bigr) 
		+ (1-\kappa) t \frac{\1_{B_{\delta(\bX)}}(\br)}{\delta(\bX)^2}|\psi|^2\biggr) d\bx_1 d\bx_2.
\end{equation}
We then use the convexity in $n$ to obtain the corresponding bound for 
$T_Q[\Psi]$ in terms of $\eLR(\alpha, 0)$.
The upper bound $f(t)\le 2\pi t$ is found simply by taking the trial state 
$\psi = \psi_0 \equiv 1$ and then $\kappa=0$, carrying out the integration as below
(with $\hat\delta = 0$).

We now wish to find a lower bound for the integral in $f(t)$, 
which then is to be maximized in $\kappa$.
This is equivalent to finding a lower bound for the ground-state energy of the 
Schr\"odinger operator 
\begin{equation}
	H:= -\kappa\Delta_{Q^2}^\mathcal{N}+t(1-\kappa)V, 
	\qquad V(\bx_1, \bx_2):=V(\br, \bX)=\frac{\1_{B_{\delta(\bX)}}(\br)}{\delta(\bX)^2}.
\end{equation}
However, to apply a projection bound or use Temple's inequality requires that 
${V\in L^\infty(Q^2)}$ and $V\in L^2(Q^2)$, respectively. 
As neither of these conditions are satisfied for our $V$ we use the fact that 
$V\ge0$ and thus truncating our potential will only lower the energy. 
Therefore we instead study the eigenvalue problem with $V$ replaced by the 
truncated potential $\hat{V}$ defined in relative coordinates by
\begin{equation}
	\hat{V}(\br, \bX):= \left\{\begin{matrix}
		\displaystyle{\frac{\1_{B_{\delta(\bX)}}(\br)}{\delta(\bX)^2}}, & \delta(\bX)\ge \hat\delta\\[9pt]
		\displaystyle{\frac{\1_{B_{\delta(\bX)}}(\br)}{\hat\delta^2}}, & \delta(\bX)< \hat\delta
	\end{matrix}\right.
\end{equation}
(in slightly more compact notation, 
$\hat{V}=\min\{V, 1/\hat\delta^{2}\}$). 
As $\hat{V} \in L^\infty(Q^2)$, $\|\hat{V}\|_{\infty}=1/\hat\delta^{2}$, 
it follows that also $\hat{V}\in L^2(Q^2)$. 

We proceed by calculating the expectation of $\hat{V}$ and $\hat{V}^2$ 
in the ground state $\psi_0 \equiv 1$ of $-\Delta^\mathcal{N}_{Q^2}$, 
as needed for the bounds. Through a straightforward calculation one finds that
\begin{align}
	\langle \hat{V} \rangle_{\psi_0}
	&= 4\int_{Q}\int_{Q_\bX} \hat{V}(\br, \bX)\, d\br d\bX\\
	&= 4\biggl(\int_{[\hat\delta, 1-\hat\delta]^2}\int_{Q_\bX}\frac{1}{\delta(\bX)^2}\, d\br d\bX+\int_{Q\setminus[\hat\delta, 1-\hat\delta]^2}\int_{Q_\bX}\frac{1}{\hat\delta^2}\, d\br d\bX\biggr)\\
	&= 4\pi \Bigl(1+2\hat\delta^2- \frac{8\hat\delta}{3}\Bigr), 
\end{align}
and correspondingly for $\hat{V}^2$ we obtain
\begin{align}
	\langle \hat{V}^2 \rangle_{\psi_0}
	&= 4\int_{Q}\int_{Q_\bX} \hat{V}(\br, \bX)^2\, d\br d\bX\\
	&= 4\biggl(\int_{[\hat\delta, 1-\hat\delta]^2}\int_{Q_\bX}\frac{1}{\delta(\bX)^4}\, d\br d\bX+\int_{Q\setminus[\hat\delta, 1-\hat\delta]^2}\int_{Q_\bX}\frac{1}{\hat\delta^4}\, d\br d\bX\biggr)\\
	&= 8\pi \Bigl(\frac{8}{3\hat\delta}+4\ln(2\hat\delta)-5\Bigr).
\end{align}
Choosing $\hat\delta=\eta/2$ for some $\eta \in [0, 1]$ 
(this normalization is convenient) results in
\begin{align}
	\langle \hat{V} \rangle_{\psi_0} = 4\pi\Bigl(1+\frac{\eta^2}{2}-\frac{4\eta}{3}\Bigr), \quad \mbox{and}\quad
	\langle \hat{V}^2 \rangle_{\psi_0} = 8\pi \Bigl(\frac{16}{3}\eta^{-1}+4\ln\eta-5\Bigr).
\end{align}
Our considerations here have been for $\Omega=Q$ the unit square but also 
other domains $\Omega$ could be of interest.
Similar calculations when $\Omega$ is the unit disk and $\hat\delta = \eta$ 
give instead
\begin{align}
	\langle \hat{V} \rangle_{\psi_0} = 4 \Bigl(1+\frac{\eta^2}{2}-\frac{4\eta}{3}\Bigr), \quad\mbox{and}\quad
	\langle \hat{V}^2 \rangle_{\psi_0} = 2\Bigl(\frac{16}{3}\eta^{-1}+4\ln\eta-5\Bigr).
\end{align}

Let $P$ denote the orthogonal projection onto the ground state $\psi_0\equiv 1$, 
and let $P^\perp=1-P$. Then $(-\Delta_{Q^2}^\mathcal{N})P=0$, and with 
$\lambda_1(-\Delta_{Q^2}^\mathcal{N})$ the first non-zero Neumann eigenvalue, 
\begin{equation}
	(-\Delta_{Q^2}^\mathcal{N})P^\perp \ge  
	\lambda_1(-\Delta_{Q}^\mathcal{N})P^\perp = \pi^2 P^\perp.
\end{equation}
Arguing as in Lemma~\ref{lem:1Dprojection}, we for any $\mu\in (0, 1)$ obtain that 
\begin{equation}\label{eq:Proj_Split_Potential}
	\hat{V}  
	\ge (1-\mu)P\hat{V}P+(1-\mu^{-1})P^\perp \hat{V}P^\perp, 
\end{equation}
the first of these operators is equal to 
$\langle \hat{V} \rangle_{\psi_0}P$, 
and we can control the second term by using that 
$\|P^\perp \hat{V}P^\perp\| \le \|\hat{V}\|_{\infty}
=4/\eta^2$.

Thus, for any $\mu, \kappa, \eta \in (0, 1)$ we find that
\begin{align}
	H 
	&\ge 
	(1-\mu)4\pi t(1-\kappa)\Bigl(1+\frac{\eta^2}{2}-\frac{4\eta}{3}\Bigr) P + \biggl(\kappa\pi^2+(1-\mu^{-1})\frac{4t(1-\kappa)}{\eta^2}\biggr)P^\perp\\
	&\ge 
	\min \biggl\{(1-\mu)4\pi t (1-\kappa)\Bigl(1+\frac{\eta^2}{2}-\frac{4\eta}{3}\Bigr), 
	\kappa\pi^2+(1-\mu^{-1})\frac{4t(1-\kappa) }{\eta^2}\biggr\}(P+P^\perp).
\end{align}
The last expression, seen as a function in $t$, is piecewise linear and concave. 
Thus to obtain the largest linear minorant of this function it suffices to find 
the largest value attained at the right endpoint of our range of values $t$, 
that is at $t=(j'_1)^2\approx 3.8996$.

By the $\mu$ dependence of each of the two terms in the minimum this quantity 
is seen to be maximal when the two terms are equal. Solving this quadratic 
equation in $\mu$ and choosing $\eta=\kappa = 0.68$  
we find that
\begin{equation}\label{eq:Unif.bound ideal pair q-form}
	H \ge t/3 \quad \mbox{and hence}\quad f(t)\ge t/6.
\end{equation}

To obtain that $f(t)= 2\pi t(1-O(t^{1/3}))$
 we apply Temple's inequality
(as in Lemma~\ref{lem:TempleSmearBound}). In our current setting it yields that
\begin{align}
	H 
	&\ge 
	\langle t(1-\kappa)\hat{V} \rangle_{\psi_0}
		- \frac{\langle t^2(1-\kappa)^2\hat{V}^2 \rangle_{\psi_0} 
			- \langle t(1-\kappa) \hat{V} \rangle_{\psi_0}^2
			}{\kappa\lambda_1(-\Delta_{Q^2}^\mathcal{N})
			-\langle t(1-\kappa)\hat{V} \rangle_{\psi_0}}\\
			&= 
	4\pi t(1-\kappa)\Biggl(1+\frac{\eta^2}{2}-\frac{4\eta}{3}
	-
	\frac{2t(1-\kappa)}{\pi}\frac{
		\frac{16}{3}\eta^{-1}+4\ln\eta-5 -
		2\pi\bigl(1+\frac{\eta^2}{2}-\frac{4\eta}{3}\bigr)^2
	}
	{
	\kappa\pi- 4 t(1-\kappa)\bigl(1+\frac{\eta^2}{2}-\frac{4\eta}{3}\bigr)
	}\Biggr),
\end{align}
provided that 
$\kappa\pi- 4 t(1-\kappa)\bigl(1+\frac{\eta^2}{2}-\frac{4\eta}{3}\bigr)>0$.
We decrease the above quantity by throwing away positive terms and increasing 
the denominator of the last term yielding
\begin{align}
	H 
	&\ge
	  4\pi t (1-\kappa) \biggl(1-\frac{4\eta}{3}
	 -
	 \frac{32}{3\pi}\frac{
	 	(1-\kappa) t\eta^{-1}
	 }
	 {
	 \kappa\pi- 4 (1-\kappa) t
	 }\biggr).
\end{align}
The positivity of denominator is then ensured if $\kappa \ge \frac{4 t}{\pi}$. 
We can thus, for $t$ sufficiently small, choose $\kappa= t^\beta$ for some 
$0<\beta < 1$ to be fixed later. Inserting this into our expression we find that
\begin{align}
	H &\ge
	 4\pi t(1-t^\beta)  \biggl(1-\frac{4\eta}{3}
	-
	\frac{32}{3\pi}\frac{
		(1-t^\beta) t\eta^{-1}
	}
	{
	t^\beta\pi- 4(1-t^\beta) t
	}\biggr).
\end{align}
Setting $\eta= t^\gamma$, $\gamma>0$, we obtain that
\begin{equation}
	H \ge 4\pi t(1-O(t^\beta)-O(t^\gamma)-O(t^{1-\beta-\gamma})),
\end{equation}
and choosing $\beta=\gamma=1/3$ yields
\begin{equation}
	H \ge 4\pi t(1-O(t^{1/3})).
\end{equation}
Inserting this into~\eqref{eq:def_of_ideal_pair_f} we have
\begin{equation}
	f(t)=2\pi t(1-O(t^{1/3})), 
\end{equation}
which completes the proof.

\medskip

{\bf Extended case.} Let, in the case that $R \ge 0$, $\gamma$ denote
the relative length scale of the
interaction, $\gamma = \gamma(Q) = R|Q|^{-1/2}$, 
and note that we may again rescale everything so that $|Q|=1$.
We then proceed as above using projection, 
where the bound from Theorem~\ref{thm:long-range} is replaced by
\begin{equation}
	\sum_{j=1}^n \int_{Q^n} |D_j\Psi|^2\, d\sx 
	\ge (1-\kappa')\frac{1}{n}\sum_{j<k}\int_{Q^n} 
		g\Bigl(\nu, \frac{3\gamma}{\delta(\bX_{jk})
			-3\gamma}\Bigr)^2 
		\frac{\1_A(\bx_j, \bx_k)}{\delta(\bX_{jk})^2} |\Psi|^2 \, d\sx, 
\end{equation}
where $\nu = c(\kappa')\alpha_N/{\sqrt{1-\kappa'}}$ and 
$\kappa' \in (0, 1)$ is an additional parameter that we may optimize over, 
however we will in order to simplify the analysis take $\kappa' = 1/2$.
Since $\delta(\bX_{jk})$ maximally takes the value $1/2$, 
the above expression is zero for $\gamma \ge 1/12$. 
For $0 \le \gamma < 1/12$ we can proceed by truncating to 
the, in $\gamma$, uniformly bounded potential
\begin{equation}
	\hat{V}(\bX, \br) := \frac{1}{2}
		g\Bigl(\nu, \frac{3\gamma}{\delta(\bX)-3\gamma}\Bigr)^2 
		\frac{\1_{\hat{A}}(\bx_1, \bx_2)}{\delta(\bX)^2}, 
\end{equation}
with the support (consisting of truncated relative annuli)
\begin{equation}
	\hat{A} := \{ (\bx_1, \bx_2) \in Q^2
		: 3\gamma+1/4 \le \delta(\bX) \le 1/2 \ \text{and} \ 
		3\gamma \le |\br| \le \delta(\bX) - 3\gamma \}, 
\end{equation}
and therefore,
since $g(\nu, \gamma)$ 
is monotonically decreasing in $\gamma$,
\begin{equation}
	\|\hat{V}\|_\infty \le \frac{1}{2(3\gamma+1/4)^2} 
		g(\nu, 0)^2
		\le 8 (j_\nu')^2.
\end{equation}
Also, using the coarea formula and that $|\nabla \delta|=1$ almost everywhere, 
we obtain that
\begin{align}
	\langle\hat{V}\rangle_{\psi_0}
	&= \frac{1}{2} \int_Q  \int_{Q_{\bX}} 
		g\Bigl(\nu, \frac{3\gamma}{\delta(\bX)-3\gamma} \Bigr)^2 
		\frac{\1_{\hat{A}}(\bx_1, \bx_2)}{\delta(\bX)^2} \, 4d\br d\bX \\
	&= 2\pi \int_Q g\Bigl(\nu, \frac{3\gamma}{\delta(\bX)-3\gamma} \Bigr)^2 
		\frac{\bigl((\delta(\bX)-3\gamma)^2 - (3\gamma)^2\bigr)_+}{\delta(\bX)^2} \, d\bX \\
	&= 8\pi \int_{3\gamma+1/4}^{1/2} g\Bigl(\nu, \frac{3\gamma}{t-3\gamma} \Bigr)^2 
		(1-6\gamma/t)(1-2t) \, dt \\
	&\ge \frac{\pi}{3} g(\nu, 12\gamma)^2 (1-12\gamma)^3,
\end{align}
where in the last step we again used the monotonicity of $g$, and
\begin{equation}
	\int_{3\gamma+1/4}^{1/2} (1-6\gamma/t)(1-2t) \, dt
	= \biggl( \frac{1}{16} 
		+ \Bigl(\frac{3}{2} - 6\ln\frac{2}{1+12\gamma} \Bigr)\gamma 
		- 27\gamma^2 \biggr)
	\ge \frac{1}{24} (1-12\gamma)^3, 
\end{equation}
where the lower bound is found by Taylor expansion around $\gamma = 1/12$.

Thus, the corresponding projection bound for the operator 
$H = -\kappa\Delta_{Q^2}^\mathcal{N} + (1-\kappa)\hat{V}$ reads
\begin{align}
	H \ge \min \Bigl\{
		(1-\mu)(1-\kappa) \frac{\pi}{3} g(\nu, 12\gamma)^2 (1-12\gamma)_+^3, 
		\ \kappa\pi^2 - (\mu^{-1}-1)8(1-\kappa) (j_\nu')^2
		\Bigr\}.
\end{align}
We take, for simplicity,
$\mu = 1/2$ and $\kappa = 1/2$, 
and use that $g(\nu, 12\gamma) \le j_\nu' \ll \pi$, 
to obtain the claimed bound
\begin{equation}
	\sum_{j=1}^n \int_{Q^n} |D_j\Psi|^2\, d\sx 
		\ge (n-1)_+ \, \eLR(\alpha, \gamma), 
	\quad
	\eLR(\alpha, \gamma) = \frac{\pi}{24} g(\nu, 12\gamma)^2 (1-12\gamma)_+^3, 
\end{equation}
with $\nu = c \alpha_N$ and
$c = c(\kappa')/{\sqrt{1-\kappa'}} = 5.3/{\sqrt{8}} \cdot 10^{-4}$.
Again we may use the convexity in $n$ to obtain the corresponding bound 
for $T_Q[\Psi]$.
\end{proof}

% ------------------  HOMOGENEOUS GAS  --------------------

\section{Application to the homogeneous anyon gas}\label{sec:gas}

Let us finally consider the homogeneous gas in the thermodynamic limit, i.e.\ $N$
particles confined to a large box (square) $Q_0 \subseteq \R^2$, 
where we shall take simultaneously $N \to \infty$ and $|Q_0| \to \infty$ 
while keeping the density $\bar\varrho := N/|Q_0|$ fixed. 
The only dimensionless parameters are then the magnetic interaction strength
$\alpha \in \R$ and the relative interaction length scale
(magnetic filling ratio) $\bar\gamma := R\bar\varrho^\frac{1}{2}$, 
also held fixed, so that in the limit the ground-state energy, 
\begin{equation} \label{eq:gsEnergy}
	E_0(N, Q_0, \alpha, R) 
	:= \inf \Bigl\{ \langle\Psi, \hat{T}_\alpha \Psi\rangle \,:\, 
		\Psi \in \domD{\alpha, R} \cap C^\infty_c(Q_0^N) \,,\, \|\Psi\|_2 = 1 \Bigr\}, 
\end{equation}
per particle must for dimensional reasons be given by
\begin{equation} \label{eq:ThermodynLim}
	\frac{E_0(N, Q_0, \alpha, R)}{N} \to e(\alpha, \bar\gamma) \bar\varrho, 
\end{equation}
where $e(\alpha, \bar\gamma) \ge 0$ is dimensionless.
We have that $e(0, \bar\gamma) = 0$ for all $\bar\gamma \ge 0$, 
corresponding to non-interacting bosons, 
and $e(1, 0) = 2\pi$ for ideal fermions in two dimensions
due to the Weyl asymptotics for the Laplacian eigenvalues.
We also have a reflection-conjugation symmetry 
$e(-\alpha, \bar\gamma) = e(\alpha, \bar\gamma)$ for all $\alpha, \bar\gamma$.
Furthermore, in the dilute limit we should 
see a periodicity in the entire spectrum with respect to 
any shift in $\alpha$ by an even integer, and in particular
\begin{equation} \label{eq:ThermoPeriodicity}
	e(\alpha+2n, 0) = e(\alpha, 0) 
	\qquad \forall \ \alpha \in \R, \ n \in \Z, 
\end{equation}
due to the gauge equivalence~\eqref{eq:gauge-equivalence}.
On the other hand, average-field theory~\eqref{eq:average-field} 
suggests a linear dependence
$e(\alpha, \bar\gamma) = 2\pi|\alpha|$ for arbitrary $\alpha$
and large enough $\bar\gamma$.
Hence there must be some non-trivial interpolation between these 
two regimes of low respectively high density.

Although the existence of the thermodynamic limit~\eqref{eq:ThermodynLim}
might be expected on physical grounds, 
as is indeed the case for bosons and fermions with reasonable scalar 
interactions (see e.g.~\cite{CatBriLio-98, LieSei-09}), 
we are not aware of any proof of it for anyons, 
whose interaction is long-range and magnetic instead of scalar.
Furthermore, there is for anyons also a subtlety in the choice of boundary
conditions, partly since topology plays an important role in the whole problem
and therefore periodic b.c.\ may seem problematic, 
and even in the case of a constant magnetic field we know that
Neumann and Dirichlet b.c.\ differ substantially 
(cf.\ Section~\ref{sec:short-range} and Proposition~\ref{prop:constant-field}).
We shall therefore replace the limit~\eqref{eq:ThermodynLim} with the
$\liminf$ and also stick to Dirichlet b.c.\ (`hard-wall' confined anyons)
in all that follows.

\begin{theorem}[Universal bounds for the homogeneous anyon gas] \label{thm:gas}
	Let $e(\alpha, \bar\gamma)$, where $\bar\gamma=R\bar\varrho^{1/2}$, denote 
	the ground-state energy per particle and unit density of the extended anyon gas
	in the thermodynamic limit at fixed $\alpha \in \R$, $R \ge 0$ and density 
	$\bar\varrho > 0$ where Dirichlet boundary conditions have been imposed, 
	that is
	$$
		e(\alpha, \bar\gamma) 
		:= \liminf_{\substack{N, \, |Q_0| \to \infty \\ \! N/|Q_0| = \bar\varrho}} 
			\frac{E_0(N, Q_0, \alpha, R)}{\bar\varrho N}.
	$$
	Then
	\begin{multline} \label{eq:main-homo-universal}
		e(\alpha, \bar\gamma) \ge C\biggl(
			2\pi\frac{|\alpha| 
				\min\bigl\{2(1-\bar\gamma^2/4)^{-1}, K_\alpha\bigr\} }{ 
				K_\alpha + 2|\alpha|\bigl( -\ln (\bar\gamma/2) \bigr) } 
				\1_{\bar\gamma < 2}
			+ 2\pi|\alpha| \1_{\bar\gamma \ge 2} \\
			+ \pi g\bigl(c\alpha_*, 12\bar\gamma/{\sqrt{2}}\bigr)^2 (1-12\bar\gamma/{\sqrt{2}})_+^3
			\biggr), 
	\end{multline}
	for some universal constant $C > 0$, 
	with $K_\alpha$ given in Lemma~\ref{lem:LocalExclusionSR}, 
	and $c > 0$ in Lemma~\ref{lem:LocalExclusionLR}.
	Furthermore, for any $\alpha \in \R$
	and with $f$ given in Lemma~\ref{lem:LocalExclusionLR}, 
	we have for the ideal gas that
	\begin{equation} \label{eq:main-homo-ideal}
		e(\alpha, 0) \ge \frac{1}{4} f\bigl( (j_{\alpha_*}')^2 \bigr) 
		= \frac{1}{2} 2\pi \alpha_* \bigl(1 - O(\alpha_*^{1/3})\bigr).
	\end{equation}
	Moreover, for any fixed $\alpha \in \R \setminus \{0\}$ 
	we obtain in the dilute limit that
	\begin{equation} \label{eq:main-homo-limits}
		\liminf_{\bar\gamma \to 0} \frac{e(\alpha, \bar\gamma)}{2\pi |{\ln \bar\gamma}|^{-1}} \ge 1, 
		\qquad \text{and} \qquad
		\liminf_{\bar\gamma \to 0} e(\alpha, \bar\gamma) 
		\ge \frac{\pi}{81} (j_{c\alpha_*}')^2
		\ge \frac{c}{81} 2\pi \alpha_*, 
	\end{equation}
	while if\/ $\bar\gamma>0$ is arbitrary but fixed, and
	\begin{equation} \label{eq:soft-alpha-bound}
	     |\alpha| \le \eps^5 \min\bigl\{ \bar\gamma^2, \eps^3\bar\gamma^{-4} \bigr\}, 
	    \quad 0 < \eps < \sqrt{\pi}/8, 
	\end{equation}
	then
	\begin{equation} \label{eq:main-homo-soft}  
	   e(\alpha, \bar\gamma) \ge 2\pi|\alpha| (1 - O(\eps)).
	\end{equation}
\end{theorem}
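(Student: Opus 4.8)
To prove Theorem~\ref{thm:gas} the plan is to run the Dyson--Lenard local method set up in~\eqref{eq:T-split}--\eqref{eq:localEnergyT}: split the kinetic energy with a weight vector $\bkappa$ into its diamagnetic, short-range and long-range parts, tile the confining box $Q_0$ by congruent subsquares $Q$, bound the local energy on each tile by one of the exclusion principles of Section~\ref{sec:exclusion}, and sum, optimizing over the tile size and the weights. There are two mechanisms. First, a lemma of the form $E^\bkappa_n(|Q|)\ge\tfrac{e_1}{|Q|}(n-1)_+$, with $e_1$ a dimensionless constant depending on $\alpha$ and $\gamma(Q)$, gives $T^\bkappa_Q[\Psi]\ge\tfrac{e_1}{|Q|}\bigl(\int_Q\varrho_\Psi-1\bigr)_+$ by convexity of $(x-1)_+$, while $\sum_Q\bigl(\int_Q\varrho_\Psi-1\bigr)_+\ge N-M$ with $M\approx|Q_0|/|Q|$; passing to the $\liminf$ yields $e(\alpha,\bar\gamma)\ge\tfrac{e_1}{s}\bigl(1-s^{-1}\bigr)$ with $s:=\bar\varrho|Q|$ and $\gamma(Q)=\bar\gamma/\sqrt s$, maximized at $s=2$ (value $e_1/4$) when $e_1$ is $s$-independent. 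Second, a Temple-type bound $E^\bkappa_n(|Q|)\ge\tfrac{e_1}{|Q|}\,n(n-1)$ gives, after \emph{two} uses of convexity in the particle number, $e(\alpha,\bar\gamma)\ge e_1\bigl(1-s^{-1}\bigr)$, which can be driven to $e_1$ by letting $s\to\infty$; this is the route needed to capture the full average-field value.

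For~\eqref{eq:main-homo-universal} I would fix $\bkappa=(\tfrac13,\tfrac13,\tfrac13)$, so that $E^\bkappa_n\ge\tfrac13E^{(1,1,0)}_n+\tfrac13E^{(0,0,1)}_n$. When $\bar\gamma<\sqrt2/12$, tiles with $s=2$ (hence $\gamma(Q)=\bar\gamma/\sqrt2$) make both the short-range exclusion of Lemma~\ref{lem:LocalExclusionSR} and the long-range exclusion of Lemma~\ref{lem:LocalExclusionLR} nontrivial, and they reproduce the two active terms of~\eqref{eq:main-homo-universal} up to universal factors; for $\sqrt2/12\le\bar\gamma<2$ only $\eSR$ survives, still with $s=2$; and for $\bar\gamma\ge2$ I would instead shrink the tiles to $|Q|=R^2/2$, so that every pair inside a tile overlaps, and use the bare pair energy $\tfrac{2\pi|\alpha|}{\pi R^2}n(n-1)_+$ together with convexity to get $e(\alpha,\bar\gamma)\ge c\,|\alpha|\bigl(\tfrac12-\bar\gamma^{-2}\bigr)_+\ge c'|\alpha|$, which matches $2\pi|\alpha|\1_{\bar\gamma\ge2}$ up to a constant (both $\eLR$-factors in~\eqref{eq:main-homo-universal} already vanish in this range). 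Throughout one uses that $\alpha_N\downarrow\alpha_*$ together with monotonicity of $g$ in its first argument (and of $f$) to replace $\alpha_N$ by $\alpha_*$, and finally absorbs all numerical factors into $C$. The ideal bound~\eqref{eq:main-homo-ideal} is the case $R=0$, where only $\eLR(\alpha,0)=f\bigl((j'_{\alpha_N})^2\bigr)\ge f\bigl((j'_{\alpha_*})^2\bigr)$ remains; the optimal $s=2$ supplies the factor $1/4$, and the properties $t/6\le f(t)\le2\pi t$, $f(t)=2\pi t\bigl(1-O(t^{1/3})\bigr)$ from Lemma~\ref{lem:LocalExclusionLR} give $e(\alpha,0)\ge\tfrac14f\bigl((j'_{\alpha_*})^2\bigr)=\tfrac12\,2\pi\alpha_*\bigl(1-O(\alpha_*^{1/3})\bigr)$.

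The second estimate in~\eqref{eq:main-homo-limits} follows by sending $\bar\gamma\to0$ in the long-range contribution (or a version of~\eqref{eq:main-homo-universal} optimized for this limit), using $g(c\alpha_*,0)=j'_{c\alpha_*}\ge\sqrt{2c\alpha_*}$ from Theorem~\ref{thm:long-range} and tracking the numerical factors. The first estimate asks for the \emph{sharp} prefactor $1$, which the crude tiling cannot reach; there I would split $\hat{T}_\alpha$ into $\kappa_1$ times the diamagnetic part plus $\kappa_2$ times the short-range part, with $\kappa_1=1-\kappa_2$ and $\kappa_2$ small, apply Lemmas~\ref{lem:diamagnetic} and~\ref{lem:short-range} \emph{globally} (the Dirichlet hypothesis being automatic on $C^\infty_c(Q_0^N)$), recognize the resulting functional as that of a two-dimensional dilute Bose gas with soft-disk interaction $\tfrac{4\kappa_2|\alpha|}{\kappa_1}R^{-2}\1_{B_R(0)}$ of scattering length $a_R$ as in~\eqref{eq:soft-disk-scattering-length}, invoke the Lieb--Yngvason lower bound~\cite{LieYng-01}, $E_0/N\ge\kappa_1\tfrac{4\pi\bar\varrho}{|\ln(\bar\varrho a_R^2)|}\bigl(1-o(1)\bigr)$ as $\bar\varrho a_R^2\to0$, and use that $|\ln(\bar\varrho a_R^2)|=2|\ln\bar\gamma|+c(\alpha,\kappa_2)$ with $c$ independent of $\bar\gamma$; letting first $\bar\gamma\to0$ and then $\kappa_2\to0$ then gives $\liminf_{\bar\gamma\to0}e(\alpha,\bar\gamma)|\ln\bar\gamma|/(2\pi)\ge1$. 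Finally, for~\eqref{eq:main-homo-soft} I would invoke Lemma~\ref{lem:TempleSmearBound}: take tiles with $\gamma(Q)=O(\eps)$ and $\bar n:=\bar\varrho|Q|\gg1$, and Temple parameter $\kappa=O(\eps)$; the two constraints $|\alpha|\le\eps^5\bar\gamma^2$ and $|\alpha|\le\eps^8\bar\gamma^{-4}$ packaged in~\eqref{eq:soft-alpha-bound} are precisely what makes the margin $(1-2\gamma(Q))_+^2$, the loss $1-\kappa$, the Temple correction $\tfrac{2|\alpha|\gamma(Q)^{-2}\bar n(\bar n-1)}{\pi^2\kappa/(1-\kappa)-2\pi|\alpha|\bar n(\bar n-1)}$ and the deficit $1-\bar n^{-1}$ each equal to $1-O(\eps)$ (resp.\ $O(\eps)$); two applications of convexity in $n$ --- taking the convex hull of the pointwise lower bound and supplementing the Temple estimate, valid only for moderate $n$, by a forced-overlap estimate $\gtrsim|\alpha|n^2/|Q|$ on overpopulated tiles --- then give $e(\alpha,\bar\gamma)\ge2\pi|\alpha|\bigl(1-O(\eps)\bigr)$.

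I expect the main obstacle to be organizing the regimes and constants rather than any single hard estimate. The long-range and the diluted short-range exclusion principles are only \emph{linear} in the particle number, so they bite hardest on tiles holding $O(1)$ particles, which unavoidably costs the factor $1/2$ visible in~\eqref{eq:main-homo-ideal}; recovering the genuine average-field value $2\pi|\alpha|$ in~\eqref{eq:main-homo-soft} and the sharp constant $1$ in~\eqref{eq:main-homo-limits} therefore forces one to leave the plain tiling --- passing to the quadratic Temple bound on large tiles in the former case and to the full two-dimensional dilute Bose gas theorem in the latter --- and the delicate point is to verify that, in each such regime, the several competing error terms can be made simultaneously small by one admissible choice of box size and weights.
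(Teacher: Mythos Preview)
Your plan for~\eqref{eq:main-homo-universal}, \eqref{eq:main-homo-ideal} and~\eqref{eq:main-homo-limits} matches the paper's proof essentially line by line: the same weight split (the paper's $\kappa=2/3$ gives exactly your $\bkappa=(\tfrac13,\tfrac13,\tfrac13)$), the same tile size $s=\bar\varrho|Q|=2$ (the paper's $\mu=1/\sqrt2$), the same shrinking of tiles to $\gamma(Q)=\sqrt2$ for $\bar\gamma\ge2$, and the same global appeal to Lieb--Yngvason for the sharp dilute constant.

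For the soft-core bound~\eqref{eq:main-homo-soft}, however, your route diverges from the paper's and has a genuine gap. The ``forced-overlap estimate $\gtrsim|\alpha|n^2/|Q|$ on overpopulated tiles'' is not available in that strength. When $\gamma(Q)=O(\eps)\ll1$, configurations of $n$ particles in $Q$ can avoid all overlaps as long as $n\lesssim\gamma(Q)^{-2}$; and even for $n\gg\gamma(Q)^{-2}$ a pigeonhole/packing argument only yields $E_n^{(\kappa,1-\kappa,0)}(|Q|)\gtrsim |\alpha|\gamma(Q)^2 n^2/|Q|$, which is too weak by a factor $\gamma(Q)^2=O(\eps^2)$ to match the Temple bound at its upper end. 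With only that, the convex hull of the two pieces evaluated at $\bar n$ does not recover $2\pi|\alpha|\bar n(\bar n-1)/|Q|$ with an $O(\eps)$ error.

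The device the paper uses instead is \emph{superadditivity} of the local energy in the particle number, $E_{n+n'}\ge E_n+E_{n'}$ (immediate for a nonnegative pair potential with Neumann boundary conditions). Fixing $p:=\lfloor 4\bar\varrho\ell^2\rfloor$, this gives $E_n\ge\frac{n}{2p}E_p$ for all $n\ge p$, so the large-$n$ tail is controlled by the \emph{same} Temple estimate applied once at $n=p$; for $n<p$ one uses Temple directly together with the monotonicity in $n$ of the Temple error factor $K(n,\ell)$. One then minimizes $\sum_{n<p}n(n-1)c_n+\tfrac12\sum_{n\ge p}n(p-1)c_n$ over the occupation distribution $(c_n)$ subject to $\sum_n c_n=1$, $\sum_n nc_n=\bar\varrho\ell^2$, and with $p\ge4\bar\varrho\ell^2-1$ the minimum is $\bar\varrho\ell^2(\bar\varrho\ell^2-1)$. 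The paper's choice $\ell=(\eps\bar\varrho)^{-1/2}\max\{1,\eps^{-1/2}\bar\gamma\}$ and $\kappa=\eps$, together with the hypothesis~\eqref{eq:soft-alpha-bound}, then makes every error term in Lemma~\ref{lem:TempleSmearBound} of size $O(\eps)$. Replacing your forced-overlap/convex-hull step by this superadditivity-plus-minimization argument closes the gap.
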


Note that for the short-range part of the interaction, 
one can view the height of the potential compared to the average density
as a dimensionless interaction strength, and that in the dilute 
limit~\eqref{eq:main-homo-limits} with fixed $\alpha > 0$ we have that
$$
	\frac{\alpha}{R^2} / \bar\varrho = \alpha\bar\gamma^{-2} \to \infty, 
$$
corresponding to a hard-core interaction.
On the other hand, under the conditions in~\eqref{eq:soft-alpha-bound}, 
$$
	\frac{\alpha}{R^2} / \bar\varrho = \alpha\bar\gamma^{-2} 
	\le \eps^5 \ll 1, 
$$
and thus corresponding to a very weak soft-core interaction rather than a hard-core one
in this regime. 

We also note that the average-field description 
with its linear dependence on $\alpha$
has indeed been proved to be correct 
for the trapped anyon gas in a certain
almost-bosonic regime;
see~\cite{LunRou-15}.
In the present context this corresponds 
to taking $Q_0$ fixed, 
$\alpha \sim \beta/N$ and $R \sim N^{-\eta}$ with $0 < \eta < 1/4$, 
in which case we have that $\bar\gamma \sim N^{1/2-\eta} \to \infty$ and
$\alpha\bar\gamma^{-2} \sim N^{2\eta-2} \to 0$ as $N \to \infty$,
i.e.\ a combined high-density and weak soft-core limit.
However, the sense in which 
average-field theory then holds 
is that all the anyons become 
identically distributed subject to a self-consistent magnetic field, 
and it should
be remarked that the constant $2\pi$ that is predicted by the usual
(constant-field)
average-field approximation and which appears above
does not take such self-interactions fully
into account and may ultimately be replaced by a larger  
effective constant,
at least in a particular limit~\cite{CorLunRou-16}.

\begin{proof}[Proof of Theorem~\ref{thm:gas}]
Let us begin with the universal bound \eqref{eq:main-homo-universal} 
for all $\alpha, \bar\gamma$.
We have a sequence of $N \ge 1$ and squares $Q_0 \subseteq \R^2$ 
with $N/|Q_0| = \bar\varrho$, 
and consider in each case an arbitrary function $\Psi \in \domD{\alpha, R}$ 
supported on $Q_0^N$.
Let us again write
\begin{align} \label{eq:T-split-homo}
	T[\Psi] &:= \langle \Psi, \hat{T}_\alpha \Psi \rangle
	= \kappa_1 \sum_{j=1}^N \int_{\R^{2N}} |D_j\Psi|^2\, d\sx
	+ \kappa_2 \sum_{j=1}^N \int_{\R^{2N}} |D_j\Psi|^2\, d\sx
	+ \kappa_3 \sum_{j=1}^N \int_{\R^{2N}} |D_j\Psi|^2\, d\sx \\
	&\ge \int_{\R^{2N}} \sum_{j=1}^N \biggl(
		\kappa_1 \bigl|\nabla_j|\Psi|\bigr|^2
		+ \kappa_2 \sum_{k \neq j} 2\pi|\alpha| 
		\frac{\1_{B_R(0)}}{\pi R^2}(\bx_j - \bx_k) \, |\Psi|^2
		+ \kappa_3 |D_j\Psi|^2
		\biggr) d\sx.
\end{align}
Take $\kappa_1 = \kappa_2 = \kappa/2$ and $\kappa_3 = 1-\kappa$, 
and a partition of $Q_0$ into $M^2$ squares $Q$ of equal size. 
Then, by the local exclusion principles of
Lemma~\ref{lem:LocalExclusionSR} and Lemma~\ref{lem:LocalExclusionLR}, 
\begin{align} \label{eq:T-bound-homo}
	N^{-1} & T[\Psi] 
	\ge N^{-1} \sum_Q T_Q^{(\kappa/2, \kappa/2, 1-\kappa)}[\Psi] \\
	&\ge N^{-1} \sum_Q |Q|^{-1} \Bigl(
		\frac{\kappa}{2} \eSR\bigl(\alpha, \gamma(Q), {\textstyle \int_Q \varrho_\Psi}\bigr)
		+ (1-\kappa) \eLR\bigl(\alpha, \gamma(Q)\bigr) \Bigr)
		\biggl( \int_Q\varrho_\Psi \ -1 \biggr)_{\!\!+} \\
	&\ge 
		N^{-1}|Q_0|^{-1} M^2 \sum_Q
		\biggl( \int_Q\varrho_\Psi \ -1 \biggr)_{\!\!+} \times \\
		&\quad\times
		\left\{\begin{array}{ll}
			\frac{\kappa}{2} |\alpha| 
			\min\bigl\{(1-\gamma(Q)^2/2)^{-1}, K_\alpha/2\bigr\}
			\bigl( 
				K_\alpha + 2|\alpha|\bigl( -\ln (\gamma(Q)/
			&\hspace{-10pt}
				{\sqrt{2}}) \bigr)
			\bigr)^{-1} \\[8pt]
			\qquad +\ (1-\kappa) \frac{\pi}{24} g\bigl(c\alpha_N, 12\gamma(Q)\bigr)^2 (1-12\gamma(Q))_+^3, 
			&\hspace{10pt}\text{for}\ \gamma(Q) < \sqrt{2} \\[9pt]
			\kappa |\alpha|\gamma(Q)^{-2} \int_Q \varrho_\Psi, 
			&\hspace{10pt}\text{for}\ \gamma(Q) \ge \sqrt{2}.
		\end{array}\right.
\end{align}
Note that $\gamma(Q) = \bar\gamma MN^{-1/2}$ and we are free to choose
$\kappa \in [0, 1]$ and the integer $M \ge 1$ as we like.
We choose $M := \mu N^{1/2}$ for suitable $\mu > 0$, so that
$\gamma(Q) = \mu\bar\gamma$.
Then for $\mu < \min\{ \sqrt{2}/\bar\gamma, 1 \}$ we have, 
using $\sum_Q (\int_Q \varrho_\Psi \ - 1)_+ \ge (N - M^2)_+$, that
\begin{align} \label{eq:T-bound-homo-smallmu}
	N^{-1} T[\Psi] 
	\ge \bar\varrho \mu^2 (1 - \mu^2)_+ \biggl(
	&		\frac{\kappa}{2} |\alpha| \frac{
				\min\bigl\{(1-\mu^2\bar\gamma^2/2)_+^{-1}, K_\alpha/2\bigr\} }{
				K_\alpha + 2|\alpha|\bigl( -\ln (\mu\bar\gamma/{\sqrt{2}}) \bigr)
			} \\
	&		+ (1-\kappa) \frac{\pi}{24} g(c\alpha_N, 12\mu\bar\gamma)^2 (1-12\mu\bar\gamma)_+^3
		\biggr).
\end{align}
On the other hand for $\sqrt{2}/\bar\gamma \le \mu \le 1$, 
we may use 
$$
	\frac{1}{M^2} \sum_Q \int_Q \varrho_\Psi \biggl(\int_Q \varrho_\Psi \ - 1\biggr)_{\!\!+} 
	\ge \frac{N}{M^2} \biggl(\frac{N}{M^2} - 1\biggr)_{\!\!+}, 
$$
which follows from convexity, to obtain that
\begin{align} \label{eq:T-bound-homo-bigmu}
	N^{-1} T[\Psi] 
	\ge \kappa|\alpha|\bar\varrho \bar\gamma^{-2} (\mu^{-2} - 1)_+.
\end{align}
Hence, in the case $\bar\gamma \ge 2 > \sqrt{2}$ we can in 
the thermodynamic limit choose $\kappa = 1$ and $\mu = \sqrt{2}/\bar\gamma$ 
in order to obtain that
\begin{equation} \label{eq:T-bound-homo-bigmu-final}
	e(\alpha, \bar\gamma)
	\ge \frac{1}{2}|\alpha| (1-2/\bar\gamma^2) 
	\ge \frac{1}{4}|\alpha|, 
\end{equation}
while for $\bar\gamma < 2$ we choose, for simplicity, 
$\kappa = 2/3$ and $\mu = 1/{\sqrt{2}}$ 
obtaining that
\begin{equation} \label{eq:T-bound-homo-smallmu-final}
	e(\alpha, \bar\gamma) \ge 
	\frac{1}{288} \biggl(
		12 |\alpha| \frac{
			\min\bigl\{2(1-\bar\gamma^2/4)_+^{-1}, K_\alpha\bigr\} }{
			K_\alpha + 2|\alpha|( -\ln (\bar\gamma/2) )
		}
		+ \pi g(c\alpha_N, 12\bar\gamma/{\sqrt{2}})^2 (1-12\bar\gamma/{\sqrt{2}})_+^3
		\biggr).
\end{equation}
This proves the first part of the theorem with $C = 1/288$. 

In the ideal case $R=0$, and hence $\bar\gamma = 0$, we take $\kappa=0$
and $M \sim \sqrt{N/2}$ in~\eqref{eq:T-bound-homo} 
(which means approximately 2 particles in each box)
to obtain \eqref{eq:main-homo-ideal} from \eqref{eq:eLRideal} of 
Lemma~\ref{lem:LocalExclusionLR}.

\medskip

The second bound in \eqref{eq:main-homo-limits} follows immediately from
\eqref{eq:T-bound-homo-smallmu} and the properties of $g$, by 
setting $\kappa=0$ and $\mu=1/{\sqrt{2}}$.
For the first bound we set
$\kappa_1 = 1-\kappa$, $\kappa_2 = \kappa$ and $\kappa_3=0$
in~\eqref{eq:T-split-homo} and use the result~\cite{LieYng-01}
of Lieb and Yngvason for the dilute repulsive Bose gas in two dimensions.
We find for the (bosonic, and therefore positive; see~\cite[Corollary~3.1]{LieSei-09}) 
ground state $\Psi_0$ of this expression, 
with fixed $\kappa \in (0, 1)$ and $\alpha>0$, that
\begin{align} 
	\frac{T[\Psi]}{N\bar\varrho} &\ge 
	\frac{1-\kappa}{N\bar\varrho} \int_{\R^{2N}} \biggl(
		\sum_{j=1}^N \bigl|\nabla_j\Psi_0\bigr|^2
		+ \sum_{j < k} W(\bx_j - \bx_k) \, |\Psi_0|^2
		\biggr) d\sx \\
	&= \frac{4\pi(1-\kappa)}{|{\ln a_R^2\bar\varrho}|} \bigl( 1+O\bigl(|{\ln a_R^2\bar\varrho}|^{-1/5}\bigr) \bigr)
	= \frac{2\pi(1-\kappa)}{K'_{\alpha, \kappa} - \ln\bar\gamma} 
		\bigl( 1+O\bigl((K'_{\alpha, \kappa} - \ln\bar\gamma)^{-1/5}\bigr) \bigr),  
\end{align}
where we used that the pair potential
$$
	W(\bx) := \frac{W_0}{R^2} \1_{B_R(0)}(\bx), 
	\qquad W_0 = 4\alpha \kappa/(1-\kappa), 
$$
has scattering length (cf.\ \eqref{eq:soft-disk-scattering-length})
$$
	a_R = R\exp\biggl( -\frac{1}{\sqrt{W_0/2}} \frac{I_0(\sqrt{W_0/2})}{I_1(\sqrt{W_0/2})} \biggr)
	= R\exp( -K'_{\alpha, \kappa}), 
$$
with
$$
	K'_{\alpha, \kappa}
	:= \frac{1}{\sqrt{2\alpha\kappa/(1-\kappa)}} \frac{I_0(\sqrt{2\alpha\kappa/(1-\kappa)})}{I_1(\sqrt{2\alpha\kappa/(1-\kappa)})}
	= \frac{ K_{\alpha\kappa/(1-\kappa)} }{2\alpha\kappa/(1-\kappa)}.
$$
Hence for any $\alpha>0$ and $0<\eps \ll 1$ we, by setting $\kappa = \eps$
and then taking the limit $\bar\gamma \to 0$, obtain that
$$
	\frac{|{\ln \bar\gamma}|}{2\pi} e(\alpha, \bar\gamma) 
	\ge (1-\eps) \bigl( 1 + K'_{\alpha, \eps} |{\ln \bar\gamma}|^{-1} \bigr)^{-1} 
		\bigl( 1 + O\bigl( (K'_{\alpha, \eps} + |{\ln \bar\gamma}|)^{-1/5} \bigr) \bigr)
	\to 1-\eps.
$$
So for each fixed $\alpha \in \R \setminus \{0\}$
$$
	\liminf_{\bar\gamma \to 0} \frac{e(\alpha, \bar\gamma)}{2\pi |{\ln \bar\gamma}|^{-1}} \ge 1.
$$

\medskip

To obtain the bound \eqref{eq:main-homo-soft} for the soft-core regime we 
follow~\cite{LieYng-98, LieYng-00, LieYng-01, LieSeiSolYng-05}. 
Again we partition $Q_0$ into $M^2$ squares $Q$ of equal size, 
and let $\ell = |Q|^{1/2}$. With $\kappa\in [0, 1]$ we then have that
\begin{align}
	N^{-1}T[\Psi] 
		&\ge 
	N^{-1}\sum_Q T_Q^{(\kappa, 1-\kappa, 0)}[\Psi]
		\ge
	N^{-1}\sum_Q\sum_{n\ge0} E_n^{(\kappa, 1-\kappa, 0)}(|Q|)p_n(\Psi;Q).
\end{align}
Set $c_n = \sum_Q p_n(\Psi;Q) |Q|/|Q_0|$, i.e.\ $c_n$ is the fraction of cells $Q$ containing precisely $n$ particles, then
\begin{equation}
	\sum_{n\ge 0} c_n =1 \quad \mbox{and} \quad \sum_{n\ge 0} c_n n = \bar\varrho \ell^2.
\end{equation}
Rearranging the sum and from now on suppressing the weight 
$\bkappa = (\kappa, 1-\kappa, 0)$ we find that
\begin{align}\label{eq:soft_core_hom_sum}
	N^{-1}T[\Psi] 
		&\ge 
	\frac{1}{\bar\varrho\ell^2} \sum_{n\ge 0}E_n(|Q|) c_n, 
\end{align}
which is precisely the starting point of the argument in~\cite{LieYng-98, LieYng-00, LieYng-01}.

Fix $p\in \NN$. Since the energy is superadditive, $E_{n+n'} \ge E_{n}+E_{n'}$, we for all $n\ge p$ have that 
\begin{equation}
	E_n(|Q|) \ge \lfloor n/p\rfloor E_p(|Q|) \ge \frac{n}{2p}E_p(|Q|).
\end{equation}
Applying Lemma~\ref{lem:TempleSmearBound} yields
\begin{equation}
	E_n(|Q|)\ge \pi |\alpha|\frac{n(p-1)}{\ell^2}K(p, \ell), 
\end{equation}
where 
\begin{equation}
	K(n, \ell) := (1-\kappa)\biggl(1-\frac{2R}{\ell}\biggr)^{\!2}_{\!\!+} 
	\biggl(1- \frac{2|\alpha| \ell^2 R^{-2} n(n-1)}{\pi^2 \kappa/(1-\kappa)-2\pi |\alpha| n(n-1)}\biggr)_{\!\!+}, 
\end{equation}
if the expression in the last denominator is positive and $K(n, \ell):=0$
otherwise.

If instead $n< p$ we use that $K(n, \ell)$ is decreasing in $n$ to find
\begin{equation}
	E_n(|Q|)\ge 2\pi |\alpha|\frac{n(n-1)}{\ell^2}K(p, \ell).
\end{equation}

Splitting the sum~\eqref{eq:soft_core_hom_sum} into two we thus find that
\begin{align}
	\sum_{n\ge 0}E_n(|Q|)c_n 
		&= 
	\sum_{n<p} E_n(|Q|)c_n + \sum_{n\ge p} E_n(|Q|)c_n\\
		&\ge
	\frac{2\pi |\alpha|}{\ell^2}K(p, \ell)\biggl( \sum_{n<p}n(n-1)c_n + \frac{1}{2}\sum_{n\ge p}n(p-1)c_n \biggr).
\end{align}
We wish to minimize
\begin{equation}\label{eq:soft_core_sum_minimize}
	\sum_{n<p}n(n-1)c_n + \frac{1}{2}\sum_{n\ge p}n(p-1)c_n.
\end{equation}
Set
\begin{equation}
	k:=\bar\varrho\ell^2 \quad \mbox{and} \quad t:= \sum_{n<p} c_n n\le k, 
\end{equation}
by convexity~\eqref{eq:soft_core_sum_minimize} is then larger than
\begin{equation}
	t(t-1)+\frac12(k-t)(p-1).
\end{equation}
If $p\ge 4k-1$ and $t\le k$ this is minimized at $t=k$, where it is equal to $k(k-1)$. 
Thus by choosing $p=\lfloor 4\bar\varrho\ell^2 \rfloor$ we have shown that
\begin{equation}
	N^{-1}T[\Psi] 
	\ge 
	2\pi |\alpha|\bar\varrho\Bigl(1-\frac{1}{\bar\varrho\ell^2}\Bigr)_{\!+} K(4\bar\varrho\ell^2, \ell), 
\end{equation}
and hence, upon taking the thermodynamic limit $N, |Q_0| \to \infty$ 
with all the other parameters kept fixed, 
\begin{equation}\label{eq:soft_core_bound_ell}
	e(\alpha, \bar\gamma) \ge 2\pi|\alpha| (1-\kappa)
		\Bigl(1-\frac{1}{\bar\varrho\ell^2}\Bigr)_{\!+}
		\Bigl(1-2\frac{\bar\gamma}{\bar\varrho^{1/2}\ell}\Bigr)_{\!+}^2
		\biggl(1-\frac{32|\alpha|\bar\gamma^{-2} \bar\varrho^3\ell^6}{\pi^2\kappa/(1-\kappa) - 32\pi|\alpha|\bar\varrho^2\ell^4}\biggr)_{\!\!+}, 
\end{equation}
as long as $32|\alpha|\bar\varrho^2\ell^4 < \pi\kappa/(1-\kappa)$.

Given $\eps>0$, let us choose $\kappa = \eps$  
and also demand that $(\bar\varrho\ell^2)^{-1} \le \eps$, 
$\bar\gamma(\bar\varrho^{1/2}\ell)^{-1} \le \eps$, 
$|\alpha|\bar\gamma^{-2} \bar\varrho^3\ell^6 \le \eps^2$, 
and $|\alpha|\bar\varrho^2\ell^4 \le \eps\pi/64$.
We therefore choose
$$
	\ell = (\eps\bar\varrho)^{-1/2} \max\bigl\{1, \eps^{-1/2}\bar\gamma\bigr\}
$$
and then find that, together with the requirement
\eqref{eq:soft-alpha-bound} on $\alpha$ and $\eps$
which implies $|\alpha|\bar\varrho^2\ell^4 \le \eps^3 < \eps\pi/64$, 
all conditions above are satisfied, and the error terms in
\eqref{eq:soft_core_bound_ell} are of order $\eps$ or higher.
\end{proof}

% ------------------  APPENDIX  --------------------
\FloatBarrier

\appendix
\section{Some properties of Bessel functions} \label{app:bessel}

\begin{proposition}\label{prop:BesselPrimeZeroBounds}
	For $\nu>0$ we let $j_\nu'$ denote the first positive zero of the derivative 
	of the Bessel function $J_\nu$. Then we have that
	\begin{equation}
		\sqrt{2\nu} \le j_\nu' \le \sqrt{2\nu(1+\nu)}.
	\end{equation}
\end{proposition}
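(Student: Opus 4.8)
The plan is to read both bounds off the variational characterization of $(j_\nu')^2$. For $\nu>0$ the Bessel function $J_\nu$ vanishes at the origin, increases through a first positive maximum at $j_\nu'$, and only afterwards reaches its first zero $j_\nu>j_\nu'$; hence $v_*(r):=J_\nu(j_\nu'r)$ is strictly positive on $(0,1]$, and a direct computation from the Bessel equation shows that $v_*$ solves $-(rv_*')'+\nu^2 v_*/r=(j_\nu')^2\,r v_*$ with $v_*'(1)=0$ and $v_*=O(r^\nu)$ near $0$, so it is the ground state of the Bessel operator $-u''-u'/r+\nu^2u/r^2$ on $[0,1]$ with Neumann condition at $r=1$ — this is exactly $g(\nu,0)^2$ of Theorem~\ref{thm:long-range}. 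By Rayleigh--Ritz,
$$
	(j_\nu')^2=\min_{v\neq 0}\ \frac{\int_0^1\bigl(|v'|^2+\nu^2 v^2/r^2\bigr)\,r\,dr}{\int_0^1 v^2\,r\,dr},
$$
the minimum over the natural form domain (for $\nu>0$ finiteness of $\int_0^1 v^2 r^{-1}\,dr$ forces the boundary value $v(0)=0$). The upper bound is then immediate by the trial function $v(r)=r^\nu$: there $|v'|^2+\nu^2 v^2/r^2=2\nu^2 r^{2\nu-2}$, so the numerator equals $2\nu^2\int_0^1 r^{2\nu-1}\,dr=\nu$ and the denominator equals $\int_0^1 r^{2\nu+1}\,dr=\tfrac{1}{2(\nu+1)}$, whence the Rayleigh quotient is $2\nu(\nu+1)$ and $(j_\nu')^2\le 2\nu(1+\nu)$.

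For the lower bound I would establish the Hardy-type inequality
$$
	\int_0^1\bigl(|v'|^2+\nu^2 v^2/r^2\bigr)\,r\,dr\ \ge\ 2\nu\int_0^1 v^2\,r\,dr
$$
for all admissible $v$ (for the lower bound alone it suffices to check it for $v=v_*$ and combine with the Rayleigh identity above). The crucial move is the substitution $v(r)=r^\nu w(r)$: differentiation gives the pointwise identity $\bigl(|v'|^2+\nu^2 v^2/r^2\bigr) r=\nu\,\tfrac{d}{dr}\bigl(r^{2\nu}w^2\bigr)+r^{2\nu+1}|w'|^2$, and integrating over $[0,1]$ (the boundary term at $0$ vanishing since $\nu>0$ and $w$ is bounded there) yields
$$
	\int_0^1\bigl(|v'|^2+\nu^2 v^2/r^2\bigr)\,r\,dr=\nu\,w(1)^2+\int_0^1 r^{2\nu+1}|w'|^2\,dr,\qquad \int_0^1 v^2\,r\,dr=\int_0^1 r^{2\nu+1}w^2\,dr.
$$
It then remains to prove the elementary weighted estimate $2\nu\int_0^1 r^{2\nu+1}w^2\,dr\le \nu\,w(1)^2+\int_0^1 r^{2\nu+1}|w'|^2\,dr$. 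For this I would integrate $\tfrac{d}{dr}\bigl(r^{2\nu+2}w^2\bigr)$ over $[0,1]$ to get $(2\nu+2)\int_0^1 r^{2\nu+1}w^2\,dr=w(1)^2-2\int_0^1 r^{2\nu+2}ww'\,dr$, bound the last integral via Young's inequality with weight parameter $2\nu$ (writing $r^{2\nu+2}=r^{\nu+1/2}\cdot r^{\nu+3/2}$ and using $r^{2\nu+3}\le r^{2\nu+1}$ on $[0,1]$) to obtain $-2\int_0^1 r^{2\nu+2}ww'\,dr\le 2\nu\int_0^1 r^{2\nu+1}w^2\,dr+\tfrac{1}{2\nu}\int_0^1 r^{2\nu+1}|w'|^2\,dr$, subtract, and multiply by $\nu$. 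This leaves the Rayleigh quotient $\ge 2\nu$, i.e.\ $(j_\nu')^2\ge 2\nu$.

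The only points requiring attention rather than new ideas are bookkeeping ones: identifying $(j_\nu')^2$ as the bottom of the spectrum — which rests on the classical ordering $0<j_\nu'<j_\nu$ and on the strict positivity of the form for $\nu>0$, so that the ground-state energy is positive and attained at $J_\nu(j_\nu' r)$ — and justifying the integrations by parts, cleanest to carry out for $v=v_*$ and for smooth $w$ bounded near $0$ and then, for the upper bound, to extend to the form domain by density. I expect the genuine content to be the substitution $v=r^\nu w$ together with the Young step above, which is exactly what produces the uniform constant $2\nu$; everything else is routine.
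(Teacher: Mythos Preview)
Your proposal is correct. The upper bound argument is identical to the paper's (trial function $v(r)=r^\nu$). For the lower bound, however, you take a genuinely different route: you perform the ground-state substitution $v=r^\nu w$, obtain the exact identity $\int_0^1(|v'|^2+\nu^2 v^2/r^2)\,r\,dr=\nu w(1)^2+\int_0^1 r^{2\nu+1}|w'|^2\,dr$, and then prove the required weighted bound for $w$ by integrating $\frac{d}{dr}(r^{2\nu+2}w^2)$ and applying Young's inequality (your computation in fact yields the slightly stronger $2\nu\int_0^1 r^{2\nu+1}w^2\,dr\le \nu w(1)^2+\tfrac12\int_0^1 r^{2\nu+1}|w'|^2\,dr$). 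The paper instead writes $|u(t)|^2=2\Re\int_0^t\bar u\,u'\,dr$, applies Cauchy--Schwarz and Young directly to get the pointwise estimate $|u(t)|^2\le \nu^{-1}\int_0^1(|u'|^2+\nu^2 r^{-2}|u|^2)\,r\,dr$, and then integrates against $t\,dt$. Both approaches are short and elementary; yours is the classical ground-state-representation trick and makes the role of the extremal profile $r^\nu$ transparent, while the paper's avoids any substitution and works directly with $u$.
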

A proof of the above proposition and much more refined bounds for the zeros of Bessel 
functions and their derivatives can be found in~\cite{IsmMul-95}. 
For completeness we provide an elementary proof which covers our needs.

\begin{proof}
By a standard variational argument it can be shown that
\begin{equation}
	\inf_{u}\frac{\int_0^1 \bigl(|u'|^2+\nu^2r^{-2}|u|^2\bigr)r\, dr}{\int_0^1 |u|^2r\, dr}=(j_\nu')^2,
\end{equation}
where the infimum is taken over all $u\in W^{1, 2}([0, 1], rdr)$ and is attained by $u(r)=J_\nu (j_\nu' r)$.

For $\nu>0$ and $u\in W^{1, 2}([0, 1], rdr)$ with $u(0)=0$ we obtain using H\"older's inequality that
\begin{align}
	|u(t)|^2 
	&= 
	2 \Re\biggl[ \int_0^t \bar u (r) u'(r) \, dr\biggr]\\
	&\le 
	2 \biggl(\int_0^t|u'(r)|^2r\, dr \biggr)^{1/2}\biggl(\int_0^t|u(r)|^2r^{-1}\, dr\biggr)^{1/2}\\
	&= 
	\frac{2}{\nu} \biggl(\int_0^t|u'(r)|^2r\, dr \biggr)^{1/2}\biggl(\nu^2 \int_0^t|u(r)|^2r^{-1}\, dr\biggr)^{1/2}.
\end{align}
Through an application of Young's inequality we then find
\begin{align}
	|u(t)|^2 
	\le
	\frac{1}{\nu} \int_0^t \Bigl( |u'(r)|^2+\frac{\nu^2}{r^2}|u(r)|^2\Bigr)r\, dr
	\le
	\frac{1}{\nu}\int_0^{1} \Bigl( |u'(r)|^2+\frac{\nu^2}{r^2}|u(r)|^2\Bigr)r\, dr,
\end{align}
and integrating both sides in $t$ over $(0, 1)$ against $t\, dt$ yields
\begin{align}
	\int_0^{1} |u(t)|^2t\, dt 
	&\le
	\frac{1}{\nu}\biggl(\int_0^{1} t\, dt\biggr) \biggl( \int_0^1 \Bigl( |u'(r)|^2+\frac{\nu^2}{r^2}|u(r)|^2\Bigr)r\, dr\biggr)\\
	&=
	\frac{1}{2\nu} \int_0^{1} \Bigl( |u'(r)|^2+\frac{\nu^2}{r^2}|u(r)|^2\Bigr)r\, dr,
\end{align}
which implies that
\begin{equation}
	\frac{\int_0^1 \bigl(|u'|^2+\nu^2r^{-2}|u|^2\bigr)r\, dr}{\int_0^1 |u|^2r\, dr} \ge 2\nu.
\end{equation}
Taking the infimum over all functions $u\in W^{1,2}([0, 1], rdr)$ such that $u(0)=0$, in particular this includes $J_\nu$, we see that
\begin{equation}
	(j_\nu')^2\ge 2 \nu,
\end{equation}
which completes the proof of the lower bound. 
To obtain the upper bound, simply take $u(r)=r^\nu$ in the variational quotient above.
\end{proof}

In the case of $R$-extended anyons our bounds result in studying the behavior 
of solutions to a Bessel-type eigenvalue equation of order $\nu$ with Neumann 
boundary conditions on the interval $(\gamma, 1)$, for some $0<\gamma<1$. 
Thus it is of interest for us to understand the behavior of the lowest 
eigenvalue of such an equation in both parameters $\gamma$ and $\nu$.

\begin{proposition}\label{prop:BesselEigenvalueBounds}
	Given $\nu > 0$ and $0 < \gamma < 1$, let 
	$g(\nu,\gamma) := \sqrt{\lambda}$, where $\lambda$
	denotes the first positive solution to the eigenvalue equation
	\begin{equation}\label{eq:BesselEquationInterval}
			-u''(r)-\frac{u'(r)}{r}+\Bigl(\frac{\nu^2}{r^2}-\lambda\Bigr)u(r)=0,
	\end{equation}
	with the Neumann boundary conditions $u'(\gamma)=u'(1)=0$. 
	Then, for fixed $\gamma$, $g(\nu,\gamma)$ is a monotonically increasing
	function in $\nu$.
	Also, for fixed $\nu$, $g(\nu,\gamma)$ is a monotonically decreasing 
	function of $\gamma$, and satisfies
	\begin{equation}
		\nu < g(\nu,\gamma) < \min\{j_\nu', \nu/\gamma\}.
	\end{equation}
	Moreover, we have that\/ $\lim_{\gamma\to 0} g(\nu,\gamma) = j_\nu'$ 
	and\/ $\lim_{\gamma\to 1} g(\nu,\gamma) = \nu$.
\end{proposition}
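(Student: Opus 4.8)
\emph{Proof proposal.} The plan is to run everything through the Rayleigh--Ritz characterisation
\[
	g(\nu,\gamma)^2 \;=\; \min_{0\neq u\in W^{1,2}((\gamma,1),\,rdr)} R_\gamma[u],
	\qquad
	R_\gamma[u] := \frac{\int_\gamma^1\bigl(|u'|^2+\nu^2r^{-2}|u|^2\bigr)r\,dr}{\int_\gamma^1|u|^2\,r\,dr},
\]
the minimum being attained by an eigenfunction $u_{\nu,\gamma}$ which can be chosen positive, since for $\gamma>0$ this is a regular Sturm--Liouville problem with separated (Neumann) boundary conditions; all eigenvalues exceed $\nu^2$, so the ``first positive solution'' is indeed the smallest one. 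Several claims are then immediate. For fixed $\gamma$ the numerator of $R_\gamma$ is strictly increasing in $\nu$ while the denominator is independent of $\nu$, so testing with $u_{\nu_2,\gamma}$ gives $g(\nu_1,\gamma)^2\le R_\gamma[u_{\nu_2,\gamma}]<g(\nu_2,\gamma)^2$ for $\nu_1<\nu_2$ (strict because $u_{\nu_2,\gamma}\not\equiv0$). The lower bound $g(\nu,\gamma)>\nu$ drops out because $r^{-2}>1$ on $(\gamma,1)$, whence $R_\gamma[u]\ge\nu^2\int_\gamma^1 r^{-2}|u|^2 r\,dr\big/\int_\gamma^1|u|^2 r\,dr>\nu^2$. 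The two upper bounds I would get from explicit competitors. Testing with $u\equiv1$ gives $g(\nu,\gamma)^2\le 2\nu^2\ln(1/\gamma)/(1-\gamma^2)$; the elementary inequality $\ln x>1-1/x$ with $x=\gamma^2\in(0,1)$ turns this into $g(\nu,\gamma)^2<\nu^2/\gamma^2$, and since moreover $2\nu^2\ln(1/\gamma)/(1-\gamma^2)\to\nu^2$ as $\gamma\to1$, combining with $g(\nu,\gamma)>\nu$ yields $\lim_{\gamma\to1}g(\nu,\gamma)=\nu$. Testing instead with $u(r)=J_\nu(j_\nu'r)$ and integrating $\int_\gamma^1|u'|^2 r\,dr$ by parts using Bessel's equation, the only surviving boundary term is $-\gamma\,u(\gamma)u'(\gamma)=-\gamma\,J_\nu(j_\nu'\gamma)\,j_\nu'J_\nu'(j_\nu'\gamma)$, strictly negative because $J_\nu$ and $J_\nu'$ are positive on $(0,j_\nu')$; hence the numerator equals $(j_\nu')^2\int_\gamma^1|u|^2 r\,dr$ minus a positive term, so $g(\nu,\gamma)<j_\nu'$.

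For monotonicity in $\gamma$ the plan is a restriction argument fed by the bound $g(\nu,\gamma')<\nu/\gamma'$ just proved. Fix $\gamma<\gamma'<1$ and an arbitrary competitor $u$ on $(\gamma,1)$; write $N,D$ for the numerator and denominator of $R_\gamma[u]$ and $N_-,D_-$ for the corresponding integrals over $(\gamma,\gamma')$. Dropping $|u'|^2\ge0$ and using $r^{-2}\ge\gamma'^{-2}$ on $(\gamma,\gamma')$ gives $N_-/D_-\ge\nu^2/\gamma'^2$. If $R_\gamma[u]=N/D\le\nu^2/\gamma'^2$, then $N_-/D_-\ge N/D$, so by the mediant inequality $R_{\gamma'}[u|_{(\gamma',1)}]=(N-N_-)/(D-D_-)\le N/D$, hence $g(\nu,\gamma')^2\le R_\gamma[u]$; if instead $R_\gamma[u]>\nu^2/\gamma'^2>g(\nu,\gamma')^2$, the inequality $g(\nu,\gamma')^2\le R_\gamma[u]$ is trivial. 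Taking the infimum over $u$ gives $g(\nu,\gamma')^2\le g(\nu,\gamma)^2$, and running the same dichotomy on $u=u_{\nu,\gamma}$ upgrades this to strict monotonicity.

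It remains to prove $\lim_{\gamma\to0}g(\nu,\gamma)=j_\nu'$, and since $g(\nu,\gamma)<j_\nu'$ always, the point is the matching lower bound. I would combine two estimates for the normalised minimiser $u_{\nu,\gamma}$ (so $\int_\gamma^1|u_{\nu,\gamma}|^2 r\,dr=1$ and $\int_\gamma^1(|u_{\nu,\gamma}'|^2+\nu^2 r^{-2}|u_{\nu,\gamma}|^2)r\,dr=\lambda_\gamma:=g(\nu,\gamma)^2$). First, extending $u_{\nu,\gamma}$ to $(0,1)$ by the Euler solution $u_{\nu,\gamma}(\gamma)(r/\gamma)^\nu$ on $(0,\gamma)$ produces an admissible function for the variational problem on $(0,1)$ (whose infimum equals $(j_\nu')^2$, as recorded in the proof of Proposition~\ref{prop:BesselPrimeZeroBounds}); a direct computation gives that its numerator is $\lambda_\gamma+\nu\,u_{\nu,\gamma}(\gamma)^2$ and its denominator is $1+u_{\nu,\gamma}(\gamma)^2\gamma^2/(2\nu+2)\ge1$, whence $\lambda_\gamma+\nu\,u_{\nu,\gamma}(\gamma)^2\ge(j_\nu')^2$. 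Second, the Euler--Lagrange equation $(ru_{\nu,\gamma}')'=(\nu^2/r-\lambda_\gamma r)u_{\nu,\gamma}$ has a positive right-hand side on $(\gamma,\nu/\sqrt{\lambda_\gamma})$ (here the \emph{a priori} bound $\lambda_\gamma<\nu^2/\gamma^2$ is what makes this interval non-empty), and since $ru_{\nu,\gamma}'$ vanishes at $\gamma$ by the Neumann condition, $u_{\nu,\gamma}$ is increasing on $(\gamma,\nu/\sqrt{\lambda_\gamma})\supseteq(\gamma,r_0)$ with $r_0:=\nu/j_\nu'$; consequently $u_{\nu,\gamma}(\gamma)^2\ln(r_0/\gamma)\le\int_\gamma^{r_0}|u_{\nu,\gamma}|^2 r^{-1}\,dr\le\int_\gamma^1|u_{\nu,\gamma}|^2 r^{-1}\,dr\le\lambda_\gamma/\nu^2\le(j_\nu')^2/\nu^2$, so $u_{\nu,\gamma}(\gamma)\to0$ as $\gamma\to0$. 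Feeding this into the first estimate gives $\liminf_{\gamma\to0}\lambda_\gamma\ge(j_\nu')^2$, closing the argument. The step I expect to require the most care is precisely this last one: one must control the boundary value $u_{\nu,\gamma}(\gamma)$ of the minimiser, which forces one to exploit both the sign structure of the Euler--Lagrange equation near the inner endpoint and the previously established bound $g(\nu,\gamma)<\nu/\gamma$.
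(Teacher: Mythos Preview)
Your argument is correct and takes a genuinely different route from the paper's. Both proofs start from the same Rayleigh quotient for monotonicity in $\nu$, but after that they diverge. The paper writes the general solution as a combination of $J_\nu$ and $Y_\nu$, reduces the eigenvalue condition to the determinant equation $G_\nu(\sqrt\lambda)=G_\nu(\gamma\sqrt\lambda)$ with $G_\nu:=J_\nu'/Y_\nu'$, computes $G_\nu'(x)=2(\nu^2-x^2)/(\pi x^3 Y_\nu'(x)^2)$ via the Bessel Wronskian, and reads off all the bounds, the monotonicity in $\gamma$, and both limits from the resulting unimodal shape of $G_\nu$. You instead stay entirely on the variational side: the upper bounds come from the trial functions $u\equiv1$ and $u=J_\nu(j_\nu'r)$ (the latter via integration by parts and the sign of $J_\nu,J_\nu'$ on $(0,j_\nu')$), the monotonicity in $\gamma$ from a restriction--mediant argument powered by the already-established bound $g(\nu,\gamma')<\nu/\gamma'$, and the limit $\gamma\to0$ from extending the minimiser by the Euler solution $r^\nu$ and controlling the boundary value through the sign of $(ru')'$ near the inner endpoint.

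What each buys: the paper's argument is shorter and more explicit once the Wronskian identity is in hand, and the picture of the two dilated copies of $G_\nu$ intersecting makes the whole statement transparent at a glance. Your approach avoids $Y_\nu$ and the Wronskian altogether, uses only soft Sturm--Liouville facts (positivity of the ground state, sign analysis of $(ru')'$), and would transfer essentially verbatim to more general potentials $V(r)\ge0$ in place of $\nu^2/r^2$ for which no closed-form fundamental system is available. One minor caveat worth recording: in the step where you deduce that $u_{\nu,\gamma}$ is increasing on $(\gamma,r_0)$ with $r_0=\nu/j_\nu'$, you implicitly use $r_0<1$, i.e.\ $j_\nu'>\nu$; this follows from your own lower bound $g(\nu,\gamma)>\nu$ applied at small $\gamma$ (or directly from the Rayleigh quotient on $(0,1)$), but it is not a consequence of Proposition~\ref{prop:BesselPrimeZeroBounds} alone for $\nu\ge2$.
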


\begin{proof}
	That $g(\nu, \gamma)$ is monotonically increasing in $\nu$ is clear from the variational characterization of $\lambda$,
	\begin{equation}
		\lambda = \inf_u \frac{\int_{\gamma}^1 \bigl(|u'|^2+\nu^2r^{-2}|u|^2\bigr)r\,dr}{\int_\gamma^1 |u|^2r\,dr}. 	
	\end{equation}

	It is well known that the solution of the above differential equation is 
	given by a linear combination of the Bessel functions 
	$J_\nu(\sqrt{\lambda}r)$ and $Y_\nu(\sqrt{\lambda}r)$. 
	Only if $\gamma$ were zero could we
	exclude the Bessel function of the second kind since it fails to be in $W^{1,2}([0,1],r dr)$
	and thus cannot be a solution. Thus the problem reduces to finding the 
	smallest $\lambda>0$ such that the system
	\begin{align}
		\alpha J_\nu'(\sqrt{\lambda}\gamma)+\beta Y_\nu'(\sqrt{\lambda}\gamma)&=0\\
		\alpha J_\nu'(\sqrt{\lambda})+\beta Y_\nu'(\sqrt{\lambda})&=0
	\end{align}
	admits a non-trivial solution, which is equivalent to the determinant equation
	\begin{equation}
		J_\nu'(\sqrt{\lambda}\gamma)Y_\nu'(\sqrt{\lambda})-Y_\nu'(\sqrt{\lambda}\gamma)J_\nu'(\sqrt{\lambda})=0.
	\end{equation}
	Assuming that $\sqrt{\lambda}$ is smaller than the first zero of $Y_\nu'$ (this will be seen to be true once we find our solution) we can equivalently solve the equation
	\begin{equation}
		\frac{J_\nu'(\sqrt{\lambda})}{Y_\nu'(\sqrt{\lambda})} = \frac{J_\nu'(\sqrt{\lambda}\gamma)}{Y_\nu'(\sqrt{\lambda}\gamma)}.
	\end{equation}
	Letting $G_\nu(x) := J'_\nu(x)/Y_\nu'(x)$ we find that
	\begin{align}
		G'_\nu(x) = \frac{2(\nu^2-x^2)}{\pi x^3 Y_\nu'(x)^2},
	\end{align}
	where we used that $J_\nu$ and $Y_\nu$ satisfy the Bessel equation~\eqref{eq:BesselEquationInterval}
	and the well-known identity 
	$ J_\nu(x) Y'_\nu(x)-J'_\nu(x) Y_\nu(x) = 2/(\pi x)$; see e.g.~\cite[Eqn.~10.5.2]{OlvMax-10}.

	Thus $G_\nu(x)$ is strictly increasing on $(0, \nu)$ and decreasing after 
	that. We also know that $G_\nu(0)=G_\nu(j_\nu')=0$. But then it is 
	clear that the graph of $G_\nu(x)$ and that of its dilation 
	$G_\nu(\gamma x)$ must intersect between $x=\nu$ and the minimum of
	$x=\nu/\gamma$ and $x=j_\nu'$ 
	(compare Figure~\ref{fig:BesselDilationEquationPlot}), and as this solution is 
	less than the first zero of $Y_\nu'$ the assumption above is seen to be true. 
	Moreover, as $\gamma \to 0$ we see that the solution $x=\sqrt{\lambda}$ 
	tends to the zero $j_\nu'$ and if instead $\gamma \to 1$ it tends to 
	the maximum point $\nu$.

\begin{figure}[ht]
	\centering
	\begin{tikzpicture}
		\node [above right] at (0,0) {\includegraphics[scale=0.86,clip,trim=-0ptcm 10pt -0pt 0pt]{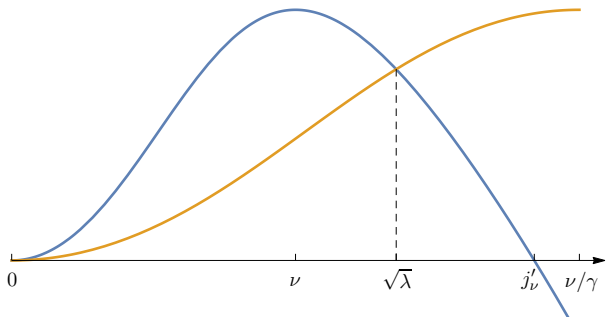}};
		\node [above right] at (-0.035,0.4) {\scalebox{0.7}{$0$}};
		\node [above right] at (3.68,0.4) {\scalebox{0.7}{$\nu$}};
		\node [above right] at (4.9,0.35) {\scalebox{0.7}{$\sqrt{\lambda}$}};
		\node [above right] at (6.73,0.35) {\scalebox{0.7}{$j_\nu'$}};
		\node [above right] at (7.3,0.32) {\scalebox{0.7}{$\nu/\gamma$}};
	\end{tikzpicture}
	\caption{The function $G_\nu(x)$ (blue) and its dilation $G_\nu(\gamma x)$ (yellow) plotted for $\nu=1$ and $\gamma=1/2$.}
	\label{fig:BesselDilationEquationPlot}
\end{figure}

	By the above geometric considerations we can conclude that for $0<\gamma<1$ and $\nu>0$ we have that $\lambda$, the smallest positive eigenvalue of~\eqref{eq:BesselEquationInterval}, satisfies
	\begin{equation}
		\lambda \in [\nu^2, \min\{j_\nu', \nu/\gamma\}^2],
	\end{equation}
	and is monotonically decreasing in $\gamma$.
\end{proof}

\FloatBarrier
% ------------------  Concavity  --------------------

\section{Concavity of the one-particle profile} \label{app:concavity}

	We have several times used concavity properties of the one-particle profile $f(d, \cdot\,)$, which however may fail if $d$ is small. More precisely we have the following: 
	\begin{proposition}\label{prop:Concavity}
	For any $d\ge 0$ the function $f(d, \cdot\,)$ given by~\eqref{eq:oneParticleProfile} is concave on its support intersected with $[R, \infty)$. If in addition $d\ge R$ the function is concave on its full support $[d-R, d+R]$. 
	\end{proposition}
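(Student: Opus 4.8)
The plan is to rescale so that $R=1$ (the map $(d,r)\mapsto(d/R,r/R)$ multiplies $f(d,\cdot)$ by $1/R$ and preserves concavity), and to observe that on the interval in question only the "arccos branch"
\[
  g(r):=\frac{2r}{\pi}\arccos\!\left(\frac{d^{2}+r^{2}-1}{2dr}\right)
\]
of \eqref{eq:oneParticleProfile} is relevant. Indeed, for $d\ge 1$ the full support equals $[d-1,d+1]$, on all of which $f=g$ (with $g$ vanishing at both endpoints); for $0<d<1$ one has $\supp f\cap[1,\infty)=[1,1+d]$, and this lies inside the closure of the arccos regime $\{\,|d-1|<r<d+1\,\}$ (the linear branch $f=2r$ occupies $[0,1-d]$, disjoint from $[1,\infty)$), while $d=0$ is immediate. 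Since $g$ is continuous on the closed interval and smooth on its interior, it suffices to prove $g''\le 0$ on $(d-1,d+1)$ when $d\ge 1$, and on $(1,1+d)$ when $0<d<1$; concavity of the restriction then follows because a $C^{2}$ function with nonpositive second derivative on an open interval, continuous on the closure, is concave there.

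The core is an explicit second derivative of $g$. Set $u(r):=\frac{d^{2}+r^{2}-1}{2dr}$, $q(r):=r^{2}-d^{2}+1$ and $\Delta(r):=4r^{2}-q(r)^{2}=\big(1-(d-r)^{2}\big)\big((d+r)^{2}-1\big)$, the last identity being an elementary factorisation; $\Delta>0$ precisely on the arccos regime. Then $u'=\frac{q}{2dr^{2}}$ and $\sqrt{1-u^{2}}=\frac{\sqrt{\Delta}}{2dr}$, so $\frac{d}{dr}\arccos u=-\frac{q}{r\sqrt{\Delta}}$ and $g'=\frac{2}{\pi}\big(\arccos u-\frac{q}{\sqrt{\Delta}}\big)$. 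Differentiating once more, using $q'=2r$ and $\Delta'=4r(2-q)$, everything collapses to the clean identity
\[
  g''(r)=\frac{2}{\pi}\,\frac{q(r)^{3}-8r^{4}}{r\,\Delta(r)^{3/2}},
\]
so that on the arccos regime $\operatorname{sign}g''(r)=\operatorname{sign}\!\big(q(r)^{3}-8r^{4}\big)$. I expect the only genuine work to be the bookkeeping in this cancellation, which should be double-checked (for instance against a couple of numerical points).

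It then remains to show $q(r)^{3}\le 8r^{4}$ on the relevant interval. Where $q(r)=r^{2}-d^{2}+1\le 0$ this is automatic; where $q(r)>0$ it is equivalent, after taking cube roots, to $h(r)\ge 0$ with $h(r):=2r^{4/3}-r^{2}+d^{2}-1$. The function $h$ is unimodal: $h'(r)=2r\big(\tfrac{4}{3}r^{-2/3}-1\big)$ is positive then negative, changing sign at $r_{0}=(4/3)^{3/2}$, so $h$ attains its minimum over any compact interval at an endpoint. Finally I would check these endpoints on the sub-interval where $q>0$: for $d\ge 1$ that sub-interval is $[\sqrt{d^{2}-1},\,d+1]$ (note $\sqrt{d^{2}-1}\ge d-1$), where $h(\sqrt{d^{2}-1})=2(d^{2}-1)^{2/3}\ge 0$ and $h(d+1)=2(d+1)\big((d+1)^{1/3}-1\big)\ge 0$; for $0<d<1$ one has $q>0$ throughout $(1,1+d)$, where $h(1)=d^{2}\ge 0$ and $h(1+d)=2(1+d)\big((1+d)^{1/3}-1\big)\ge 0$. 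Unimodality then gives $h\ge 0$ on the whole interval, hence $g''\le 0$, and the concavity of $f(d,\cdot)$ on the stated set follows.
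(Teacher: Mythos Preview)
Your argument is correct and in fact cleaner than the paper's. Both proofs start identically: rescale to $R=1$, reduce to the arccos branch $g$, and compute $g''$ explicitly. The difference is in how the sign of the numerator is handled.

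The paper writes the numerator of $g''$ as a degree-six polynomial in $r$ with coefficients depending on $d$ (your $q^3-8r^4$ expanded out, up to sign), and then verifies its sign by a two-dimensional analysis: checking the boundary of the relevant $(d,r)$-region and ruling out interior critical points by solving polynomial systems. This is done twice, once for $d<1$ and once (after a change of variables $\eta=r-d$) for $d\ge 1$.

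You instead recognize the numerator in the compact form $q(r)^3-8r^4$, which after a cube root turns the sign question into nonnegativity of the single-variable function $h(r)=2r^{4/3}-r^2+d^2-1$; unimodality of $h$ reduces this to two easy endpoint evaluations. This avoids the case split in parametrization, the boundary-plus-critical-point argument, and all polynomial system solving. The trade-off is that your route leans on the one nontrivial algebraic simplification $-q\Delta-2r^2\Delta+2r^2q(2-q)=q^3-8r^4$, which is worth displaying explicitly rather than leaving as ``everything collapses''; once that identity is checked the rest is transparent.
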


	\begin{proof}
	Without loss of generality we may, and do, assume that $R=1$. 
	The proof is then a straightforward computation. We begin with assuming 
	that $d<R=1$. For such $d$ the function $f(d, \cdot\,)$ is $C^2$ on $[1, d+1]$ 
	(and zero on $(d+1,\infty)$)
	which reduces the statement to proving that $\partial_{r}^2f(d, r)\le 0$ 
	in this region. Calculating this derivative one finds
	\begin{align*}
	 	\partial_{r}^2f(d, r)= -\frac{2((d^2-1)^3-3(d^2-1)^2r^2+(5+3d^2)r^4-r^6)}{\pi r ((r+1-d)(1+d-r)(d+r-1)(1+d+r))^{3/2}},
	\end{align*} 
	and clearly the overall sign is determined by that of the polynomial in the denominator
	\begin{equation}
	 	p(d, r):= (d^2-1)^3-3(d^2-1)^2r^2+(5+3d^2)r^4-r^6.
	 \end{equation}
	We need to prove that $p\ge 0$ for $(r, d)$ in the triangular region given by $1\le r\le d+1$ where $0\le d\le 1$.

	We first check the statement on the boundary of the region:
	\begin{align*}
	 	p(1, r)&=r^4(8-r^2)>0\\
	 	p(d, 1)&=d^2(12-6d^2+d^4)>0\\
	 	p(d, d+1)&=(1+r)(r-1)(4r^2+1-r^4)>0.
	\end{align*}
	Thus all that remains is to check that we have no stationary points for $p$ in the interior of the region. Calculating the derivative in $r$ one finds that
	\begin{equation}
		\partial_r p(d, r)=6d(d^2-1)^2-12d(d^2-1)r^2+6dr^4.
	\end{equation}
	As this is a quadratic polynomial in $r^2$ we can solve the equation $p_r(d, r)=0$ and find that there are no solutions in our region. 
	This completes the proof of the claim in the case $d<R=1$.

	In the case $d\ge R=1$ we wish to prove that $f(d, \cdot\,)$ is concave on 
	$[d-1, d+1]$.
	It is here convenient to study the problem in the variables $d$ and $\eta=r-d$, and letting
	\begin{equation}
	 	g(d, \eta) := f(d, d+\eta) 
	 	= \frac{2(d+\eta)}{\pi}\arccos\biggl(\frac{d^2+(d+\eta)^2-1}{2d(d+\eta)}\biggr), 
	 	\quad d\ge 1, \ \eta \in [-1, 1].  
	\end{equation}
	Differentiating twice in $\eta$ we find that
	\begin{equation}\label{eq:second_derivative_relative_profile}
		\partial_\eta^2g(d,\eta)=\frac{2P(d, \eta)}{\pi  (d+\eta) ((1-\eta^2) (2 d+\eta-1) (2 d+\eta+1))^{3/2}},
	\end{equation}
	where
	\begin{equation}
		P(d, \eta):= -8 d^4+8 d^3(\eta^3-4\eta)+12 d^2(\eta^4-3\eta)+d (6 \eta^5-20 \eta^3+6\eta)+\eta^6-5 \eta^4+3\eta^2+1.
	\end{equation}
	As before the sign of~\eqref{eq:second_derivative_relative_profile} is determined by that of the polynomial $P$. If we can prove that $P(d, \eta)\le 0$ for all $d\ge 1$ and $-1\le \eta \le 1$ the claim follows. To this end we proceed as above. The values of $P$ on the boundaries of this region are (in the same manner as before) readily checked to be negative: 
	\begin{align}
		P(d, 1)&= -8 d (d+1)^3<0,\\
		P(d, -1)&= -8d(d-1)^3\le 0,\\
		P(1, \eta) &= (1+\eta)^4(\eta^2+2\eta-7)\le 0.
	\end{align}

	What remains is to check stationary points in the interior. 
	For this polynomial solving either of the equations $\partial_\eta P(d, \eta)=0$ or $\partial_d P(d, \eta)=0$ is slightly harder. 
	However, since certain terms cancel one can instead solve the equation
	\begin{equation}
		\partial_\eta P(d, \eta)=\partial_d P(d, \eta),
	\end{equation}
	and the solutions are $d=0, \eta=-d-\sqrt{d^2-1}$ and $\eta=-d+\sqrt{d^2-1}$. The third solution is the only one contained within our region. Evaluating the derivative at this solution we obtain
	\begin{equation}
		\partial_\eta P(d, -d+\sqrt{d^2-1})= -32(d^2-1)^{3/2},
	\end{equation}
	and since this is non-zero in the interior of our domain the proof is complete.
\end{proof}

\medskip

% ------------------  Bibilography  --------------------

\bibliographystyle{siam}

\end{document}